\documentclass[aps,pra,superscriptaddress,notitlepage, amsfonts,longbibliography, twocolumn]{revtex4-2}

\usepackage{qcircuit}
\usepackage[dvipdfmx]{graphicx}
\usepackage{amsmath,amssymb,amsthm,mathrsfs,amsfonts,dsfont}
\usepackage{subfigure, epsfig}
\usepackage{braket}
\usepackage{bm}
\usepackage{enumerate}
\usepackage{physics}
\usepackage{comment}
\newtheorem{thm}{Theorem}
\newtheorem{lem}{Lemma}
\newtheorem{prop}{Proposition}
\newtheorem{cor}{Corollary}
\usepackage[colorlinks,linkcolor=blue,citecolor=blue]{hyperref}
\usepackage{stmaryrd}
\usepackage{booktabs}
\usepackage[dvipsnames]{xcolor}

\makeatletter
\renewcommand*{\numberline}[1]{\hb@xt@1em{#1\hfil}} 
\makeatother

\graphicspath{{./fig/}}

\begin{document}

\title{Quantum advantages for syndrome-aware noisy logical observable estimation}

\author{Kento Tsubouchi}
\email{tsubouchi@noneq.t.u-tokyo.ac.jp}
\affiliation{\mbox{Department of Applied Physics, University of Tokyo, 7-3-1 Hongo, Bunkyo-ku, Tokyo 113-8656, Japan}}

\author{Hyukgun Kwon}
\affiliation{\mbox{Department of Physics and Astronomy, Sejong University, 209 Neungdong-ro Gwangjin-gu, Seoul 05006, Republic of Korea}}

\author{Liang Jiang}
\affiliation{Pritzker School of Molecular Engineering, University of Chicago, Chicago, Illinois 60637, USA}

\author{Nobuyuki Yoshioka}
\email{ny.nobuyoshioka@gmail.com}
\affiliation{\mbox{International Center for Elementary Particle Physics, University of Tokyo, 7-3-1 Hongo, Bunkyo-ku, Tokyo 113-0033, Japan}}

\begin{abstract}
Recent progress in fault-tolerant quantum computing suggests that leveraging error-syndrome information at the logical layer can substantially improve performance, including the estimation of logical observables from noisy states.
In this work, based on quantum estimation theory, we develop an information-theoretic framework to quantify the utility of error syndromes for noisy logical observable estimation.
We distinguish two operational regimes of such syndrome-aware protocols: \emph{classical} protocols, in which the logical measurement basis is fixed and syndrome information is used only in classical post-processing, and \emph{quantum} protocols, in which the logical quantum control can be tailored to depend on the observed error syndrome.
For classical syndrome-aware protocols, we prove a universal limitation: on average, syndrome information can improve the effective logical error rate by at most a factor of two, implying at most a quadratic reduction in sampling overhead.
In contrast, once syndrome-conditioned quantum control is permitted, we demonstrate that the effective logical error rate decays exponentially with the number of code blocks.
These findings provide fundamental guidance for designing future fault-tolerant architectures that actively exploit syndrome records rather than discarding them after decoding.
\end{abstract}

\maketitle

\section{Introduction}
\label{sec_introduction}
Fault-tolerant quantum computing based on quantum error correction provides a powerful framework to suppress errors in quantum processors~\cite{shor1995scheme, knill1996threshold, aharonov1997fault, lidar2013quantum}.
In quantum error correction, logical quantum information is encoded into a larger Hilbert space of physical qubits by introducing redundancy.
Although physical qubits are subject to noise, one repeatedly measures redundant degrees of freedom to extract error syndromes, and then infers and corrects the underlying errors via decoding.
With continuous experimental progress demonstrating quantum error correction~\cite{ofek2016extending, krinner2022realizing, google2023suppressing, sivak2023real, bluvstein2024logical, ai2024quantum}, it is expected that sufficiently large-scale fault-tolerant devices will eventually enable reliable quantum computation with negligible logical errors.

However, especially in the early stages of fault-tolerant hardware development, limited resources make the implementation of full quantum error correction challenging.
As a result, logical errors may remain in the logical state obtained after decoding, and it becomes necessary to design computational protocols that explicitly account for residual logical noise.
In other words, one must estimate noiseless logical properties of interest from noisy logical states produced after decoding.
For example, one can apply quantum error mitigation protocols~\cite{temme2017error, endo2021hybrid, cai2023quantum, kim2023evidence} at the logical layer to further suppress logical errors~\cite{piveteau2021error, lostaglio2021error, suzuki2022quantum, tsubouchi2024symmetric}.

\begin{figure}[t]
    \begin{center}
        \includegraphics[width=0.99\linewidth]{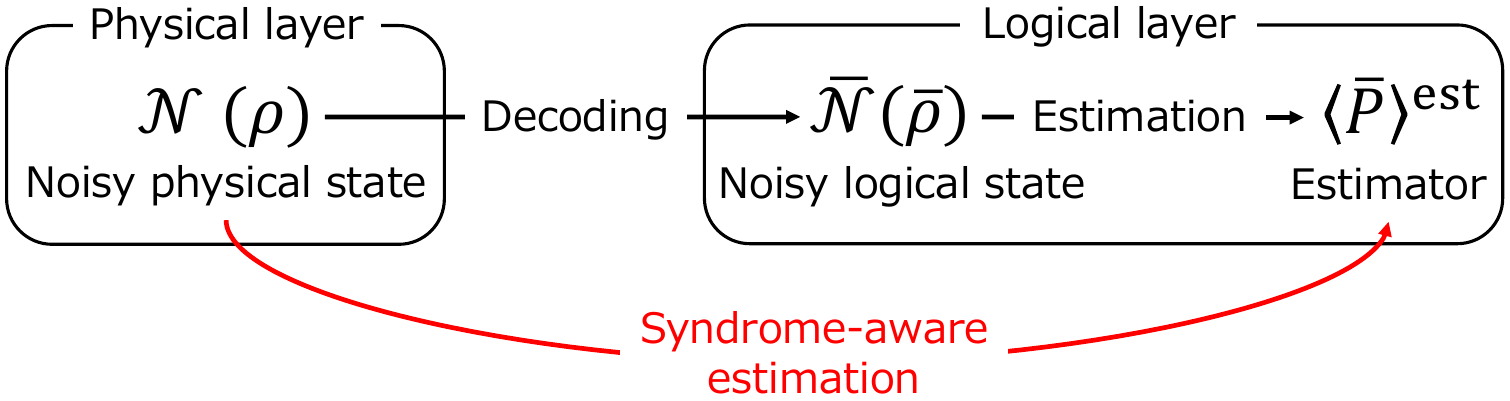}
        \caption{Schematic illustration of syndrome-aware estimation. In conventional estimation protocols performed at the logical layer, one first obtains a noisy logical state by decoding a noisy physical state and then performs estimation using only the decoded logical state, while the error syndromes are not explicitly used at the estimation stage. In contrast, syndrome-aware estimation protocols explicitly incorporate the error syndromes into the estimation procedure, and can be viewed as a joint implementation of decoding and estimation.}
        \label{fig_decoding_estimation}
    \end{center}
\end{figure}

\begin{figure*}[t]
    \begin{center}
        \includegraphics[width=0.99\linewidth]{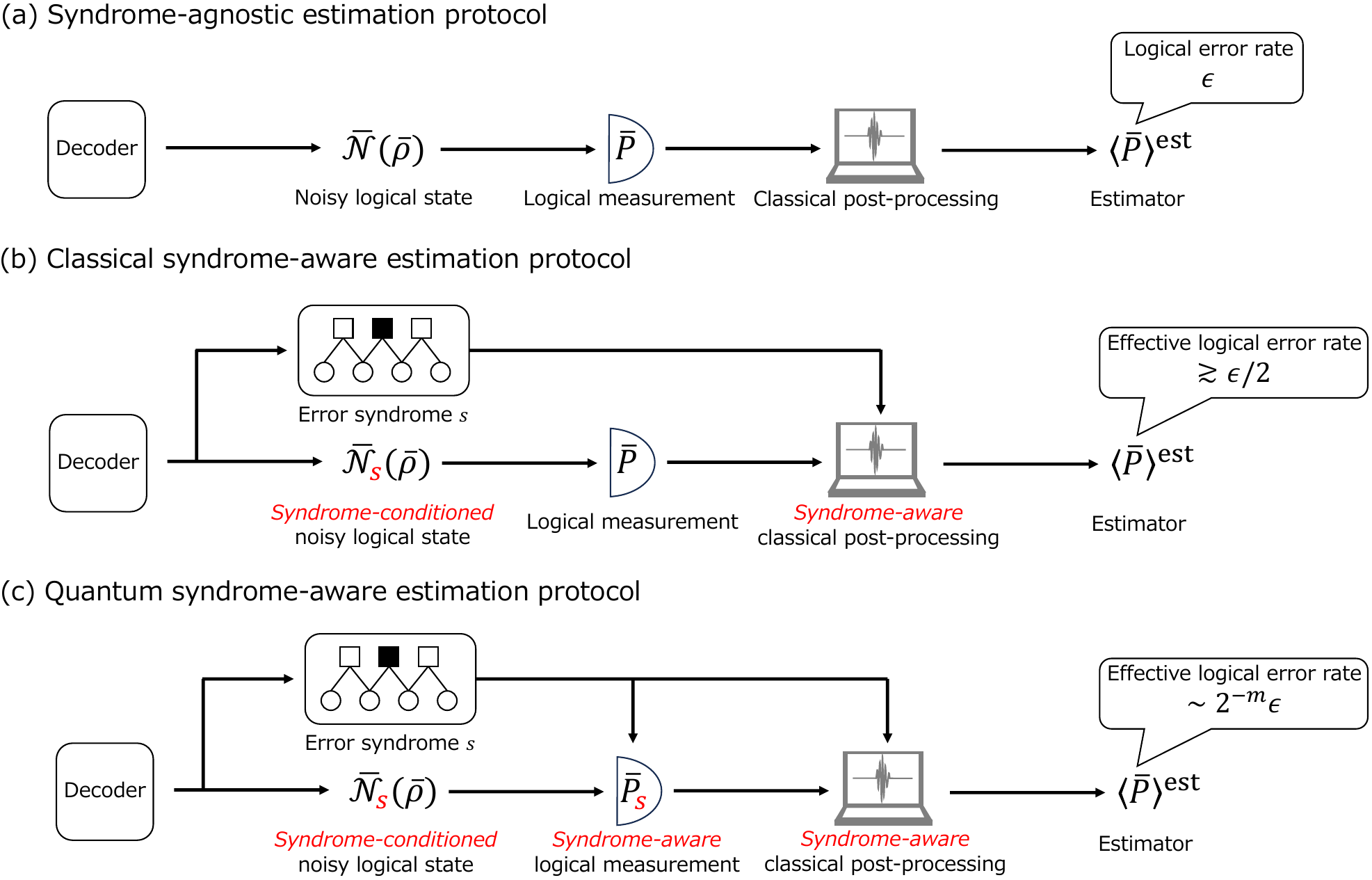}
        \caption{Schematic illustration of syndrome-agnostic and syndrome-aware estimation protocols. (a) In the syndrome-agnostic estimation protocol, the decoder outputs an averaged noisy logical state, and we perform a fixed logical measurement on this state followed by classical post-processing to estimate a noiseless observable. The error-syndrome information is not used in either the measurement or the classical post-processing. The effect of the error on the resulting estimator is characterized by a logical error rate $\epsilon$. (b) In the classical syndrome-aware estimation protocol, we assume that the decoder outputs the observed error syndrome $s$ together with a syndrome-conditioned noisy logical state. The error-syndrome information is used only at the classical post-processing stage, while the logical measurement basis does not depend on the observed syndrome $s$. We characterize the effect of the error in this scenario by the \textit{effective logical error rate} $\epsilon^{\mathrm{cSynd}}$, which is lower-bounded by one half of the original logical error rate $\epsilon$ on average. (c) In the quantum syndrome-aware estimation protocol, we again assume that the decoder outputs the observed error syndrome $s$ and a syndrome-conditioned noisy logical state. Unlike the classical protocol, we additionally allow the logical measurement basis to depend on $s$, followed by syndrome-aware classical post-processing. This allows the effective logical error rate $\epsilon^{\mathrm{qSynd}}$ to be exponentially smaller than the original logical error rate $\epsilon$ as a function of the number of code blocks $m$.}
        \label{fig_syndrome_aware_protocols}
    \end{center}
\end{figure*}

Recently, there has been growing interest in protocols that go beyond using only the decoded logical state.
By exploiting redundant information available at the physical layer---namely, the observed error syndromes---and constructing protocols conditioned on the syndrome record, one can further enhance the performance of the estimation (Fig.~\ref{fig_decoding_estimation}).
For example, by estimating the logical error probability conditioned on the observed syndromes and post-selecting runs with low estimated error probability, one can further suppress the logical error rate and improve the estimation performance~\cite{bombin2024fault, meister2024efficient, smith2024mitigating, gidney2025yoked, lee2025efficient, xie2026simple, kishi2026even}.
Moreover, syndrome-aware quantum error mitigation~\cite{bluvstein2024logical, zhou2025error, dincua2025error, aharonov2025syndrome}, which performs post-processing conditioned on the observed syndromes, can reduce sampling overhead and improve estimation precision compared to error-mitigation protocols that do not use syndrome information.

In such a setting, a natural question is how far syndrome information can improve logical-layer estimation.
While the syndrome record provides additional information about residual logical noise, it is not obvious what ultimate performance is achievable once we allow the estimation procedure to be conditioned on the observed syndrome.
Existing information-theoretic limitations for estimation from noisy quantum states~\cite{takagi2022fundamental, takagi2023universal, tsubouchi2023universal, quek2024exponentially} do not directly answer this question, because they treat the estimator as acting only on a single noisy quantum state and do not incorporate the extra information carried by the syndrome record.
Consequently, it remains unclear how to fully exploit syndrome information and what the fundamental limits are in this setting.
Establishing fundamental limits in this syndrome-aware setting is therefore essential for assessing the true value of syndrome data and for designing fault-tolerant architectures and mitigation protocols that achieve reliable estimation with substantially reduced resources.

In this work, based on quantum estimation theory~\cite{helstrom1969quantum, holevo2011probabilistic, hayashi2006quantum}, we develop a theoretical framework for quantifying how useful error-syndrome information is for estimating logical observables from noisy states.
We first define \emph{syndrome-agnostic} estimation protocols as those that perform estimation directly on the decoded noisy logical state without using the syndrome record (Fig.~\ref{fig_syndrome_aware_protocols}(a)).
We then consider two operational regimes for \emph{syndrome-aware} protocols: (i) \emph{classical} syndrome-aware protocols, where the logical measurement basis is fixed for all syndromes and syndrome information is used only in classical post-processing (Fig.~\ref{fig_syndrome_aware_protocols}(b)); and (ii) \emph{quantum} syndrome-aware protocols, where the logical measurement (and more generally, logical quantum control) can be tailored to depend on the observed syndrome (Fig.~\ref{fig_syndrome_aware_protocols}(c)).
We analyze and compare the Fisher information across these three settings.
To quantify the gain from syndrome information, we introduce an \emph{effective logical error rate}, defined as the logical error rate of a syndrome-agnostic protocol that would yield the same Fisher information as a given syndrome-aware protocol.

Our main message is that the impact of syndrome information depends qualitatively on whether syndrome-conditioned quantum control is available.
We first establish a universal limitation for classical syndrome-aware protocols.
While syndrome-aware classical post-processing can reduce the impact of errors compared to syndrome-agnostic protocols, we prove that the effective logical error rate can be improved, on average, by at most a factor of two compared to the logical error rate of an optimal decoder.
This implies that the sampling overhead can be improved at most quadratically by using the syndrome record classically, and hence the exponential overhead inherent to quantum error mitigation~\cite{takagi2022fundamental, takagi2023universal, tsubouchi2023universal, quek2024exponentially} cannot be avoided.
In contrast, once quantum syndrome-aware protocols are permitted---i.e., the measurement basis can be adapted to the observed syndromes---the above limitation of classical protocol can be violated.
Concretely, we demonstrate that the effective logical error rate can be made exponentially smaller than the logical error rate as the number of code blocks increases.

These results establish a clear separation between what can and cannot be achieved by leveraging error-syndrome information for noisy logical observable estimation.
On the one hand, they provide a general no-go statement showing that syndrome-dependent classical post-processing alone offers only limited improvements.
On the other hand, they identify syndrome-conditioned quantum control---in particular, adapting the measurement basis to the observed syndrome record---as the key ingredient that enables genuine asymptotic advantages, including exponential improvements in the effective logical error rate.
More broadly, our framework offers a unified, information-theoretic way to quantify the utility of error-syndrome information, and provides fundamental guidance for designing future fault-tolerant protocols that actively exploit error syndromes rather than discarding them after decoding.

The remainder of the paper is organized as follows.
In Sec.~\ref{sec_preliminaries}, we review basic tools from classical and quantum estimation theory that underpin our framework.
In Sec.~\ref{sec_problem_setup}, we introduce the problem setup and define syndrome-agnostic as well as classical and quantum syndrome-aware estimation protocols.
In Sec.~\ref{sec_syndrome_agnostic}, we analyze syndrome-agnostic protocols by deriving the corresponding Fisher information, which serves as a baseline for comparison.
In Sec.~\ref{sec_classical}, we establish fundamental limitations of classical syndrome-aware protocols by deriving a lower bound on the effective logical error rate.
In Sec.~\ref{sec_quantum}, we demonstrate exponential advantages of quantum syndrome-aware protocols.
Finally, we summarize our contributions and discuss future directions in Sec.~\ref{sec_discussion}.

\section{Preliminaries}
\label{sec_preliminaries}
In this section, we summarize basic tools from classical and quantum estimation theory~\cite{helstrom1969quantum, holevo2011probabilistic, hayashi2006quantum} that underpin our framework.
In Sec.~\ref{sec_classical_estimation_theory}, we review classical estimation theory for a classical probability distribution parameterized by a single unknown parameter.
In Sec.~\ref{sec_quantum_estimation_theory}, we review quantum estimation theory for a quantum state parameterized by multiple unknown parameters.

\subsection{Single-parameter classical estimation theory}
\label{sec_classical_estimation_theory}
Consider a classical probability distribution $\Pr(x|\theta)$ parameterized by an unknown parameter $\theta$.
Given $N$ samples $x_{1},\ldots,x_{N}$ drawn independently from $\Pr(x|\theta)$, we aim to construct an unbiased estimator $\theta^{\mathrm{est}}$ of $\theta$.
Here, an estimator is unbiased if its expectation value equals the true parameter value, i.e., $\mathbb{E}[\theta^{\mathrm{est}}]=\theta$.
In this setting, the precision of the unbiased estimation is characterized by the \emph{classical Fisher information}
\begin{equation}
    \label{eq_CFI}
    F_\theta = \sum_x \frac{1}{\Pr(x|\theta)}\qty(\partial_{\theta}\Pr(x|\theta))^2.
\end{equation}
Indeed, the classical Fisher information provides the achievable lower bound on the variance $\mathrm{Var}[\theta^{\mathrm{est}}]$ of the unbiased estimator $\theta^{\mathrm{est}}$ through the \textit{Cram\'er-Rao bound}
\begin{equation}
    \label{eq_CCRB}
    \mathrm{Var}[\theta^{\mathrm{est}}]\geq \frac{1}{N}F_\theta^{-1}.
\end{equation}
In the asymptotic regime $N\to\infty$, the asymptotically unbiased estimator that attains the Cram\'er--Rao bound is given by the maximum-likelihood estimator,
\begin{equation}
    \label{eq_MLE}
    \theta^{\mathrm{est}}=
    \arg\max_{\theta}\prod_{i=1}^N \Pr(x_i|\theta)
    = \arg\max_{\theta}\sum_{i=1}^N \log \Pr(x_i|\theta).
\end{equation}

Next, consider a joint distribution of the form $\Pr(x,s|\theta)=p_s\Pr(x|s,\theta)$, where one first observes $s$ with probability $p_s$ and then observes $x$ drawn from the conditional distribution $\Pr(x|s,\theta)$.
If $p_s$ is independent of $\theta$, the classical Fisher information of $\Pr(x,s|\theta)$ decomposes as
\begin{equation}
    \label{eq_CFI_joint}
    F_\theta = \sum_s p_s F_\theta^{(s)},
\end{equation}
where $F_\theta^{(s)}$ denotes the Fisher information of the conditional distribution $\Pr(x|s,\theta)$.
In our setting, $s$ corresponds to the error syndrome, and $\Pr(x|s,\theta)$ is the outcome distribution of a logical measurement performed on the syndrome-conditioned noisy logical state.

\subsection{Multi-parameter quantum estimation theory}
\label{sec_quantum_estimation_theory}
Consider a quantum state $\bar{\rho}(\vb*{\theta})$ parameterized by $M$ unknown parameters $\vb*{\theta}=(\theta_1,\ldots,\theta_M)$.
Since we mainly deal with logical operators, we denote operators acting on the logical Hilbert space by $\bar{\cdot}$, while ordinary (non-operator) quantities such as probability distributions and Fisher information matrices are written without $\bar{\cdot}$.
Given $N$ copies of $\bar{\rho}(\vb*{\theta})$, we aim to construct an unbiased estimator $\theta_i^{\mathrm{est}}$ of $\theta_i$.
We define the \emph{symmetric logarithmic derivative} (SLD) operator $\bar{L}_i$ by 
\begin{equation}
    \label{eq_SLD}
    \partial_{\theta_i}\bar{\rho}(\vb*{\theta})
    = \frac{1}{2}\{\bar{\rho}(\vb*{\theta}),\bar{L}_i\},
\end{equation}
and the \emph{(SLD) quantum Fisher information matrix} $J\in\mathbb{R}^{M\times M}$ by
\begin{equation}
    \label{eq_QFIM}
    J_{ij}
    = \mathrm{tr}\qty[\bar{\rho}(\vb*{\theta})\frac{1}{2}\{\bar{L}_i,\bar{L}_j\}].
\end{equation}
For a single parameter $\theta_i$ in the multi-parameter setting, we define the corresponding \emph{quantum Fisher information} as
\begin{equation}
    \label{eq_QFI}
    J_{\theta_i} = \frac{1}{(J^{-1})_{ii}}.
\end{equation}
Then, the achievable lower bound on the variance $\mathrm{Var}[\theta_i^{\mathrm{est}}]$ of the unbiased estimator $\theta_i^{\mathrm{est}}$ is given by the \emph{quantum Cram\'er--Rao bound}
\begin{equation}
    \label{eq_QCRB}
    \mathrm{Var}[\theta_i^{\mathrm{est}}]\geq \frac{1}{N}J_{\theta_i}^{-1}.
\end{equation}
In the asymptotic regime $N\to\infty$, an asymptotically unbiased estimator that attains the quantum Cram\'er--Rao bound can be constructed by first performing a measurement in the eigenbasis of the operator
\begin{equation}
    \label{eq_optimal_measurement_basis}
    \sum_{j}(J^{-1})_{ij}\bar{L}_j,
\end{equation}
and then applying a maximum-likelihood estimator to the resulting classical data~\cite{suzuki2020quantum}.
If the operator $\sum_{j}(J^{-1})_{ij}\bar{L}_j$ depends on the unknown parameters $\vb*{\theta}$, one may use an adaptive two-step strategy: measure $\sqrt{N}$ copies to obtain a rough estimate of $\vb*{\theta}$, and then measure the remaining $N-\sqrt{N}$ copies in the eigenbasis of $\sum_{j}(J^{-1})_{ij}\bar{L}_j$ evaluated at the rough estimate~\cite{yang2019attaining, zhou2020saturating}.

The discussion above assumes that all parameters $\vb*{\theta}=(\theta_1,\ldots,\theta_M)$ are unknown, so that the parameters other than $\theta_i$ act as nuisance parameters~\cite{suzuki2020quantum}.
If, instead, all parameters except $\theta_i$ are known, the achievable lower bound is given by
\begin{equation}
    \mathrm{Var}[\theta_i^{\mathrm{est}}]\geq \frac{1}{N}(J_{ii})^{-1},
\end{equation}
and the optimal measurement is the one that diagonalizes the SLD operator $\bar{L}_i$.
An important point is that the optimal measurement generally depends on whether the other parameters are treated as known or unknown.

Finally, consider a classical--quantum state of the form
\begin{equation}
    \sum_s p_s\ketbra{s}\otimes\bar{\rho}^{(s)}(\vb*{\theta}),
\end{equation}
where $p_s$ is independent of $\vb*{\theta}$.
Let $\bar{L}^{(s)}_i$ and $J^{(s)}$ denote the SLD operators and quantum Fisher information matrices associated with $\bar{\rho}^{(s)}(\vb*{\theta})$.
Then, the SLD operators and quantum Fisher information matrix for the classical--quantum state can be written as
\begin{equation}
    \label{eq_QFIM_classical_quantum}
    \bar{L}_i = \sum_s \ketbra{s}\otimes \bar{L}^{(s)}_i, 
    \qquad
    J = \sum_s p_s J^{(s)},
\end{equation}
and hence
\begin{equation}
    J_{\theta_i}
    = \frac{1}{\bigl((\sum_s p_s J^{(s)})^{-1}\bigr)_{ii}}.
\end{equation}
Accordingly, an asymptotically optimal measurement for estimating $\theta_i$ is given by measuring $s$ first and then, conditioned on $s$, measuring $\bar{\rho}^{(s)}(\vb*{\theta})$ in the eigenbasis of $\sum_j (J^{-1})_{ij}\bar{L}^{(s)}_j$.
Note that if one had access only to copies of a fixed conditional state $\bar{\rho}^{(s)}(\vb*{\theta})$, the corresponding optimal measurement would instead involve $\sum_j ((J^{(s)})^{-1})_{ij}\bar{L}^{(s)}_j$.
Thus, the optimal measurement for a given branch $s$ generally depends on the presence of the other branches $\bar{\rho}^{(s')}(\vb*{\theta})$ and their weights.
In the main text, we interpret $s$ as the error syndrome, and $\bar{\rho}^{(s)}(\vb*{\theta})$ as the noisy logical state conditioned on the syndrome $s$.

\section{Problem setup}
\label{sec_problem_setup}
In this section, we describe the problem setup considered throughout this paper.
Consider the task of preparing a $k$-qubit logical state
\begin{equation}
    \bar{\rho}(\vb*{\theta}) = \frac{1}{2^k}\qty(\bar{I} + \sum_{i=1}^{4^k-1}\theta_i\bar{P}_i),
\end{equation}
parameterized by an (unnormalized) generalized Bloch vector $\vb*{\theta} = (\theta_1,\ldots, \theta_{4^k-1})$~\cite{kimura2003bloch}, and estimating the expectation value of a logical Pauli operator $\bar{P}_i$.
Here, operators with $\bar{\cdot}$ represent operators of a $k$-qubit logical system: $\{\bar{P}_i\}_{i=0}^{4^k-1}$ denotes the set of $k$-qubit Pauli operators, with $\bar{P}_0=\bar{I}$ the $k$-qubit identity operator.
Since $\mathrm{tr}[\bar{\rho}(\vb*{\theta})\bar{P}_i]=\theta_i$, this task is equivalent to estimating the parameter $\theta_i$.
For simplicity, we restrict to unbiased estimation of $\theta_i$, and denote the unbiased estimator of $\theta_i$ as $\theta_i^{\mathrm{est}}$.

When a quantum device is affected by noise, we encode this logical state into an $n$-qubit physical state using an $[[n,k,d]]$ stabilizer code.
After the physical noise affects the physical state, we perform syndrome measurements and then decode the noisy physical state, resulting in a noisy logical state
\begin{equation}
    \overline{\mathcal{N}}(\bar{\rho}(\vb*{\theta})) = \frac{1}{2^k}\qty(\bar{I} + \sum_{i=1}^{4^k-1}(1-2\epsilon_i)\theta_i\bar{P}_i).
\end{equation}
Here, we assume that the residual logical noise is Pauli noise, and $\epsilon_i$ is the logical error rate defined as the probability that a logical Pauli error anti-commuting with $\bar{P}_i$ occurs.
To estimate $\theta_i$, we perform a logical measurement of $\bar{P}_i$ on $\overline{\mathcal{N}}(\bar{\rho}(\vb*{\theta}))$ and classically post-process the measurement outcomes to mitigate the effect of logical errors.

In such an estimation protocol, one performs identical operations on the decoded noisy logical state regardless of the observed error syndrome.
In other words, the protocol does not use syndrome information, and we call it a \textit{syndrome-agnostic estimation protocol} (Fig.~\ref{fig_syndrome_aware_protocols}(a)).
Meanwhile, recent works have pointed out that using error-syndrome information at the logical-layer estimation stage can further boost performance, for example by post-selecting runs with low estimated logical error probability~\cite{bombin2024fault, meister2024efficient, smith2024mitigating, gidney2025yoked, lee2025efficient, xie2026simple, kishi2026even} or by performing classical post-processing conditioned on the observed syndrome~\cite{bluvstein2024logical, zhou2025error, dincua2025error, aharonov2025syndrome}.
We call such protocols \textit{syndrome-aware estimation protocols}.
By incorporating syndrome information at the estimation stage, one can exploit information from the physical layer that would otherwise be discarded after decoding to improve the estimation of logical observables.
Indeed, in Appendix~\ref{sec_physical_layer}, we show that estimation performed directly on the noisy physical state is equivalent to a syndrome-aware estimation protocol, which justifies the schematic in Fig.~\ref{fig_decoding_estimation}.

To model syndrome-aware estimation protocols, we represent the decoder output as the observed error syndrome $s$ together with a syndrome-conditional noisy logical state (Fig.~\ref{fig_syndrome_aware_protocols}(b),(c)).
Concretely, we model the decoder output as a classical--quantum state
\begin{equation}
    \label{eq_classical_quantum_state}
    \sum_{s\in\mathcal{S}} p_s \ketbra{s} \otimes \overline{\mathcal{N}}_s(\bar{\rho}(\vb*{\theta})).
\end{equation}
Here, $s$ labels the syndrome, $\mathcal{S}$ is the set of possible error syndromes, $p_s$ is the probability of obtaining syndrome $s$, and $\overline{\mathcal{N}}_s$ is the logical noise channel conditioned on $s$, defined as
\begin{equation}
    \overline{\mathcal{N}}_s(\bar{\rho}(\vb*{\theta})) = \frac{1}{2^k}\qty(\bar{I} + \sum_{i=1}^{4^k-1}(1-2\epsilon_{i,s})\theta_i\bar{P}_i).
\end{equation}
We assume that each syndrome-conditioned logical noise channel $\overline{\mathcal{N}}_s$ is also Pauli, and $\epsilon_{i,s}$ denotes the probability that a logical Pauli error anti-commuting with $\bar{P}_i$ occurs when conditioned on the error syndrome $s$.

Under this definition, a syndrome-aware estimation protocol is an estimation procedure performed on the classical--quantum state $\sum_{s} p_s \ketbra{s}\otimes \overline{\mathcal{N}}_s(\bar{\rho}(\vb*{\theta}))$.
In contrast, a syndrome-agnostic estimation protocol is an estimation procedure performed on the averaged noisy logical state $\overline{\mathcal{N}}(\bar{\rho}(\vb*{\theta}))$, where the syndrome information is traced out:
\begin{equation}
    \label{eq_conditional_logical_error}
    \overline{\mathcal{N}} = \sum_{s\in\mathcal{S}} p_s \overline{\mathcal{N}}_s, \qquad
    \epsilon_i = \sum_{s\in\mathcal{S}} p_s \epsilon_{i,s}.
\end{equation}
Therefore, to quantify the information gain from utilizing error-syndrome information in estimation, it suffices to compare the information contained in the classical--quantum state $\sum_{s} p_s \ketbra{s}\otimes \overline{\mathcal{N}}_s(\bar{\rho}(\vb*{\theta}))$ with that in the averaged state $\overline{\mathcal{N}}(\bar{\rho}(\vb*{\theta}))$.

In the following, we consider two operational regimes for syndrome-aware protocols: (i) \emph{classical} syndrome-aware protocols, where a fixed logical measurement of $\bar{P}_i$ is applied to $\overline{\mathcal{N}}_s(\bar{\rho}(\vb*{\theta}))$ for all $s$ regardless of the observed syndrome and the syndrome information is used only in classical post-processing (Fig.~\ref{fig_syndrome_aware_protocols}(b)); and (ii) \emph{quantum} syndrome-aware protocols, where the logical measurement basis is allowed to depend on the observed syndrome $s$ (Fig.~\ref{fig_syndrome_aware_protocols}(c)).
For the classical regime, we analyze the classical Fisher information of the outcome distribution obtained by measuring the classical--quantum state $\sum_{s} p_s \ketbra{s}\otimes \overline{\mathcal{N}}_s(\bar{\rho}(\vb*{\theta}))$.
For the quantum regime, we analyze the quantum Fisher information of the same classical--quantum state.
From these Fisher informations, we introduce an \emph{effective logical error rate} to quantify the gain from syndrome information.
The effective logical error rate is defined as the logical error rate of a syndrome-agnostic protocol that would yield the same classical/quantum Fisher information as a given classical/quantum syndrome-aware protocol.
Our goal is to characterize how small the effective logical error rate can be, depending on which class of syndrome-aware protocol is allowed.

We conclude this section with several remarks on our assumptions.
First, for the target operator $\bar{P}_i$, we assume $0 \leq \epsilon_{i,s} \leq 1/2$ for all $s\in\mathcal{S}$.
This can be enforced by, for each syndrome $s$, choosing whether a logical Pauli error commuting with $\bar{P}_i$ or anti-commuting with $\bar{P}_i$ is more likely under $s$.
This choice corresponds to using a degenerate quantum maximum-likelihood decoder~\cite{iyer2015hardness, fuentes2021degeneracy, demarti2024decoding} with respect to the logical observable $\bar{P}_i$, and then $\epsilon_i=\sum_s p_s \epsilon_{i,s}$ represents the logical error rate induced by this decoder.
In the following, we refer to the degenerate quantum maximum-likelihood decoder simply as the maximum-likelihood decoder.
Second, we assume a single-round code-capacity noise model, where the ideal physical state is affected by a single layer of physical noise, followed by a single round of noiseless syndrome measurement.
Nevertheless, our framework extends to more general circuit-level noise models on Clifford circuits by interpreting $s\in\mathcal{S}$ as a full syndrome record and $\overline{\mathcal{N}}_s(\bar{\rho}(\vb*{\theta}))$ as the logical state at the end of the circuit, as long as the circuit noise can be modeled as Pauli noise.

\section{Fisher information analysis for syndrome-agnostic protocols}
\label{sec_syndrome_agnostic}
In this section, we analyze the Fisher information for syndrome-agnostic estimation protocols, in which error-syndrome information is not used at the logical-layer estimation stage (Fig.~\ref{fig_syndrome_aware_protocols}(a)).
As discussed in Sec.~\ref{sec_problem_setup}, a syndrome-agnostic protocol performs estimation on the averaged noisy logical state $\overline{\mathcal{N}}(\bar{\rho}(\vb*{\theta}))$.
Equivalently, given $N$ copies of $\overline{\mathcal{N}}(\bar{\rho}(\vb*{\theta}))$, the goal is to construct an unbiased estimator $\theta_i^{\mathrm{est}}$ of the parameter $\theta_i$.
Therefore, the ultimate precision limit and sampling overhead of the syndrome-agnostic protocol can be analyzed using the quantum Fisher information of $\overline{\mathcal{N}}(\bar{\rho}(\vb*{\theta}))$.

The Fisher-information analysis and optimal estimation strategy for this setting have been studied in Ref.~\cite{watanabe2010optimal}, which we review in Appendix~\ref{sec_estimation_theory_multiqubit}.
In particular, the optimal measurement basis (characterized by Eq.~\eqref{eq_optimal_measurement_basis}) is the one that diagonalizes the target Pauli operator $\bar{P}_i$.
Indeed, measuring $\overline{\mathcal{N}}(\bar{\rho}(\vb*{\theta}))$ with $\bar{P}_i$ yields the binary distribution
\begin{equation}
    \mathrm{Pr}^{\mathrm{L}}(x|\theta_i) = \frac{1+x(1-2\epsilon_i)\theta_i}{2},
\end{equation}
where $x=\pm1$ denotes the logical measurement outcome, and the superscript $\mathrm{L}$ indicates that the protocol is performed purely at the logical layer without using the error-syndrome record.
As reviewed in Sec.~\ref{sec_classical_estimation_theory}, an asymptotically efficient estimator is given by the maximum-likelihood estimator in Eq.~\eqref{eq_MLE}.
In this specific binary-outcome setting, however, one can write down a simple optimal unbiased estimator explicitly.
Given outcomes $x_1,\ldots,x_N$ obtained by measuring $N$ copies of $\overline{\mathcal{N}}(\bar{\rho}(\vb*{\theta}))$ with $\bar{P}_i$, an unbiased estimator can be constructed as
\begin{equation}
    \label{eq_estimator_agnostic}
    \theta_i^{\mathrm{est}} = \frac{1}{N}\sum_{j=1}^N \frac{x_j}{(1-2\epsilon_i)},
\end{equation}
i.e., by averaging the rescaled outcomes $x_j/(1-2\epsilon_i)$.
This estimator satisfies $\mathbb{E}[\theta_i^{\mathrm{est}}]=\theta_i$ and has variance
\begin{equation}
    \mathrm{Var}[\theta_i^{\mathrm{est}}]
    = \frac{1}{N}\frac{1-(1-2\epsilon_i)^2\theta_i^2}{(1-2\epsilon_i)^2},
\end{equation}
which attains the (quantum) Cram\'er--Rao bound in Eqs.~\eqref{eq_CCRB},~\eqref{eq_QCRB}.
This is because both the classical Fisher information $F^{\mathrm{L}}_{\theta_i}$ of the probability distribution $\Pr^{\mathrm{L}}(x|\theta_i)$ (defined in Eq.~\eqref{eq_CFI}) and the quantum Fisher information $J^{\mathrm{L}}_{\theta_i}$ of the noisy logical state $\overline{\mathcal{N}}(\bar{\rho}(\vb*{\theta}))$ for $\theta_i$ (defined in Eq.~\eqref{eq_QFI}) coincide and are given by
\begin{equation}
    \label{eq_QFI_agnostic}
    J^{\mathrm{L}}_{\theta_i} = F^{\mathrm{L}}_{\theta_i}
    = f_{\theta_i}(\epsilon_i)
    = f_{\theta_i}\qty(\sum_{s\in\mathcal{S}} p_s\epsilon_{i,s}).
\end{equation}
Here,
\begin{equation}
    \label{eq_def_f}
    f_{\theta_i}(\epsilon)
    = \frac{(1-2\epsilon)^2}{1-(1-2\epsilon)^2\theta_i^2}
\end{equation}
is a convex function of $\epsilon$ on $[0,1/2]$.
The equality $J^{\mathrm{L}}_{\theta_i}=F^{\mathrm{L}}_{\theta_i}$ implies that measuring $\bar{P}_i$ is optimal in this setting.
We provide a detailed derivation of the Fisher information in Appendix~\ref{sec_estimation_theory_multiqubit}.

In the following sections, we use the Fisher information of the syndrome-agnostic protocol, $J^{\mathrm{L}}_{\theta_i}=F^{\mathrm{L}}_{\theta_i}=f_{\theta_i}(\epsilon_i)$, as the baseline for comparison.
In particular, for a classical (quantum) syndrome-aware protocol, we define its effective logical error rate $\epsilon_i^{\mathrm{cSynd}}$ ($\epsilon_i^{\mathrm{qSynd}}$) as the logical error rate such that $f_{\theta_i}(\epsilon_i^{\mathrm{cSynd}})$ ($f_{\theta_i}(\epsilon_i^{\mathrm{qSynd}})$) matches the Fisher information achieved by that protocol.
By comparing $\epsilon_i^{\mathrm{cSynd}}$ ($\epsilon_i^{\mathrm{qSynd}}$) with the original logical error rate $\epsilon_i$, we quantify how much the impact of noise can be reduced by incorporating error-syndrome information at the estimation stage.

\section{Fundamental limitations on classical syndrome-aware protocols}
\label{sec_classical}
In this section, we study \emph{classical} syndrome-aware estimation protocols, in which the logical measurement basis is fixed to $\bar{P}_i$ regardless of the observed error syndrome $s$, and syndrome information is used only in classical post-processing (Fig.~\ref{fig_syndrome_aware_protocols}(b)).
This class includes well-known protocols such as post-selection based on an estimated conditional logical error rate~\cite{bombin2024fault, meister2024efficient, smith2024mitigating, gidney2025yoked, lee2025efficient, xie2026simple, kishi2026even}, as well as syndrome-conditioned variants of quantum error mitigation methods~\cite{bluvstein2024logical, zhou2025error, dincua2025error, aharonov2025syndrome} such as rescaling~\cite{tsubouchi2024symmetric}, zero-noise extrapolation~\cite{temme2017error, li2017efficient, kim2023evidence} (since, in our model, the noise strength can be increased classically after measurement), probabilistic error cancellation~\cite{temme2017error, endo2018practical, van2023probabilistic} (since, in our model, noise inversion can be implemented by classically flipping measurement outcomes according to an appropriate quasiprobability distribution), symmetry verification~\cite{bonet2018low, mcardle2019error}, and learning-based methods~\cite{czarnik2021error, strinkis2021learning}.

Our analysis is based on the classical Fisher information of the outcome distribution obtained by measuring the syndrome-conditioned logical state with $\bar{P}_i$, together with the \emph{effective logical error rate}, defined as the logical error rate of a syndrome-agnostic protocol that would yield the same classical Fisher information as a given classical syndrome-aware protocol.
In Sec.~\ref{sec_classical_1}, we derive the classical Fisher information and the corresponding effective logical error rate in our setup.
In Sec.~\ref{sec_classical_2}, we prove a universal lower bound on the effective logical error rate, showing that, on average, it cannot be smaller than one-half of the logical error rate.
In Sec.~\ref{sec_classical_3}, we further analyze the effective logical error rate, with particular emphasis on the low-error regime.
Finally, in Sec.~\ref{sec_classical_4}, we numerically investigate the effective logical error rate for several representative quantum error-correcting codes.

\subsection{Classical Fisher information and effective logical error rate}
\label{sec_classical_1}
As discussed in Sec.~\ref{sec_problem_setup}, a syndrome-aware protocol is an estimation protocol performed on the classical--quantum state $\sum_{s\in\mathcal{S}} p_s \ketbra{s} \otimes \overline{\mathcal{N}}_s(\bar{\rho}(\vb*{\theta}))$.
In particular, for the classical syndrome-aware protocol, the logical measurement applied to each syndrome-conditioned noisy logical state $\overline{\mathcal{N}}_s(\bar{\rho}(\vb*{\theta}))$ is fixed to $\bar{P}_i$, which is the optimal measurement basis for the syndrome-agnostic protocol.
Therefore, a classical syndrome-aware protocol can be modeled as an estimation protocol based on the classical joint distribution
\begin{equation}
    \mathrm{Pr}^{\mathrm{Synd}}(x,s|\theta_i) = p_s \mathrm{Pr}^{\mathrm{L}}(x|s,\theta_i),
\end{equation}
where
\begin{equation}
    \mathrm{Pr}^{\mathrm{L}}(x|s,\theta_i) = \frac{1+x(1-2\epsilon_{i,s})\theta_i}{2}
\end{equation}
is the outcome distribution conditioned on the syndrome $s$.
Equivalently, given $N$ samples $(x_1,s_1),\ldots,(x_N,s_N)$ drawn from $\Pr^{\mathrm{Synd}}(x,s|\theta_i)$, the goal is to construct an unbiased estimator $\theta_i^{\mathrm{est}}$ of $\theta_i$.
Hence, the ultimate precision and sampling overhead of classical syndrome-aware protocols can be analyzed via the classical Fisher information of $\Pr^{\mathrm{Synd}}(x,s|\theta_i)$.

From Eq.~\eqref{eq_CFI_joint} and Eq.~\eqref{eq_QFI_agnostic}, the classical Fisher information $F_{\theta_i}^{\mathrm{Synd}}$ of $\Pr^{\mathrm{Synd}}(x,s|\theta_i)$ is given by
\begin{equation}
    F_{\theta_i}^{\mathrm{Synd}} = \sum_{s\in\mathcal{S}} p_s f_{\theta_i}(\epsilon_{i,s}),
\end{equation}
where the function $f_{\theta_i}(\epsilon)$ is defined in Eq.~\eqref{eq_def_f}.
Since $f_{\theta_i}(\epsilon)$ is convex in $\epsilon$, Jensen's inequality implies
\begin{equation}
    \label{eq_cfi_inequality}
    J_{\theta_i}^{\mathrm{L}} = F_{\theta_i}^{\mathrm{L}} \leq F_{\theta_i}^{\mathrm{Synd}}.
\end{equation}
In other words, the syndrome-aware distribution $\mathrm{Pr}^{\mathrm{Synd}}(x,s|\theta_i)$ contains more information than the averaged logical distribution $\mathrm{Pr}^{\mathrm{L}}(x|\theta_i)=\sum_s \mathrm{Pr}^{\mathrm{Synd}}(x,s|\theta_i)$.
Thus, incorporating error-syndrome information can improve estimation precision (equivalently, reduce the required sampling overhead), compared to syndrome-agnostic estimation.
In other words, the effective impact of logical noise can be reduced by exploiting the syndrome information.

To quantify this improvement, we define the \emph{effective logical error rate} $\epsilon_i^{\mathrm{cSynd}}$ for classical syndrome-aware estimation protocols by
\begin{equation}
    \label{eq_effective_logical_error_rate_classical}
    f_{\theta_i}(\epsilon_i^{\mathrm{cSynd}}) = F_{\theta_i}^{\mathrm{Synd}} = \sum_{s\in\mathcal{S}} p_s f_{\theta_i}(\epsilon_{i,s}).
\end{equation}
Equivalently, $\epsilon_i^{\mathrm{cSynd}}$ is the logical error rate of a syndrome-agnostic protocol that would achieve the same Fisher information as the classical syndrome-aware protocol.
By comparing $\epsilon_i^{\mathrm{cSynd}}$ with the original logical error rate $\epsilon_i$, we quantify how much the impact of noise can be reduced when incorporating error-syndrome information at the estimation stage.

We end this section by commenting on an asymptotically optimal classical syndrome-aware estimator that attains the Cram\'er--Rao bound in Eq.~\eqref{eq_CCRB}.
Given $N$ samples $(x_1,s_1),\ldots,(x_N,s_N)$ from $\Pr^{\mathrm{Synd}}(x,s|\theta_i)$, we first construct a rough initial unbiased estimator $\theta_i^{\mathrm{init}}$ using the initial $N'=\lfloor\sqrt{N}\rfloor$ samples.
Using the remaining $N-N'$ samples, we then define
\begin{equation}
    \label{eq_estimator_classical_syndrome_aware}
    \theta_i^{\mathrm{est}} = \sum_{j=N'+1}^{N}
    \frac{f_{\theta_i^{\mathrm{init}}}(\epsilon_{i,s_j})}{\sum_{j'=N'+1}^N f_{\theta_i^{\mathrm{init}}}(\epsilon_{i,s_{j'}})}
    \frac{x_j}{(1-2\epsilon_{i,s_j})}.
\end{equation}
This estimator can be regarded as a weighted average of the syndrome-agnostic estimator in Eq.~\eqref{eq_estimator_agnostic}, where the weights are proportional to the conditional Fisher informations $f_{\theta_i^{\mathrm{init}}}(\epsilon_{i,s_j})$.
In the asymptotic regime $N\to\infty$, we have $\theta_i^{\mathrm{init}}\to\theta_i$ and $f_{\theta_i^{\mathrm{init}}}\to f_{\theta_i}$, and moreover $\sum_{j=N'+1}^N f_{\theta_i}(\epsilon_{i,s_j})\to N F_{\theta_i}^{\mathrm{Synd}}$.
Therefore, $\mathrm{Var}(\theta_i^{\mathrm{est}})\to (N F_{\theta_i}^{\mathrm{Synd}})^{-1}$, and the estimator attains the Cram\'er--Rao bound in Eq.~\eqref{eq_CCRB}.
We note that a similar unbiased estimator was considered in Ref.~\cite{aharonov2025syndrome}, although its optimality was not discussed there.

\subsection{Lower bounds on the effective logical error rate}
\label{sec_classical_2}
As defined in the previous subsection, the effective logical error rate $\epsilon_i^{\mathrm{cSynd}}$ quantifies the residual effect of errors inherent to classical syndrome-aware protocols.
By analyzing how much $\epsilon_i^{\mathrm{cSynd}}$ can be reduced relative to the original logical error rate $\epsilon_i$, we can assess how much incorporating syndrome information can mitigate the impact of noise.
However, as shown in the following theorem, $\epsilon_i^{\mathrm{cSynd}}$ cannot be made arbitrarily small: it can improve over $\epsilon_i$ only by a constant factor.
\begin{thm}
    \label{thm_1}
    The effective logical error rate $\epsilon_i^{\mathrm{cSynd}}$ of classical syndrome-aware protocols satisfies
    \begin{equation}
        \frac{1-\theta_i^2}{2}\epsilon_i \leq \epsilon_i^{\mathrm{cSynd}} \leq \epsilon_i,
    \end{equation}
    where $\epsilon_i$ is the logical error rate under the maximum-likelihood decoder and $\theta_i = \mathrm{tr}[\bar{P}_i\bar{\rho}(\vb*{\theta})]$ is the expectation value to be estimated.
    In particular, when the ideal quantum state $\bar{\rho}(\vb*{\theta})=\ketbra{\psi}$ is drawn from the $k$-qubit Haar measure, the average effective logical error rate satisfies
    \begin{equation}
        \frac{1}{2}\epsilon_i \lesssim \mathbb{E}_{\mathrm{Haar}}[\epsilon_i^{\mathrm{cSynd}}] \leq \epsilon_i,
    \end{equation}
    where $\lesssim$ indicates that the inequality holds up to the leading order in $k$.
\end{thm}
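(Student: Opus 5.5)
The plan is to exploit two elementary facts about the function $f_{\theta_i}$ in Eq.~\eqref{eq_def_f} on $[0,1/2]$. It is convex, and it is monotonically decreasing, with $f_{\theta_i}(0)=1/(1-\theta_i^2)$ and $f_{\theta_i}(1/2)=0$. Both bounds then reduce to sandwiching the defining relation Eq.~\eqref{eq_effective_logical_error_rate_classical} between a tangent line and a chord of $f_{\theta_i}$.

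\emph{Upper bound.} Jensen's inequality already gives Eq.~\eqref{eq_cfi_inequality}, i.e.\ $f_{\theta_i}(\epsilon_i^{\mathrm{cSynd}})=\sum_s p_s f_{\theta_i}(\epsilon_{i,s})\geq f_{\theta_i}(\epsilon_i)$. I will check monotonicity directly. Writing $u=(1-2\epsilon)^2$, we have $f=u/(1-u\theta_i^2)$, which is increasing in $u$, and $u$ is decreasing in $\epsilon\in[0,1/2]$. Monotonicity of $f_{\theta_i}$ then gives $\epsilon_i^{\mathrm{cSynd}}\leq\epsilon_i$.

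\emph{Lower bound.} First I bound the right-hand side of Eq.~\eqref{eq_effective_logical_error_rate_classical} from above by the chord. Convexity on $[0,1/2]$ together with $f_{\theta_i}(1/2)=0$ gives, for every $s$ (using the assumption $0\le\epsilon_{i,s}\le 1/2$ guaranteed by the maximum-likelihood decoder),
\begin{equation*}
f_{\theta_i}(\epsilon_{i,s})\leq (1-2\epsilon_{i,s})f_{\theta_i}(0)=\frac{1}{1-\theta_i^2}-\frac{2\epsilon_{i,s}}{1-\theta_i^2}.
\end{equation*}
Averaging over $p_s$ and using $\epsilon_i=\sum_s p_s\epsilon_{i,s}$ gives $F^{\mathrm{Synd}}_{\theta_i}\leq \frac{1}{1-\theta_i^2}-\frac{2\epsilon_i}{1-\theta_i^2}$.

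Second, I bound the left-hand side from below by the tangent at $\epsilon=0$. Using $df/du=(1-u\theta_i^2)^{-2}$ and $du/d\epsilon=-4(1-2\epsilon)$, the slope at $0$ is $f_{\theta_i}'(0)=-4/(1-\theta_i^2)^2$. Convexity then yields
\begin{equation*}
f_{\theta_i}(\epsilon_i^{\mathrm{cSynd}})\geq \frac{1}{1-\theta_i^2}-\frac{4\epsilon_i^{\mathrm{cSynd}}}{(1-\theta_i^2)^2}.
\end{equation*}
Chaining the two inequalities through Eq.~\eqref{eq_effective_logical_error_rate_classical}, the constant $1/(1-\theta_i^2)$ cancels. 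This leaves $4\epsilon_i^{\mathrm{cSynd}}/(1-\theta_i^2)^2\geq 2\epsilon_i/(1-\theta_i^2)$, i.e.\ $\epsilon_i^{\mathrm{cSynd}}\geq \frac{1-\theta_i^2}{2}\epsilon_i$. The edge case $\theta_i^2=1$ is trivial, since the lower bound is then $0$.

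\emph{Haar average.} Since the lower bound is linear in $\theta_i^2$, averaging gives $\mathbb{E}_{\mathrm{Haar}}[\epsilon_i^{\mathrm{cSynd}}]\geq \frac{\epsilon_i}{2}\bigl(1-\mathbb{E}_{\mathrm{Haar}}[\theta_i^2]\bigr)$, where $\theta_i=\bra{\psi}\bar{P}_i\ket{\psi}$. The remaining ingredient is the second moment of a Haar-random pure state. From $\mathbb{E}[\ketbra{\psi}^{\otimes 2}]=(\bar{I}+\mathrm{SWAP})/(d(d+1))$ with $d=2^k$, one gets
\begin{equation*}
\mathbb{E}_{\mathrm{Haar}}[\theta_i^2]=\frac{\mathrm{tr}[\bar{P}_i]^2+\mathrm{tr}[\bar{P}_i^2]}{d(d+1)}=\frac{1}{2^k+1}
\end{equation*}
for non-identity $\bar{P}_i$. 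Hence $\mathbb{E}_{\mathrm{Haar}}[\epsilon_i^{\mathrm{cSynd}}]\geq \frac{\epsilon_i}{2}\cdot\frac{2^k}{2^k+1}$, which equals $\epsilon_i/2$ to leading order. The upper bound $\mathbb{E}_{\mathrm{Haar}}[\epsilon_i^{\mathrm{cSynd}}]\leq \epsilon_i$ follows pointwise from the first part.

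I do not expect a serious obstacle. The only delicate step is choosing the right pair of linear bounds, namely the chord to $\epsilon=1/2$ and the tangent at $0$, so that the constant terms cancel and produce exactly the factor $(1-\theta_i^2)/2$. I would also double-check that convexity of $f_{\theta_i}$ holds on the full interval for all $|\theta_i|\le 1$, which the paper asserts.
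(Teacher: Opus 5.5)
Your proposal is correct and is essentially the paper's proof. The paper uses the same chord bound $f_{\theta_i}(\epsilon_{i,s})\le(1-2\epsilon_{i,s})/(1-\theta_i^2)$ (phrased via convexity of $f_{\theta_i}^{-1}$), Jensen for the upper bound, and $\mathbb{E}_{\mathrm{Haar}}[\theta_i^2]=1/(2^k+1)$ for the average; the only cosmetic difference is that where you use the tangent line of $f_{\theta_i}$ at $\epsilon=0$, the paper observes that $(2-2\epsilon)/(1-\theta_i^2(1-2\epsilon)^2)$ is maximized at $\epsilon=0$, which is the same inequality.
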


We provide a proof of Theorem~\ref{thm_1} in Appendix~\ref{sec_proof_classical_1}, which relies only on the convexity of the Fisher information function $f_{\theta_i}(\epsilon)$.
A key point is that the bound is derived without assuming any specific code family or noise model.
Therefore, within our setting of logical Pauli observable estimation, Theorem~\ref{thm_1} is universal: $\epsilon_i^{\mathrm{cSynd}}$ cannot be smaller than one half of the logical error rate $\epsilon_i$ achieved by the maximum-likelihood decoder on average, regardless of the code or the underlying noise model.
We note that the assumption that we use the maximum-likelihood decoder is necessary to ensure $0\leq\epsilon_{i,s}\leq1/2$, so that the function $f_{\theta_i}(\epsilon_{i,s})$ is invertible on the domain $0\leq\epsilon_{i,s}\leq1/2$.

Several recent works report that post-selecting syndromes with low estimated error probability can reduce the logical error rate by orders of magnitude, lower the apparent threshold, and even allow a smaller code distance~\cite{bombin2024fault, meister2024efficient, smith2024mitigating, gidney2025yoked, lee2025efficient, xie2026simple, kishi2026even}.
At first glance, such observations might seem to contradict Theorem~\ref{thm_1}.
However, these analyses focus on the \emph{post-selected} logical error rate $\sum_{s\in\mathcal{S}^{\mathrm{Post}}}p_s\epsilon_{i,s}/\sum_{s\in\mathcal{S}^{\mathrm{Post}}}p_s$, where $\mathcal{S}^{\mathrm{Post}}\subset\mathcal{S}$ is the set of error syndromes being post-selected, and typically do not account for the increased sampling overhead incurred by post-selection.
Our result can be interpreted as follows: once the sampling overhead induced by post-selection is properly included, the net reduction of logical errors (as quantified by the effective logical error rate) is bounded by a constant factor, at most $1/2$.
Consequently, quantities such as the code distance or the threshold cannot be improved in an asymptotic sense solely by syndrome-dependent classical post-processing at the estimation stage.
We formalize this intuition in the following corollary, whose proof is provided in Appendix~\ref{sec_proof_classical_2}.
\begin{cor}
    \label{cor_1}
    Assume that $\theta_i\neq \pm1$.
    Assume further that the logical error rate under the maximum-likelihood decoder scales as $\epsilon_i=\Theta(\eta^{l})$ for some constant $l$, where $\eta$ is a physical error rate.
    Then, the effective logical error rate of classical syndrome-aware protocols also scales as $\epsilon_i^{\mathrm{cSynd}}=\Theta(\eta^{l})$, i.e., the effective code distance does not change.
    Moreover, the threshold defined in terms of $\epsilon_i^{\mathrm{cSynd}}$ is identical to that defined in terms of the logical error rate $\epsilon_i$ under the maximum-likelihood decoder.
\end{cor}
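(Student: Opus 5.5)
The plan is to read Corollary~\ref{cor_1} off the two-sided bound of Theorem~\ref{thm_1}. The point is that under the assumption $\theta_i\neq\pm1$ the prefactor $c_{\theta_i}:=(1-\theta_i^2)/2$ is a strictly positive constant. It depends only on the target expectation value, and not on the physical error rate $\eta$, the code distance, or the code size. Theorem~\ref{thm_1} therefore gives the sandwich
\begin{equation}
c_{\theta_i}\,\epsilon_i \leq \epsilon_i^{\mathrm{cSynd}} \leq \epsilon_i, \qquad 0<c_{\theta_i}\leq \tfrac12 ,
\end{equation}
uniformly in $\eta$ and in the code. Both claims should follow from this sandwich alone.

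\emph{Scaling.} Suppose $\epsilon_i=\Theta(\eta^{l})$, so that there are constants $0<a\le b$ with $a\eta^l\le\epsilon_i\le b\eta^l$ for small $\eta$. The sandwich then gives $c_{\theta_i}a\,\eta^l\le\epsilon_i^{\mathrm{cSynd}}\le b\,\eta^l$. Hence $\epsilon_i^{\mathrm{cSynd}}=\Theta(\eta^l)$ with the same exponent $l$. The effective code distance, read off as the exponent of $\eta$, is therefore unchanged. The only thing to check is that $c_{\theta_i}$ is independent of $\eta$; this holds because $\theta_i$ is a property of the ideal state and not of the noise. This is also exactly where the hypothesis $\theta_i\neq\pm1$ is used: at $\theta_i=\pm1$ the lower bound degenerates to zero.

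\emph{Threshold.} I would fix a code family indexed by distance $d$, with logical error rate $\epsilon_i(\eta,d)$. The threshold is defined as $\eta_{\mathrm{th}}=\sup\{\eta:\lim_{d\to\infty}\epsilon_i(\eta,d)=0\}$, and $\eta_{\mathrm{th}}^{\mathrm{cSynd}}$ is defined analogously with $\epsilon_i^{\mathrm{cSynd}}(\eta,d)$. Because $c_{\theta_i}$ is independent of $d$, the sandwich gives $\epsilon_i^{\mathrm{cSynd}}(\eta,d)\to0$ if and only if $\epsilon_i(\eta,d)\to0$. The two sets of sub-threshold error rates therefore coincide, and so do their suprema.

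The main obstacle is definitional rather than technical. If the threshold were instead defined as the crossing point of the curves $\epsilon_i(\eta,d)$ for consecutive $d$, a constant-factor sandwich would not by itself pin down the crossing. I would therefore state the limit-based definition explicitly, which is the standard asymptotic notion. I would add that any definition of the form ``$\epsilon\to0$ as $d\to\infty$'', including exponential suppression in $d$, is invariant under multiplication by a $d$-independent positive constant.
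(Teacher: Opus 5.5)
Your proposal is correct and follows the paper's proof: both obtain the scaling and threshold claims directly from the two-sided bound of Theorem~\ref{thm_1}, using that $(1-\theta_i^2)/2>0$ is independent of $\eta$ and $d$ when $\theta_i\neq\pm1$. The paper also adopts the limit-based threshold definition you state ($\epsilon_i\to0$ as $d$ increases), so your caveat about crossing-point definitions is a harmless extra remark rather than something the paper needed to address.
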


Using the lower bound on $\epsilon_i^{\mathrm{cSynd}}$, we can also derive a lower bound on the sampling overhead for classical syndrome-aware estimation protocols.
In quantum error mitigation, the sampling overhead is known to grow exponentially with the total error rate $\epsilon_{\mathrm{tot}}$ of the circuit, typically as $e^{2\epsilon_{\mathrm{tot}}}$~\cite{tsubouchi2023universal, tsubouchi2024symmetric}.
This scaling also applies directly to syndrome-agnostic protocols when these results are applied to logical circuits.
However, it has been unclear how much this overhead can be reduced by incorporating syndrome information.
Theorem~\ref{thm_1} provides a rigorous answer: even when syndrome information is used at the logical-layer estimation stage, the average effective logical error rate can improve by at most a factor of two.
Accordingly, the sampling overhead can improve at best quadratically, reducing the scaling from $e^{2\epsilon_{\mathrm{tot}}}$ to $e^{\epsilon_{\mathrm{tot}}}$, and therefore the exponential overhead cannot be avoided.
We state this formally as the following corollary.
\begin{cor}
    \label{cor_2}
    The minimal sampling overhead for classical syndrome-aware protocols $N^{\mathrm{cSynd}} = (\sigma^2_{\mathrm{target}}F_{\theta_i}^{\mathrm{Synd}})^{-1}$ satisfies
    \begin{equation}
        \frac{(1-\theta_i^2)}{\sigma_{\mathrm{target}}}\sqrt{N^{\mathrm{L}}+\frac{\theta_i^2}{\sigma^2_{\mathrm{target}}}}\leq N^{\mathrm{cSynd}} \leq N^{\mathrm{L}},
    \end{equation}
    where $N^{\mathrm{L}} = (\sigma^2_{\mathrm{target}}F_{\theta_i}^{\mathrm{L}})^{-1}$ is the minimal sampling overhead for syndrome-agnostic protocols and $\sigma_{\mathrm{target}}$ is the target standard deviation.
    In particular, when the ideal logical state is drawn from the $k$-qubit Haar measure, the average sampling overhead satisfies
    \begin{equation}
        \frac{1}{\sigma_{\mathrm{target}}}\sqrt{\mathbb{E}_{\mathrm{Haar}}[N^{\mathrm{L}}]}\lesssim \mathbb{E}_{\mathrm{Haar}}[N^{\mathrm{cSynd}}] \leq \mathbb{E}_{\mathrm{Haar}}[N^{\mathrm{L}}].
    \end{equation}
\end{cor}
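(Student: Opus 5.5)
The plan is to bound the Fisher information $F_{\theta_i}^{\mathrm{Synd}}$ from above and below, and then convert these into bounds on $N^{\mathrm{cSynd}}=(\sigma^2_{\mathrm{target}}F^{\mathrm{Synd}}_{\theta_i})^{-1}$. The upper bound $N^{\mathrm{cSynd}}\leq N^{\mathrm{L}}$ is immediate: by Eq.~\eqref{eq_cfi_inequality}, Jensen's inequality for the convex function $f_{\theta_i}$ gives $F^{\mathrm{L}}_{\theta_i}\leq F^{\mathrm{Synd}}_{\theta_i}$, and inverting reverses the inequality.

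For the lower bound I would not route the argument through the statement of Theorem~\ref{thm_1}. A quick check suggests that the bound $\epsilon_i^{\mathrm{cSynd}}\geq(1-\theta_i^2)\epsilon_i/2$, combined with monotonicity of $f_{\theta_i}$, does not directly give the square-root form. Instead I would use the convexity argument that presumably underlies Theorem~\ref{thm_1}: a chord bound on $[0,1/2]$. Since $f_{\theta_i}$ is convex on $[0,1/2]$, it lies below the chord joining its endpoint values $f_{\theta_i}(0)=1/(1-\theta_i^2)$ and $f_{\theta_i}(1/2)=0$. This gives
\begin{equation}
f_{\theta_i}(\epsilon)\leq \frac{1-2\epsilon}{1-\theta_i^2}, \qquad 0\leq\epsilon\leq\frac{1}{2}.
\end{equation}
Here the maximum-likelihood decoder assumption $0\leq\epsilon_{i,s}\leq 1/2$ is essential. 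Averaging over $s$ with weights $p_s$ and using $\epsilon_i=\sum_s p_s\epsilon_{i,s}$ yields $F^{\mathrm{Synd}}_{\theta_i}\leq (1-2\epsilon_i)/(1-\theta_i^2)$. Hence
\begin{equation}
N^{\mathrm{cSynd}}\geq \frac{1-\theta_i^2}{\sigma^2_{\mathrm{target}}(1-2\epsilon_i)}.
\end{equation}

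Next I rewrite $(1-2\epsilon_i)^{-1}$ in terms of $N^{\mathrm{L}}$. Writing $a=1-2\epsilon_i$, Eq.~\eqref{eq_def_f} gives $\sigma^2_{\mathrm{target}}N^{\mathrm{L}}=(1-a^2\theta_i^2)/a^2=a^{-2}-\theta_i^2$. Therefore $N^{\mathrm{L}}+\theta_i^2/\sigma^2_{\mathrm{target}}=1/(\sigma^2_{\mathrm{target}}a^2)$, so that $1/a=\sigma_{\mathrm{target}}\sqrt{N^{\mathrm{L}}+\theta_i^2/\sigma^2_{\mathrm{target}}}$. Substituting this into the previous display gives exactly the claimed lower bound $\frac{1-\theta_i^2}{\sigma_{\mathrm{target}}}\sqrt{N^{\mathrm{L}}+\theta_i^2/\sigma^2_{\mathrm{target}}}$.

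The Haar-averaged statement is the step I expect to need the most care, because Jensen's inequality for the concave square root points the wrong way. For $\ket{\psi}$ Haar-random on $k$ qubits, $\theta_i=\bra{\psi}\bar{P}_i\ket{\psi}$ has $\mathbb{E}[\theta_i^2]=1/(2^k+1)$ and concentrates at zero, with fluctuations exponentially small in $k$. To leading order in $k$ I would therefore set $\theta_i^2\to0$ in both bounds. Then $N^{\mathrm{L}}\approx 1/(\sigma^2_{\mathrm{target}}a^2)$ becomes essentially state-independent, so $\mathbb{E}_{\mathrm{Haar}}[N^{\mathrm{L}}]\approx 1/(\sigma^2_{\mathrm{target}}a^2)$. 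The deterministic lower bound becomes $\frac{1}{\sigma_{\mathrm{target}}}\sqrt{N^{\mathrm{L}}}$, which agrees with $\frac{1}{\sigma_{\mathrm{target}}}\sqrt{\mathbb{E}_{\mathrm{Haar}}[N^{\mathrm{L}}]}$ up to corrections of order $2^{-k}$. This justifies the $\lesssim$. The upper bound follows simply by averaging $N^{\mathrm{cSynd}}\leq N^{\mathrm{L}}$ over the Haar measure.
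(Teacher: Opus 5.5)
Your proposal is correct and follows the paper's proof essentially step by step: the chord bound $f_{\theta_i}(\epsilon)\le(1-2\epsilon)/(1-\theta_i^2)$ (which the paper takes from Eq.~\eqref{eq_proof_c_1_2} in the proof of Theorem~\ref{thm_1}), averaging over $s$, and the identity $(1-2\epsilon_i)^{-2}=\sigma^2_{\mathrm{target}}N^{\mathrm{L}}+\theta_i^2$. For the Haar step you do not need a Jensen or concentration argument. Because $\epsilon_i$ does not depend on the state, $N^{\mathrm{L}}+\theta_i^2/\sigma^2_{\mathrm{target}}=1/(\sigma^2_{\mathrm{target}}(1-2\epsilon_i)^2)$ is state-independent, so, as in the paper, you can average the bound $(1-\theta_i^2)/(\sigma^2_{\mathrm{target}}(1-2\epsilon_i))$, which is linear in $\theta_i^2$, exactly using $\mathbb{E}_{\mathrm{Haar}}[\theta_i^2]=1/(2^k+1)$.
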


Here, the minimal sampling overhead in Corollary~\ref{cor_2} is defined as the sample size $N$ that attains equality in the Cram\'er--Rao bound in Eq.~\eqref{eq_CCRB}.
Corollary~\ref{cor_2} shows that $N^{\mathrm{cSynd}}$ can be at most quadratically smaller than the syndrome-agnostic overhead $N^{\mathrm{L}}$.
Since $N^{\mathrm{L}}$ grows exponentially with the total error rate as $e^{2\epsilon_{\mathrm{tot}}}$~\cite{tsubouchi2023universal, tsubouchi2024symmetric}, the exponential overhead in $N^{\mathrm{cSynd}}=\Omega(e^{\epsilon_{\mathrm{tot}}})$ cannot be avoided.
We provide a proof of Corollary~\ref{cor_2} in Appendix~\ref{sec_proof_classical_3}.

We emphasize that the fundamental limitations established here apply to a broad class of previously proposed classical syndrome-aware strategies, including post-selection based on syndrome records~\cite{bombin2024fault, meister2024efficient, smith2024mitigating, gidney2025yoked, lee2025efficient, xie2026simple, kishi2026even} and syndrome-conditioned variants of logical-layer quantum error mitigation~\cite{bluvstein2024logical, zhou2025error, dincua2025error, aharonov2025syndrome}.
For instance, post-selection based on syndrome records can be viewed as using syndrome information to convert a subset of leading logical faults into detectable erasure events.
When detecting such errors, the sampling overhead typically scales as $e^{p_{\mathrm{tot}}}$ (as in symmetry verification~\cite{bonet2018low, mcardle2019error}), which is quadratically smaller than the generic quantum-error-mitigation overhead $e^{2p_{\mathrm{tot}}}$ required to handle undetectable logical faults~\cite{tsubouchi2023universal, tsubouchi2024symmetric}.
Our Theorem~\ref{thm_1} and Corollary~\ref{cor_2} show that classical syndrome-aware protocols cannot yield a better scaling of the sampling overhead than the one achievable by such error-detection-based strategies.
Our bounds are also consistent with numerical observations for syndrome-conditioned mitigation protocols.
In particular, Ref.~\cite{aharonov2025syndrome} reported that the blowup rate $\lambda$, defined from the sampling overhead $e^{\lambda p_{\mathrm{tot}}}$ of a given quantum error mitigation method (which is closely related to our effective logical error rate), is reduced by a factor no larger than $1/2$, which is precisely explained by the universal limitation in Theorem~\ref{thm_1}.

\subsection{Analysis in the low-error regime}
\label{sec_classical_3}
In the previous subsections, we showed that exploiting syndrome information can reduce the effective impact of logical noise.
Here, we analyze when such an error reduction is possible in the low-error regime.
Specifically, we characterize when the effective logical error rate $\epsilon_i^{\mathrm{cSynd}}$ can be strictly smaller than the logical error rate $\epsilon_i$, i.e., when $\epsilon_i^{\mathrm{cSynd}}<\epsilon_i$, in the limit of small physical noise.

We recall that the logical error rate $\epsilon_i$ can be written as a weighted average of the syndrome-conditioned logical error rates:
\begin{equation}
    \epsilon_i = \sum_{s\in\mathcal{S}} p_s\epsilon_{i,s}.
\end{equation}
Moreover, in the low-error regime, the effective logical error rate admits an expansion in terms of syndrome contributions:
\begin{equation}
    \begin{aligned}
        \epsilon_i^{\mathrm{cSynd}} 
        &\sim\frac{f_{\theta_i}(\epsilon_i^{\mathrm{cSynd}})-f_{\theta_i}(0)}{f_{\theta_i}^{(1)}} \\
        &\sim \sum_{s\in\mathcal{S}} p_s\frac{f_{\theta_i}(\epsilon_{i,s})-f_{\theta_i}(0)}{f_{\theta_i}^{(1)}}.
    \end{aligned}
\end{equation}
Here, $\sim$ means equality up to terms that are $o(\epsilon_i^{\mathrm{cSynd}})$.
To obtain these relations, we expand $f_{\theta_i}(\epsilon)$ around $\epsilon=0$ as $f_{\theta_i}(\epsilon) = f_{\theta_i}(0) + f_{\theta_i}^{(1)}\epsilon + O(\epsilon^2)$, and use the definition of $\epsilon_i^{\mathrm{cSynd}}$ in Eq.~\eqref{eq_effective_logical_error_rate_classical}.
Therefore, to analyze the (effective) logical error rate in the low-error regime, it suffices to understand the contribution of each syndrome individually.

We now identify which syndromes can affect the leading-order behavior.
Consider an $[[n,k,d]]$ stabilizer code under a local noise model in which single-qubit Pauli errors occur independently on each physical qubit with probability $\eta$.
We assume that the minimum weight of a logical Pauli operator that anti-commutes with the target logical Pauli operator $\bar{P}_i$ is also $d$, so that $\epsilon_i = \Theta(\eta^{\lfloor (d+1)/2\rfloor})$.
Suppose a syndrome $s$ occurs with probability $p_s=\Theta(\eta^{l})$.
Then, by the definition of code distance, the conditional logical error rate scales as $\epsilon_{i,s}=O(\eta^{\max\{d-2l,0\}})$, and hence $p_s\epsilon_{i,s} = O(\eta^{\max\{d-l,l\}})$.
On the other hand, the logical error rate and the effective logical error rate scale as $\epsilon_i,\ \epsilon_i^{\mathrm{cSynd}} = \Theta(\eta^{\lfloor (d+1)/2\rfloor})$ in the low-error regime.
Therefore, syndromes with $p_s\epsilon_{i,s}=o(\eta^{\lfloor (d+1)/2\rfloor})$ do not contribute to the leading-order term of $\epsilon_i$ or $\epsilon_i^{\mathrm{cSynd}}$.
Consequently, the dominant syndromes are those satisfying
\begin{equation}
    p_s\epsilon_{i,s} = \Theta(\eta^{\lfloor (d+1)/2\rfloor}).
\end{equation}
In other words, $\max\{d-l,l\}=\lfloor (d+1)/2\rfloor$, so we have $l=d/2$ when $d$ is even and $l=(d-1)/2, (d+1)/2$ when $d$ is odd.
In the following, we analyze these dominant contributions separately depending on the parity of the code distance $d$.

\subsubsection{When the code distance $d$ is even}
When the code distance $d$ is even, the condition
$p_s\epsilon_{i,s} = \Theta(\eta^{\lfloor (d+1)/2\rfloor}) = \Theta(\eta^{d/2})$
is satisfied by syndromes for which $l=d/2$, i.e., $p_s = \Theta(\eta^{d/2}) = \Theta(\epsilon_i)$ and $\epsilon_{i,s}=\Theta(1)$.
Since the syndrome-conditioned logical error rate remains $\Theta(1)$ for such syndromes, we denote the set of these syndromes by $\mathcal{S}_{\Theta(1)}\subset\mathcal{S}$.
Then,
\begin{equation}
    \label{eq_eff_err_rate_classical_even}
    \begin{aligned}
        \epsilon_i &\sim \sum_{s\in\mathcal{S}_{\Theta(1)}} p_s\epsilon_{i,s}, \\
        \epsilon_i^{\mathrm{cSynd}} &\sim \sum_{s\in\mathcal{S}_{\Theta(1)}}p_s\frac{f_{\theta_i}(\epsilon_{i,s})-f_{\theta_i}(0)}{f_{\theta_i}^{(1)}},
    \end{aligned}
\end{equation}
where $\sim$ means equality up to terms that are $o(\eta^{d/2})$.

For $s\in\mathcal{S}_{\Theta(1)}$, the conditional logical error rate does not vanish as the physical error rate $\eta$ decreases, but instead converges to a constant, i.e., $\epsilon_{i,s}=\Theta(1)$.
Operationally, observing such a syndrome indicates a persistent decoding ambiguity: there exist at least two error configurations of the same weight of $d/2$ that induce different logical actions.
In syndrome-agnostic protocols, the contributions of these ambiguous syndromes $s\in\mathcal{S}_{\Theta(1)}$ are inseparable, because the syndrome record is discarded.
In contrast, syndrome-aware protocols can apply different classical post-processing conditioned on $s\in\mathcal{S}_{\Theta(1)}$, which can improve the effective logical error rate.
Indeed, for any fixed $\epsilon_{i,s}=\Theta(1)>0$, we have $(f_{\theta_i}(\epsilon_{i,s})-f_{\theta_i}(0))/f_{\theta_i}^{(1)} < \epsilon_{i,s}$ from the convexity of the function $f_{\theta_i}(\epsilon)$, so that $\epsilon_i^{\mathrm{cSynd}}$ becomes strictly smaller than $\epsilon_i$ in the low-error regime.
Because such ambiguous syndromes $s\in\mathcal{S}_{\Theta(1)}$ always exist when $d$ is even, $\epsilon_i^{\mathrm{cSynd}}$ is always strictly smaller than $\epsilon_i$ as $\eta\to0$:
\begin{prop}
    \label{prop_1}
    When the code distance $d$ is even, the effective logical error rate $\epsilon_i^{\mathrm{cSynd}}$ is strictly smaller than the logical error rate $\epsilon_i$ under the maximum-likelihood decoder in the limit $\eta\to0$:
    \begin{equation}
        \frac{1-\theta_i^2}{2}\leq\lim_{\eta\to0} \frac{\epsilon_i^{\mathrm{cSynd}}}{\epsilon_i} < 1.
    \end{equation}
    The equality for $(1-\theta_i^2)/2$ is achieved if and only if $\epsilon_{i,s}=1/2$ for all $s\in\mathcal{S}_{\Theta(1)}$.
\end{prop}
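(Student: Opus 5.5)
We prove Proposition~\ref{prop_1} by reducing both $\epsilon_i$ and $\epsilon_i^{\mathrm{cSynd}}$ to their leading-order expressions in Eq.~\eqref{eq_eff_err_rate_classical_even}. Then we compare the two sums term by term using a single one-variable function,
\begin{equation}
    g(\epsilon) = \frac{f_{\theta_i}(\epsilon)-f_{\theta_i}(0)}{f_{\theta_i}^{(1)}\,\epsilon}, \qquad \epsilon\in(0,1/2].
\end{equation}
Here $f_{\theta_i}^{(1)}=f'_{\theta_i}(0)<0$. Writing $u=(1-2\epsilon)^2$, we have $f_{\theta_i}=u/(1-u\theta_i^2)$, so $df/du=1/(1-u\theta_i^2)^2$ and $f_{\theta_i}^{(1)}=-4/(1-\theta_i^2)^2$.

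Dividing the two expansions in Eq.~\eqref{eq_eff_err_rate_classical_even} and passing to the limit $\eta\to0$ gives
\begin{equation}
    \lim_{\eta\to0}\frac{\epsilon_i^{\mathrm{cSynd}}}{\epsilon_i} = \lim_{\eta\to0}\frac{\sum_{s\in\mathcal{S}_{\Theta(1)}}p_s\epsilon_{i,s}\,g(\epsilon_{i,s})}{\sum_{s\in\mathcal{S}_{\Theta(1)}}p_s\epsilon_{i,s}}.
\end{equation}
This is valid because the neglected terms are $o(\eta^{d/2})$, while the denominator is $\Theta(\eta^{d/2})$. The denominator has this order because $\mathcal{S}_{\Theta(1)}$ is nonempty for even $d$: take a minimum-weight logical operator anticommuting with $\bar{P}_i$ and split it into two halves of weight $d/2$. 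Both halves give the same syndrome, they occur with equal leading-order probability, and they differ by a logical operator, so $\epsilon_{i,s}$ tends to a constant in $(0,1/2]$. The limiting ratio is therefore a weighted average of the values $g(\epsilon_{i,s}^{\ast})$, where $\epsilon_{i,s}^{\ast}=\lim_{\eta\to0}\epsilon_{i,s}\in(0,1/2]$ and the weights are positive. For this step I would assume that the limits of $\epsilon_{i,s}$ and of $p_s/\eta^{d/2}$ exist. This holds for an i.i.d.\ local Pauli model, where $p_s$ and $p_s\epsilon_{i,s}$ are polynomials in $\eta$.

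It then suffices to show three facts about $g$ on $(0,1/2]$:
\begin{itemize}
    \item $g(\epsilon)<1$.
    \item $g$ is minimized at $\epsilon=1/2$, where $g(1/2)=(1-\theta_i^2)/2$.
    \item $g(\epsilon)>g(1/2)$ for every $\epsilon<1/2$.
\end{itemize}
The value at $1/2$ follows directly: $f_{\theta_i}(1/2)=0$ and $f_{\theta_i}(0)=1/(1-\theta_i^2)$, so $g(1/2)=\bigl(-1/(1-\theta_i^2)\bigr)/\bigl(-2/(1-\theta_i^2)^2\bigr)=(1-\theta_i^2)/2$.

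For the first fact, $g<1$ means the secant slope $(f_{\theta_i}(\epsilon)-f_{\theta_i}(0))/\epsilon$ is strictly greater than $f'_{\theta_i}(0)$. Because $f'_{\theta_i}(0)<0$, the inequality flips on division, and the statement follows from strict convexity of $f_{\theta_i}$. This strictness is the convexity argument already used in the text.

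For the second fact, the cleanest route is the same convexity fact: the secant slope from $0$ of a strictly convex function is strictly increasing in $\epsilon$. Since the slope is negative and $f_{\theta_i}^{(1)}<0$, this means $g$ is strictly decreasing on $(0,1/2]$. Hence $g(\epsilon)\geq g(1/2)$, with equality only at $\epsilon=1/2$.

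The main thing to check is strict convexity of $f_{\theta_i}$, including the endpoint $\theta_i=0$, where $f=(1-2\epsilon)^2$ is still strictly convex. This is the step where I would do an explicit second-derivative check. We have $d^2f/du^2=2\theta_i^2/(1-u\theta_i^2)^3\geq0$, $du/d\epsilon=-4(1-2\epsilon)$ and $d^2u/d\epsilon^2=8$. The chain rule then gives $f''(\epsilon)\geq 8/(1-u\theta_i^2)^2>0$.

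Putting the pieces together, the limiting ratio is a positive-weighted average of $g(\epsilon^{\ast}_{i,s})\in[(1-\theta_i^2)/2,1)$. The upper bound is strictly below $1$: each $g(\epsilon^{\ast}_{i,s})<1$, and there are finitely many terms. The lower bound $(1-\theta_i^2)/2$ holds, with equality if and only if every $\epsilon^{\ast}_{i,s}=1/2$ for $s\in\mathcal{S}_{\Theta(1)}$, which gives the equality condition stated in Proposition~\ref{prop_1}.
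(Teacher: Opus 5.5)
Your proposal is correct and follows essentially the same route as the paper: you reduce to the leading-order sums over $\mathcal{S}_{\Theta(1)}$ from Eq.~\eqref{eq_eff_err_rate_classical_even}, then use convexity to bound the per-syndrome ratio $(f_{\theta_i}(\epsilon_{i,s})-f_{\theta_i}(0))/(f_{\theta_i}^{(1)}\epsilon_{i,s})$ between $(1-\theta_i^2)/2$ and $1$, with equality only at $\epsilon_{i,s}=1/2$. Beyond that, you make explicit several points the paper leaves implicit: strict convexity via the second derivative, secant-slope monotonicity for the lower bound and its equality case, nonemptiness of $\mathcal{S}_{\Theta(1)}$ for even $d$, and existence of the limits. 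These tighten the argument rather than change it.
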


We provide a proof of Proposition~\ref{prop_1} in Appendix~\ref{sec_proof_classical_4}.
The strict inequality $\lim_{\eta\to0} \epsilon_i^{\mathrm{cSynd}}/\epsilon_i < 1$ means that, for even-distance codes, syndrome-aware classical post-processing always yields a nontrivial improvement in the small-error regime.
This occurs because ambiguous syndromes $s\in\mathcal{S}_{\Theta(1)}$, for which a finite conditional logical error persists, necessarily appear when $d$ is even.
Moreover, when the decoding ambiguity is maximal, i.e., $\epsilon_{i,s}=1/2$ for all $s\in\mathcal{S}_{\Theta(1)}$, the improvement from using syndrome information is maximized, and the lower bound in Theorem~\ref{thm_1} can be achieved.
An example is the distance-$2$ rotated surface code with parameters $[[4,1,2]]$.

\subsubsection{When the code distance $d$ is odd}
When the code distance $d$ is odd, the syndromes satisfying
$p_s\epsilon_{i,s} = \Theta(\eta^{\lfloor (d+1)/2\rfloor}) = \Theta(\eta^{(d+1)/2})$ can be classified into two classes corresponding to $l=(d+1)/2$ and $l=(d-1)/2$ in $p_s=\Theta(\eta^l)$: those with $p_s = \Theta(\eta^{(d+1)/2}) = \Theta(\epsilon_i)$ and conditional logical error rate $\epsilon_{i,s}=\Theta(1)$, and those with $p_s = \Theta(\eta^{(d-1)/2}) = \Theta(\epsilon_i/\eta)$ and conditional logical error rate $\epsilon_{i,s}=\Theta(\eta)$.
We denote the former set by $\mathcal{S}_{\Theta(1)}\subset\mathcal{S}$ and the latter by $\mathcal{S}_{\Theta(\eta)}\subset\mathcal{S}$.
Unlike the even-distance case, the syndromes $s\in\mathcal{S}_{\Theta(\eta)}$ with $\epsilon_{i,s}=\Theta(\eta)$ also contribute to the leading-order terms of the (effective) logical error rates $\epsilon_i$ and $\epsilon_i^{\mathrm{cSynd}}$.
However, since these syndromes have vanishing conditional logical error rate in the limit $\eta\to0$, they do not lead to an improvement from syndrome awareness.
Indeed, for $s\in\mathcal{S}_{\Theta(\eta)}$ we have $(f_{\theta_i}(\epsilon_{i,s})-f_{\theta_i}(0))/f_{\theta_i}^{(1)} = \epsilon_{i,s}+O(\epsilon_{i,s}^2)$, so their contributions to $\epsilon_i^{\mathrm{cSynd}}$ coincide with those to $\epsilon_i$ at leading order.
Therefore,
\begin{equation}
    \label{eq_eff_err_rate_classical_odd}
    \begin{aligned}
        \epsilon_i &\sim \sum_{s\in\mathcal{S}_{\Theta(\eta)}} p_s\epsilon_{i,s} + \sum_{s\in\mathcal{S}_{\Theta(1)}} p_s\epsilon_{i,s}, \\
        \epsilon_i^{\mathrm{cSynd}} &\sim \sum_{s\in\mathcal{S}_{\Theta(\eta)}} p_s\epsilon_{i,s} + \sum_{s\in\mathcal{S}_{\Theta(1)}}p_s\frac{f_{\theta_i}(\epsilon_{i,s})-f_{\theta_i}(0)}{f_{\theta_i}^{(1)}},
    \end{aligned}
\end{equation}
where $\sim$ denotes equality up to terms that are $o(\eta^{(d+1)/2})$. 
In other words, only the ambiguous syndromes $s\in\mathcal{S}_{\Theta(1)}$ with non-vanishing conditional logical error are responsible for any improvement.
However, such syndromes $s\in\mathcal{S}_{\Theta(1)}$ do not necessarily exist for odd-distance codes.
Therefore, an improvement from using syndrome information at the logical estimation stage is not guaranteed in general.
In fact, we obtain the following proposition.
\begin{prop}
    \label{prop_2}
    When the code distance $d$ is odd, the effective logical error rate $\epsilon_i^{\mathrm{cSynd}}$ is strictly smaller than the logical error rate $\epsilon_i$ under the maximum-likelihood decoder only if $\mathcal{S}_{\Theta(1)}\neq\emptyset$.
    Equivalently,
    \begin{equation}
        \lim_{\eta\to0}\frac{\epsilon_i^{\mathrm{cSynd}}}{\epsilon_i} < 1
    \end{equation}
    holds if and only if $\mathcal{S}_{\Theta(1)}\neq\emptyset$; otherwise, $\lim_{\eta\to0}\epsilon_i^{\mathrm{cSynd}}/\epsilon_i=1$.
\end{prop}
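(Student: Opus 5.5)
We work directly from the leading-order decomposition in Eq.~\eqref{eq_eff_err_rate_classical_odd}. Define the per-syndrome "effective" contribution $g(\epsilon)=(f_{\theta_i}(\epsilon)-f_{\theta_i}(0))/f_{\theta_i}^{(1)}$. Since $f_{\theta_i}$ is decreasing on $[0,1/2]$, $f_{\theta_i}^{(1)}<0$, and by convexity of $f_{\theta_i}$ we get $g(\epsilon)\le\epsilon$, with equality only at $\epsilon=0$ (for $\theta_i\neq\pm1$, where $f_{\theta_i}$ is strictly convex). Also $g(\epsilon)=\epsilon+O(\epsilon^2)$.

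Write $A=\sum_{s\in\mathcal{S}_{\Theta(\eta)}}p_s\epsilon_{i,s}$ and $B=\sum_{s\in\mathcal{S}_{\Theta(1)}}p_s\epsilon_{i,s}$, and set $B'=\sum_{s\in\mathcal{S}_{\Theta(1)}}p_s g(\epsilon_{i,s})$. Then Eq.~\eqref{eq_eff_err_rate_classical_odd} gives, up to $o(\eta^{(d+1)/2})$,
\[
\epsilon_i\sim A+B,\qquad \epsilon_i^{\mathrm{cSynd}}\sim A+B'.
\]
Both $A$ and $B$ are $\Theta(\eta^{(d+1)/2})$ or zero, and $A+B=\Theta(\eta^{(d+1)/2})$ by the assumption on $\epsilon_i$.

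\emph{Case $\mathcal{S}_{\Theta(1)}=\emptyset$.} Here $B=B'=0$, so both quantities equal $A$ up to $o(\eta^{(d+1)/2})$. Dividing by $\epsilon_i=\Theta(\eta^{(d+1)/2})$ gives $\epsilon_i^{\mathrm{cSynd}}/\epsilon_i\to1$.

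\emph{Case $\mathcal{S}_{\Theta(1)}\neq\emptyset$.} For each $s\in\mathcal{S}_{\Theta(1)}$, $\epsilon_{i,s}$ converges to a constant $c_s\in(0,1/2]$. Continuity and strict convexity then give $g(\epsilon_{i,s})\to g(c_s)<c_s$, so $B-B'\ge\delta B$ for a fixed $\delta>0$ and small $\eta$, with $B=\Theta(\eta^{(d+1)/2})$. Hence
\[
\limsup_{\eta\to0}\frac{\epsilon_i^{\mathrm{cSynd}}}{\epsilon_i}\le 1-\delta\,\liminf_{\eta\to0}\frac{B}{A+B}<1.
\]
The set $\mathcal{S}_{\Theta(1)}$ is finite for a fixed code, which keeps $\delta$ uniform.

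The main obstacle is justifying that these limits exist, rather than only limsup/liminf. The expansion in $\eta$ of the finitely many syndrome probabilities and conditional rates of a fixed code under i.i.d.\ Pauli noise gives polynomial leading terms. So each $p_s/\eta^{l}$ and $\epsilon_{i,s}$ has a limit, which makes the ratio converge. I would state this explicitly and also note the $\theta_i\neq\pm1$ caveat needed for strict convexity.
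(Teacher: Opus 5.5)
Your proposal is correct and follows the paper's proof. Both start from the leading-order decomposition in Eq.~\eqref{eq_eff_err_rate_classical_odd} and use convexity of $f_{\theta_i}$ to get $(f_{\theta_i}(\epsilon_{i,s})-f_{\theta_i}(0))/(f_{\theta_i}^{(1)}\epsilon_{i,s})<1$ for every $s\in\mathcal{S}_{\Theta(1)}$, so the ratio falls below one exactly when such syndromes exist. You also supply details the paper leaves implicit: a uniform $\delta$ from the finiteness of $\mathcal{S}_{\Theta(1)}$, existence of the limit from the rational dependence on $\eta$, and the $\theta_i\neq\pm1$ caveat.
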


\begin{figure*}[t]
    \begin{center}
        \includegraphics[width=0.9\linewidth]{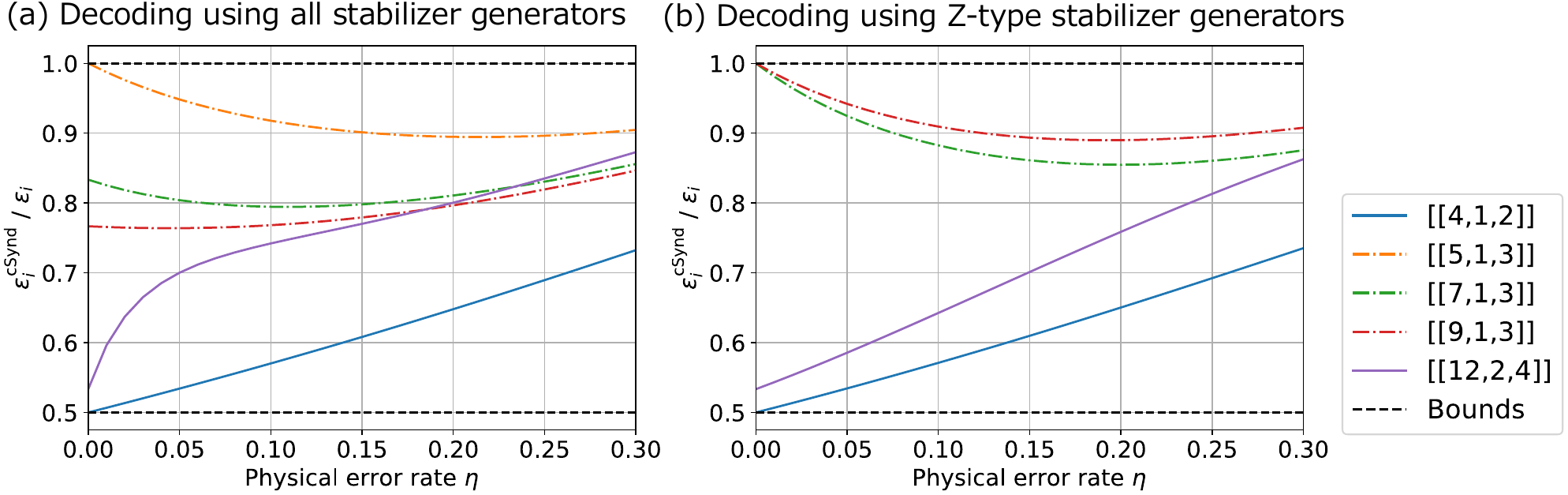}
        \caption{Ratio $\epsilon_i^{\mathrm{cSynd}}/\epsilon_i$ as a function of the physical error rate $\eta$ for several stabilizer codes, where $\epsilon_i$ is the logical error rate under the maximum-likelihood decoder and $\epsilon_i^{\mathrm{cSynd}}$ is the effective logical error rate of classical syndrome-aware protocols. We consider the $[[4,1,2]]$ rotated surface code, $[[5,1,3]]$ perfect code, $[[7,1,3]]$ Steane code, $[[9,1,3]]$ rotated surface code, and $[[12,2,4]]$ carbon code~\cite{paetznick2024demonstration}. Solid (dash-dotted) lines correspond to even-distance (odd-distance) codes, and dashed lines indicate the universal bounds $(1-\theta_i^2)/2\leq \epsilon_i^{\mathrm{cSynd}}/\epsilon_i \leq 1$ from Theorem~\ref{thm_1}, evaluated at $\theta_i=0$. (a) Results obtained using syndrome outcomes from all stabilizer generators. (b) Results obtained when only syndrome outcomes from $Z$-type stabilizer generators are used to correct bit-flip errors. Note that the $[[5,1,3]]$ perfect code is not shown in this figure, since it is not a CSS code.}
        \label{fig_numerics_ratio_smallcode}
    \end{center}
\end{figure*}

We provide the proof of Proposition~\ref{prop_2} in Appendix~\ref{sec_proof_classical_5}.
Proposition~\ref{prop_2} gives a sharp criterion for when classical syndrome awareness yields a nontrivial improvement in the low-error regime $\eta\to0$: $\epsilon_i^{\mathrm{cSynd}}<\epsilon_i$ holds if and only if $\mathcal{S}_{\Theta(1)}\neq\emptyset$.
Most odd-distance quantum codes, including the Steane code and odd-distance rotated surface codes, satisfy this condition.
Meanwhile, the $[[5,1,3]]$ perfect code does not satisfy this condition because all error syndromes occur with $p_s=O(1)$ or $p_s=O(\eta)$, and thus $\epsilon_{i,s}=O(\eta^2)$ or $\epsilon_{i,s}=O(\eta)$.
Therefore, we cannot obtain an advantage for such a code.

Another important example that violates the condition in Proposition~\ref{prop_2} arises when decoding bit-flip errors in CSS codes, especially when an arbitrary $X$-type logical observable has odd weight.
In CSS codes, it is common to decode bit-flip errors using only $Z$-type stabilizer measurements.
Under such a decoding strategy, we have $\mathcal{S}_{\Theta(1)}=\emptyset$ whenever an arbitrary $X$-type logical observable has odd weight.
This is because $\mathcal{S}_{\Theta(1)}\neq\emptyset$ would imply the existence of syndromes occurring with $p_s=\Theta(\eta^{(d+1)/2})$ and having conditional logical error rate $\epsilon_{i,s}=\Theta(1)$.
Equivalently, there must exist two distinct weight-$(d+1)/2$ bit-flip errors that induce different logical actions, which in turn would require the existence of an even-distance $X$-type logical operator.
Examples exhibiting this behavior include the $[[7,1,3]]$ Steane code and odd-distance rotated surface codes.
This means that whether the effective logical error rate $\epsilon_i^{\mathrm{cSynd}}$ is strictly smaller than the original logical error rate $\epsilon_i$ depends on the decoding strategy: even though we can obtain an advantage from using the error syndrome when decoding using both $X$-type and $Z$-type stabilizer measurements, the advantage is lost when we restrict to using only $Z$-type stabilizer measurements.

\subsection{Numerical analysis}
\label{sec_classical_4}
In this subsection, we numerically evaluate the improvement achievable by classical syndrome-aware protocols for several representative quantum error-correcting codes.
Specifically, we compute the ratio of the effective logical error rate $\epsilon_i^{\mathrm{cSynd}}$ to the logical error rate $\epsilon_i$ under the maximum-likelihood decoder.
We consider the $[[4,1,2]]$ rotated surface code, the $[[5,1,3]]$ perfect code, the $[[7,1,3]]$ Steane code, the $[[9,1,3]]$ rotated surface code, and the $[[12,2,4]]$ carbon code~\cite{paetznick2024demonstration}.
As a noise model, we assume local depolarizing noise acting independently on each physical qubit with physical error rate $\eta$.
We take the target observable $\bar{P}_i$ to be the logical Pauli-$Z$ operator on the first logical qubit and set $\theta_i=0$.
For CSS codes, we additionally consider a restricted decoding strategy in which bit-flip errors are decoded using only the syndrome outcomes of $Z$-type stabilizer generators.

The results are shown in Fig.~\ref{fig_numerics_ratio_smallcode}.
For all codes, we confirm that the ratio $\epsilon_i^{\mathrm{cSynd}}/\epsilon_i$ satisfies the universal bounds established in Theorem~\ref{thm_1}.
We also observe a qualitative difference between even- and odd-distance codes, consistent with the discussion in Sec.~\ref{sec_classical_3}.
For even-distance codes, the dominant contribution to the (effective) logical error rate arises from ambiguous syndromes $s\in\mathcal{S}_{\Theta(1)}$ with non-vanishing conditional logical error probabilities, whose impact can be reduced by syndrome-aware classical post-processing at the estimation stage.
Accordingly, the ratio approaches a value close to the lower bound in Theorem~\ref{thm_1}.
Moreover, for the $[[4,1,2]]$ rotated surface code, we have $\epsilon_{i,s}=1/2$ for all $s\in\mathcal{S}_{\Theta(1)}$, and thus the lower bound is attained, as discussed in Proposition~\ref{prop_1}.

By contrast, for odd-distance codes, syndromes $s\in\mathcal{S}_{\Theta(\eta)}$ with vanishing conditional logical error probabilities can also contribute at leading order to $\epsilon_i$ and $\epsilon_i^{\mathrm{cSynd}}$, and their impact cannot be reduced by syndrome-aware classical post-processing.
As a result, the ratio $\epsilon_i^{\mathrm{cSynd}}/\epsilon_i$ for odd-distance codes is typically larger than that for even-distance codes.
In particular, for the $[[5,1,3]]$ perfect code, we have $\mathcal{S}_{\Theta(1)}=\emptyset$ because the code is non-degenerate, and correspondingly we observe $\epsilon_i^{\mathrm{cSynd}}/\epsilon_i \to 1$, in agreement with Proposition~\ref{prop_2}.
Moreover, under the restricted decoding strategy that uses only $Z$-type stabilizer generators, the ratios also approach unity for the $[[7,1,3]]$ Steane code and the $[[9,1,3]]$ rotated surface code, as discussed in Sec.~\ref{sec_classical_3}.

\begin{figure*}[t]
    \begin{center}
        \includegraphics[width=0.7\linewidth]{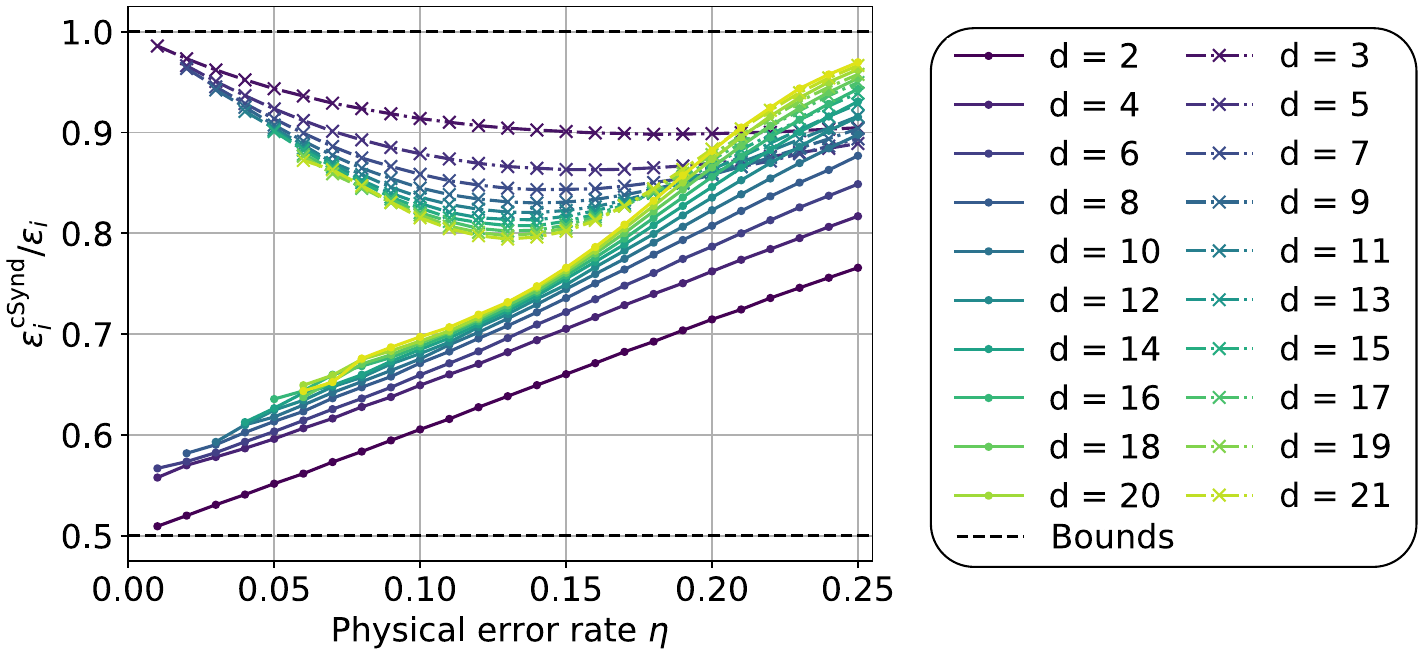}
        \caption{Ratio $\epsilon_i^{\mathrm{cSynd}}/\epsilon_i$ as a function of the physical error rate $\eta$ for rotated surface codes, where $\epsilon_i$ is the logical error rate under the minimum-weight perfect matching decoder and $\epsilon_i^{\mathrm{cSynd}}$ is the effective logical error rate of classical syndrome-aware protocols conditioned on the complementary gap. Solid (dash-dotted) lines correspond to even-distance (odd-distance) codes, and dashed lines indicate the universal bounds $(1-\theta_i^2)/2\leq \epsilon_i^{\mathrm{cSynd}}/\epsilon_i \leq 1$ from Theorem~\ref{thm_1}, evaluated at $\theta_i=0$.}
        \label{fig_numerics_ratio_surfacecode}
    \end{center}
\end{figure*}

So far, we have mainly focused on analyses based on the maximum-likelihood decoder, which is required to ensure the assumption $\epsilon_{i,s}\leq 1/2$ for all $s\in\mathcal{S}$.
However, performing maximum-likelihood decoding---i.e., finding an optimal recovery operation for each syndrome $s\in\mathcal{S}$---is known to be \#P-hard~\cite{iyer2015hardness, fuentes2021degeneracy}.
Moreover, to implement the optimal classical syndrome-aware protocol in Eq.~\eqref{eq_estimator_classical_syndrome_aware}, one needs to learn or calculate the syndrome-conditioned logical error rates $\epsilon_{i,s}$, whose number may grow exponentially with the code size $n$.
For these reasons, the optimal syndrome-agnostic protocols (based on maximum-likelihood decoding) and the optimal classical syndrome-aware protocols (based on learning an exponential number of conditional logical error rates), which form the basis of our theoretical framework, may require excessive resources for actual experiments.

In practice, one instead employs decoders such as minimum-weight perfect matching~\cite{dennis2002topological, higgott2025sparse} or belief-propagation decoders followed by post-processing~\cite{panteleev2021degenerate, roffe2020decoding, muller2025improved, tsubouchi2025degeneracy}.
In such cases, because the decoder may fail to identify the optimal recovery for some syndromes, one can have $\epsilon_{i,s}>1/2$ for certain $s\in\mathcal{S}$.
Furthermore, rather than applying syndrome-dependent post-processing separately for all $s\in\mathcal{S}$, one may coarse-grain the syndrome space by partitioning it into $M$ disjoint subsets as $\mathcal{S}=\bigsqcup_{j=1}^M \mathcal{S}_j$, and perform syndrome-aware estimation based only on which group $\mathcal{S}_j$ contains the observed syndrome $s$.
In this setting, the syndrome-aware protocol is modeled as an estimation procedure on the classical--quantum state
\begin{equation}
    \sum_{j=1}^M p_{\mathcal{S}_j} \ketbra{\mathcal{S}_j} \otimes \overline{\mathcal{N}}_{\mathcal{S}_j}(\bar{\rho}(\vb*{\theta})),
\end{equation}
where $p_{\mathcal{S}_j}=\sum_{s\in\mathcal{S}_j}p_s$ is the probability of observing a syndrome in $\mathcal{S}_j$, $\{\ket{\mathcal{S}_j}\}_{j=1}^M$ is an orthonormal basis of a classical register representing the coarse-grained syndrome outcomes, and $\overline{\mathcal{N}}_{\mathcal{S}_j} = \sum_{s\in\mathcal{S}_j} p_s \overline{\mathcal{N}}_{s}/p_{\mathcal{S}_j}$ is the $\mathcal{S}_j$-conditioned logical noise channel.
For the classical protocol, this corresponds to estimation based on the joint distribution
\begin{equation}
    \label{eq_simplified_distribution}
    \mathrm{Pr}^{\mathrm{Synd}}(x,\mathcal{S}_j|\theta_i) = p_{\mathcal{S}_j} \mathrm{Pr}^{\mathrm{L}}(x|\mathcal{S}_j,\theta_i),
\end{equation}
where
\begin{equation}
    \mathrm{Pr}^{\mathrm{L}}(x|\mathcal{S}_j,\theta_i) = \frac{1+x(1-2\epsilon_{i,\mathcal{S}_j})\theta_i}{2}
\end{equation}
is the outcome distribution conditioned on the group $\mathcal{S}_j$, and
$\epsilon_{i,\mathcal{S}_j} = \sum_{s\in\mathcal{S}_j} p_s\epsilon_{i,s}/p_{\mathcal{S}_j}$ is the corresponding conditional logical error rate.
By choosing the number of groups $M$ to be tractable, one can estimate all $\epsilon_{i,\mathcal{S}_j}$ in advance and perform syndrome-aware estimation with reasonable computational cost.

We now evaluate the effectiveness of such practical syndrome-aware protocols.
We consider the rotated surface code $[[d^2,1,d]]$ for various distances $d$ and decode using minimum-weight perfect matching.
As a noise model, we assume local depolarizing noise acting independently on each physical qubit with physical error rate $\eta$.
We take the target observable $\bar{P}_i$ to be the logical Pauli-$Z$ operator and set $\theta_i=0$.
We decode bit-flip errors using only the syndrome outcomes of $Z$-type stabilizer generators.
We then coarse-grain syndromes according to the complementary gap~\cite{hutter2014efficient, bombin2024fault, gidney2025yoked, smith2024mitigating, meister2024efficient} as $\mathcal{S}=\bigsqcup_{j=1}^M \mathcal{S}_j$, where $\mathcal{S}_j$ denotes the set of syndromes with the same complementary gap, and apply classical post-processing conditioned on this label, which is modeled by Eq.~\eqref{eq_simplified_distribution}.
We use Stim~\cite{gidney2021stim} and PyMatching~\cite{higgott2025sparse} for the simulation.

In Fig.~\ref{fig_numerics_ratio_surfacecode}, we plot the ratio between the logical error rate $\epsilon_i$ under minimum-weight perfect matching and the effective logical error rate $\epsilon_i^{\mathrm{cSynd}}$ defined via Eq.~\eqref{eq_simplified_distribution}.
Because the group-conditioned logical error rates $\epsilon_{i,\mathcal{S}_j}$ can exceed $1/2$, the assumptions underlying Theorem~\ref{thm_1} and Propositions~\ref{prop_1}--\ref{prop_2} do not necessarily hold a priori.
Nevertheless, we observe numerically that the same qualitative conclusions persist in this practical scenario.
In particular, the ratio $\epsilon_i^{\mathrm{cSynd}}/\epsilon_i$ never drops below $1/2$; for even-distance codes, we consistently obtain an improvement $\epsilon_i^{\mathrm{cSynd}}/\epsilon_i<1$ in the low-error regime; and for odd-distance codes, the ratio approaches $1$, indicating no improvement in the low-error regime from using syndrome information at the logical estimation stage.
Therefore, we conclude that our theoretical results are likely to hold even for such practical setups.

\section{Exponential advantages in quantum syndrome-aware protocols}
\label{sec_quantum}
In this section, we turn to \emph{quantum} syndrome-aware estimation protocols, in which syndrome-conditioned logical measurement can be tailored to the observed syndrome (Fig.~\ref{fig_syndrome_aware_protocols}(c)).
That is, after observing the error syndrome $s$, one may choose a syndrome-dependent logical measurement basis before extracting classical data.
This additional adaptivity fundamentally changes the information extractable from logical measurements and allows one to surpass the constant-factor limitations for classical syndrome-aware protocols established in Theorem~\ref{thm_1}.

Our analysis is based on the quantum Fisher information of the classical--quantum state $\sum_s p_s\ketbra{s}\otimes\overline{\mathcal{N}}_s(\bar{\rho}(\vb*{\theta}))$, defined in Eq.~\eqref{eq_classical_quantum_state}.
Analogously to the classical case, we define the \emph{effective logical error rate} $\epsilon_i^{\mathrm{qSynd}}$ of a quantum syndrome-aware protocol as the logical error rate such that the syndrome-agnostic Fisher information $f_{\theta_i}(\epsilon_i^{\mathrm{qSynd}})$ matches the quantum Fisher information of the classical--quantum state.
In Sec.~\ref{sec_quantum_1}, we derive the quantum Fisher information and the corresponding effective logical error rate in our setup.
In Sec.~\ref{sec_quantum_2}, we show that the effective logical error rate can be exponentially smaller than the logical error rate, yielding an exponential separation between classical and quantum syndrome-aware protocols.
In Sec.~\ref{sec_quantum_4}, we numerically investigate the generality of our results.
Finally, in Sec.~\ref{sec_quantum_3}, we discuss an optimal estimation scheme that achieves the exponential advantage.

\subsection{Quantum Fisher information and effective logical error rate}
\label{sec_quantum_1}
As discussed in Sec.~\ref{sec_problem_setup}, a syndrome-aware estimation protocol acts on the classical--quantum state
$\sum_s p_s \ketbra{s} \otimes \overline{\mathcal{N}}_s(\bar{\rho}(\vb*{\theta}))$, defined in Eq.~\eqref{eq_classical_quantum_state}.
In quantum syndrome-aware protocols, unlike the classical case, one is allowed to perform an arbitrary quantum measurement on the classical--quantum state.
Accordingly, a quantum syndrome-aware estimation protocol can be modeled as the task of constructing an unbiased estimator $\theta_i^{\mathrm{est}}$ of $\theta_i$ from $N$ copies of $\sum_s p_s \ketbra{s} \otimes \overline{\mathcal{N}}_s(\bar{\rho}(\vb*{\theta}))$.
Thus, the ultimate precision and sampling overhead of quantum syndrome-aware protocols are characterized by the quantum Fisher information of this classical--quantum state.

Let us define the covariance matrix $C(\vb*{\theta})$ of the ideal logical state $\bar{\rho}(\vb*{\theta})$ by
\begin{equation}
    \label{eq_covariance_matrix}
    C(\vb*{\theta})_{ij}
    = \mathrm{tr}\qty[\bar{\rho}(\vb*{\theta})\frac{1}{2}\{\bar{P}_i,\bar{P}_j\}]
    - \mathrm{tr}\qty[\bar{\rho}(\vb*{\theta})\bar{P}_i]\,
      \mathrm{tr}\qty[\bar{\rho}(\vb*{\theta})\bar{P}_j].
\end{equation}
Using $C(\vb*{\theta})$, the quantum Fisher information matrix associated with the syndrome-conditioned noisy logical state
$\overline{\mathcal{N}}_s(\bar{\rho}(\vb*{\theta}))$ can be written as
$\Lambda_{s} C(\Lambda_s\vb*{\theta})^{-1}\Lambda_{s}$
(see Appendix~\ref{sec_estimation_theory_multiqubit} or Ref.~\cite{watanabe2010optimal} for the derivation).
Here, $\Lambda_s=\mathrm{diag}(1-2\epsilon_{1,s},\ldots,1-2\epsilon_{4^{k}-1,s})$ is the Pauli transfer matrix of $\overline{\mathcal{N}}_s$ defined as a $(4^{k}-1)$-dimensional diagonal matrix describing the action of $\overline{\mathcal{N}}_s$ on the generalized Bloch vector, i.e.,
$\overline{\mathcal{N}}_s(\bar{\rho}(\vb*{\theta}))=\bar{\rho}(\Lambda_s\vb*{\theta})$.
Therefore, from Eq.~\eqref{eq_QFIM_classical_quantum}, the quantum Fisher information matrix for the classical--quantum state
$\sum_s p_s \ketbra{s} \otimes \overline{\mathcal{N}}_s(\bar{\rho}(\vb*{\theta}))$ is
\begin{equation}
    \label{eq_qfim_syndrome}
    J^{\mathrm{Synd}}
    = \sum_{s\in\mathcal{S}} p_s \Lambda_{s} C(\Lambda_s\vb*{\theta})^{-1}\Lambda_{s}.
\end{equation}
In particular, the quantum Fisher information for the parameter $\theta_i$ is
\begin{equation}
    \label{eq_qfim_syndrome_2}
    J_{\theta_i}^{\mathrm{Synd}}
    = \frac{1}{\vb*{e}_i^{\mathrm{T}}\bigl(\sum_{s\in\mathcal{S}} p_s \Lambda_{s} C(\Lambda_s\vb*{\theta})^{-1}\Lambda_{s}\bigr)^{-1}\vb*{e}_i},
\end{equation}
where $\vb*{e}_i\in\mathbb{R}^{4^k-1}$ is a unit vector whose $j$-th element is $1$ if $j=i$ and $0$ otherwise.
From the monotonicity of the quantum Fisher information matrix (equivalently, since the quantum Fisher information is the maximum Fisher information over all measurements, and restricting the measurement basis cannot increase the Fisher information), we obtain
\begin{equation}
    J_{\theta_i}^{\mathrm{L}} = F_{\theta_i}^{\mathrm{L}}
    \leq F_{\theta_i}^{\mathrm{Synd}}
    \leq J_{\theta_i}^{\mathrm{Synd}}.
\end{equation}
This shows that allowing syndrome-conditioned quantum measurement can improve estimation precision (equivalently, reduce the required sampling overhead) compared to syndrome-agnostic and classical syndrome-aware protocols.
In other words, the effective impact of logical noise can be reduced by exploiting syndrome-conditioned quantum measurement.

To quantify this improvement, we define the \emph{effective logical error rate} $\epsilon_i^{\mathrm{qSynd}}$ for quantum syndrome-aware protocols by
\begin{equation}
    \label{eq_quantum_effective_error_rate}
    f_{\theta_i}(\epsilon_i^{\mathrm{qSynd}}) = J_{\theta_i}^{\mathrm{Synd}}.
\end{equation}
Equivalently, $\epsilon_i^{\mathrm{qSynd}}$ is the logical error rate of a syndrome-agnostic protocol that would achieve the same quantum Fisher information as the quantum syndrome-aware protocol.
By comparing $\epsilon_i^{\mathrm{qSynd}}$ with the original logical error rate $\epsilon_i$, we quantify how much the impact of noise can be reduced by allowing syndrome-conditioned quantum measurement.

\subsection{Exponential reduction in the effective logical error rate}
\label{sec_quantum_2}
As defined in the previous subsection, the effective logical error rate $\epsilon_i^{\mathrm{qSynd}}$ quantifies the residual effect of errors inherent to quantum syndrome-aware protocols.
By analyzing how much $\epsilon_i^{\mathrm{qSynd}}$ can be reduced relative to the original logical error rate $\epsilon_i$, we can assess how incorporating syndrome information at the logical measurement stage mitigates the impact of noise.
In the classical case, Theorem~\ref{thm_1} established a universal lower bound showing that the effective logical error rate $\epsilon_i^{\mathrm{cSynd}}$ can improve over $\epsilon_i$ only by a constant factor.
In contrast, once syndrome-conditioned quantum measurement is allowed, such a limitation no longer holds, and $\epsilon_i^{\mathrm{qSynd}}$ can be reduced exponentially with the number of code blocks; this is proven in the low-noise limit for even-distance codes in Theorem~\ref{thm_2}, and is further demonstrated numerically for odd-distance codes at moderate physical error rates in Sec.~\ref{sec_quantum_4}.

For a quantitative analysis, we consider the low-error regime, as in Sec.~\ref{sec_classical_3}.
Consider an $[[n,k,d]]$ stabilizer code  under a local noise model in which single-qubit Pauli errors occur independently on each physical qubit with probability $\eta$.
We assume that the minimum weight of a logical Pauli operator that anti-commutes with the target logical Pauli operator $\bar{P}_i$ is also $d$, so that $\epsilon_i = \Theta(\eta^{\lfloor (d+1)/2\rfloor})$.
In the low-error regime, we can expand the effective logical error rate $\epsilon_i^{\mathrm{qSynd}}$ in terms of syndrome contributions as
\begin{equation}
    \label{eq_qeler_expansion}
    \begin{aligned}
        \epsilon_i^{\mathrm{qSynd}}
        &\sim \frac{f_{\theta_i}(\epsilon_i^{\mathrm{qSynd}})^{-1}-f_{\theta_i}(0)^{-1}}{4} \\
        &=  \frac{\vb*{e}_i^{\mathrm{T}}(\sum_{s} p_s \Lambda_{s} C(\Lambda_s\vb*{\theta})^{-1}\Lambda_{s})^{-1}\vb*{e}_i - \vb*{e}_i^{\mathrm{T}}C(\vb*{\theta})\vb*{e}_i}{4} \\
        &\sim \sum_{s\in\mathcal{S}}p_s \Delta_i(\overline{\mathcal{N}}_s).
    \end{aligned}
\end{equation}
Here, $\sim$ denotes equality up to terms that are $o(\epsilon_i^{\mathrm{qSynd}}) = o(\epsilon_i) = o(\eta^{\lfloor (d+1)/2\rfloor}) $, and
\begin{equation}
\label{eq_delta}
\Delta_i(\overline{\mathcal{N}}_s) = \frac{\vb*{e}_i^{\mathrm{T}}\qty(C(\vb*{\theta}) - C(\vb*{\theta})\Lambda_sC(\Lambda_s\vb*{\theta})^{-1}\Lambda_sC(\vb*{\theta}))\vb*{e}_i}{4}
\end{equation}
represents the contribution from syndrome $s$.
To obtain the first relation in Eq.~\eqref{eq_qeler_expansion}, we use $f_{\theta_i}(\epsilon)^{-1}=(1-2\epsilon)^{-2}-\theta_i^2 = 1-\theta_i^2+4\epsilon+O(\epsilon^2)$.
For the second relation, we use the definition of the effective logical error rate in Eq.~\eqref{eq_quantum_effective_error_rate}, together with $f_{\theta_i}(0)^{-1} = 1-\theta_i^2 = \vb*{e}_i^{\mathrm{T}}C(\vb*{\theta})\vb*{e}_i$.
For the third relation, we use $\sum_s p_s\Lambda_{s} C(\Lambda_s\vb*{\theta})^{-1}\Lambda_{s}-C(\vb*{\theta})^{-1} = O(\epsilon_i)$ and the expansion $(A+B)^{-1} \sim A^{-1} - A^{-1}BA^{-1}$ for square matrices $A$ and $B$ when $A\gg B$.

As in the classical case, the low-error behavior of $\epsilon_i$ and $\epsilon_i^{\mathrm{qSynd}}$ depends on the parity of the distance $d$.
Suppose a syndrome $s$ occurs with probability $p_s=\Theta(\eta^{l})$.
Then, by the definition of code distance, the conditional logical error rate scales as $\epsilon_{i,s}=O(\eta^{\max\{d-2l,0\}})$, and hence $p_s\epsilon_{i,s} = O(\eta^{\max\{d-l,l\}})$.
Note that we also have $p_s\Delta_i(\overline{\mathcal{N}}_s)=O(\eta^{\max\{d-l,l\}})$.
On the other hand, in the low-error regime we have $\epsilon_i,\ \epsilon_i^{\mathrm{qSynd}} = \Theta(\eta^{\lfloor (d+1)/2\rfloor})$.
Therefore, syndromes with $p_s\epsilon_{i,s}=o(\eta^{\lfloor (d+1)/2\rfloor})$ do not contribute to the leading-order term of either $\epsilon_i$ or $\epsilon_i^{\mathrm{qSynd}}$.
Consequently, the dominant syndromes are those satisfying
\begin{equation}
p_s\epsilon_{i,s} = O(\eta^{\max\{d-l,l\}})= \Theta(\eta^{\lfloor (d+1)/2\rfloor}),
\end{equation}
which can be characterized as follows.
\begin{itemize}
\item When $d$ is even: Syndromes $s$ with $l=d/2$, i.e., $p_s = \Theta(\eta^{d/2}) = \Theta(\epsilon_i)$ and $\epsilon_{i,s}=\Theta(1)$ (denoted by $\mathcal{S}_{\Theta(1)}$) constitute the dominant terms.
The leading-order contributions are
\begin{equation}
\label{eq_eff_err_rate_even}
\begin{aligned}
\epsilon_i &\sim \sum_{s\in\mathcal{S}_{\Theta(1)}} p_s\epsilon_{i,s}, \\
\epsilon_i^{\mathrm{qSynd}} &\sim \sum_{s\in\mathcal{S}_{\Theta(1)}}p_s\Delta_i(\overline{\mathcal{N}}_s),
\end{aligned}
\end{equation}
up to terms of order $o(\eta^{\lfloor (d+1)/2\rfloor})$.
\item When $d$ is odd: There are two types of dominant syndromes: (i) $s$ with $l=(d+1)/2$, i.e., $p_s = \Theta(\eta^{(d+1)/2}) = \Theta(\epsilon_i)$ and $\epsilon_{i,s}=\Theta(1)$ (again denoted by $\mathcal{S}_{\Theta(1)}$), and (ii) $s$ with $l=(d-1)/2$, i.e., $p_s = \Theta(\eta^{(d-1)/2}) = \Theta(\epsilon_i/\eta)$ and $\epsilon_{i,s}=\Theta(\eta)$ (denoted by $\mathcal{S}_{\Theta(\eta)}$).
The leading-order contributions are
\begin{equation}
\label{eq_eff_err_rate_odd}
\begin{aligned}
\epsilon_i &\sim \sum_{s\in\mathcal{S}_{\Theta(\eta)}} p_s\epsilon_{i,s} + \sum_{s\in\mathcal{S}_{\Theta(1)}} p_s\epsilon_{i,s}, \\
\epsilon_i^{\mathrm{qSynd}} &\sim \sum_{s\in\mathcal{S}_{\Theta(\eta)}} p_s\epsilon_{i,s} + \sum_{s\in\mathcal{S}_{\Theta(1)}}p_s\Delta_i(\overline{\mathcal{N}}_s).
\end{aligned}
\end{equation}
Here, we use $p_s\Delta_i(\overline{\mathcal{N}}_s) = p_s(\epsilon_{i,s} + O(\epsilon_{i,s}^2)) = p_s\epsilon_{i,s} + o(\eta^{(d+1)/2})$ for $s\in\mathcal{S}_{\Theta(\eta)}$.
\end{itemize}

As in the classical case, we see that error syndromes with non-vanishing conditional logical error rates, $s\in\mathcal{S}_{\Theta(1)}$, are responsible for the improvement in syndrome-aware protocols.
In particular, when the conditional noise channel $\overline{\mathcal{N}}_s$ affects only a small subset of the logical qubits, the contribution $\Delta_i(\overline{\mathcal{N}}_s)$ of the corresponding syndrome $s\in\mathcal{S}_{\Theta(1)}$ to the effective logical error rate $\epsilon_i^{\mathrm{qSynd}}$ becomes exponentially small:
\begin{lem}
    \label{lem_1}
    Let $s$ be an error syndrome of an $[[n,k,d]]$ stabilizer code such that the conditional logical noise channel $\overline{\mathcal{N}}_s$ only affects $k' < k/2$ logical qubits.
    Then, when the ideal logical quantum state $\bar{\rho}(\vb*{\theta})=\ketbra{\psi}$ is drawn from the $k$-qubit Haar measure, the average contribution of the syndrome $s$ to the effective logical error rate satisfies
    \begin{equation}
        \mathbb{E}_{\mathrm{Haar}}[\Delta_i(\overline{\mathcal{N}}_s)] \leq \frac{1}{2^{k-2k'+1}}.
    \end{equation}
\end{lem}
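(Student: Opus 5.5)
The plan is to turn $\Delta_i(\overline{\mathcal{N}}_s)$ into a minimum over observables, and then exhibit one good observable: one supported only on the logical qubits that $\overline{\mathcal{N}}_s$ does not touch.

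\textbf{Step 1: variational form of $\Delta_i$.} Write $\Lambda=\Lambda_s$, $C=C(\vb*{\theta})$ and $C'=C(\Lambda\vb*{\theta})$. Completing the square gives
\begin{equation}
\vb*{e}_i^{\mathrm{T}}C\Lambda C'^{-1}\Lambda C\vb*{e}_i=\max_{\vb*{v}}\bigl(2\vb*{v}^{\mathrm{T}}\Lambda C\vb*{e}_i-\vb*{v}^{\mathrm{T}}C'\vb*{v}\bigr),
\end{equation}
and hence
\begin{equation}
4\Delta_i=\min_{\vb*{v}}\bigl(\vb*{e}_i^{\mathrm{T}}C\vb*{e}_i-2\vb*{v}^{\mathrm{T}}\Lambda C\vb*{e}_i+\vb*{v}^{\mathrm{T}}C'\vb*{v}\bigr).
\end{equation}
To read this operationally, set $\bar{O}=\sum_j v_j\bar{P}_j$, an arbitrary traceless Hermitian operator. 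Then:
\begin{itemize}
\item $\vb*{v}^{\mathrm{T}}C'\vb*{v}=\mathrm{Var}_{\overline{\mathcal{N}}_s(\rho)}(\bar{O})$;
\item $\vb*{v}^{\mathrm{T}}\Lambda C\vb*{e}_i=\mathrm{Cov}_\rho\bigl(\overline{\mathcal{N}}_s^\dagger(\bar{O}),\bar{P}_i\bigr)$, using that $\overline{\mathcal{N}}_s$ is a self-adjoint Pauli channel.
\end{itemize}
Consequently $4\Delta_i$ is at most the value of this quadratic form at any single chosen $\bar{O}$. If $\bar{O}$ is supported on logical qubits where $\overline{\mathcal{N}}_s$ acts trivially, then $\overline{\mathcal{N}}_s^\dagger(\bar{O})=\bar{O}$. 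In addition the reduced state on those qubits is unchanged, so the variance term becomes $\mathrm{Var}_\rho(\bar{O})$. The expression then collapses to
\begin{equation}
4\Delta_i\le \mathrm{Var}_\rho(\bar{P}_i-\bar{O})\le \bigl\|(\bar{P}_i-\bar{O})\ket{\psi}\bigr\|^2 .
\end{equation}

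\textbf{Step 2: choice of $\bar{O}$.} Let $A$ be the $k'$ affected qubits and $B$ the remaining $k-k'$ qubits, and write $\bar{P}_i=\bar{P}_A\otimes\bar{P}_B$. We want an operator $\bar{O}_B$ on $B$ with $\bar{O}_B\ket{\psi}\approx\bar{P}_i\ket{\psi}$. For a Haar-random state with $d_A=2^{k'}\ll d_B$, the state is nearly maximally entangled across $A|B$, so an operator on $A$ can be "mirrored" onto $B$. Concretely, I take
\begin{equation}
\bar{X}_B=d_A\,\mathrm{tr}_A\bigl[(\bar{P}_A\otimes\bar{I})\ketbra{\psi}\bigr],
\end{equation}
which is exactly the mirror of $\bar{P}_A$ when $\ket{\psi}$ is maximally entangled. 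I then set $\bar{O}$ to be the traceless Hermitian part of $\bar{P}_B\bar{X}_B$, or of its symmetrization. Hermiticity and the trace can be fixed at no cost, because the variance form is insensitive to adding multiples of the identity and is convex.

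\textbf{Step 3: Haar averaging (main obstacle).} The remaining task is to bound $\mathbb{E}_{\mathrm{Haar}}\|(\bar{P}_i-\bar{O})\ket{\psi}\|^2$. This quantity is a polynomial of degree at most $3$ in $\ket{\psi}$ and in $\bra{\psi}$, so it can be evaluated with the Haar moment formulas $\mathbb{E}[\ketbra{\psi}^{\otimes t}]=\Pi_{\mathrm{sym}}/\binom{d+t-1}{t}$ for $t\le3$. Expanding, the cross terms reduce to traces of products of $\bar{P}_A$, $\bar{P}_B$ and swap operators. The target is a bound of order $d_A^2/d=2^{2k'-k}$, which yields $\mathbb{E}\Delta_i\le 2^{2k'-k}/2$.

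This is where I expect the difficulty. Getting the exact constant $1/2$ requires careful bookkeeping of the permutation sums and may force a different normalization of $\bar{X}_B$. If the third-moment computation becomes unwieldy, my fallback is a cruder route: bound $\|(\bar{P}_A\otimes\bar{I}-\bar{I}\otimes\bar{X}_B)\ket{\psi}\|^2$ by $d_A^2\|\rho_A-\bar{I}/d_A\|_2^2$ via a Schmidt-basis argument, and then use $\mathbb{E}\,\mathrm{tr}\rho_A^2=(d_A+d_B)/(d+1)$. That gives $\mathbb{E}\|\rho_A-\bar{I}/d_A\|_2^2\le 1/d_B$, hence a bound of order $d_A^2/d_B=2^{2k'-k}\cdot 2^{k'}$. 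Since this is weaker than the claimed bound, I would use it only to confirm the exponential scaling before refining the constant.
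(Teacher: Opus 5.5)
\textbf{Verdict: genuine gap.} Your construction fails for targets $\bar P_i=\bar P_A\otimes\bar P_B$ with $\bar P_B\neq\bar I_B$, which is the generic case.

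\textbf{Step~1 is fine.} Restricting to $\bar O$ supported on the untouched qubits gives $4\Delta_i(\overline{\mathcal{N}}_s)\le\min_{\bar O_B}\mathrm{Var}_{\bar\rho}(\bar P_i-\bar I_A\otimes\bar O_B)$. By Lemma~\ref{lem_3} this equals $\vb*{e}_i^{\mathrm{T}}((C(\vb*{\theta})^{-1})_{AA})^{-1}\vb*{e}_i$. The paper reaches the same quantity via the erasure channel, QFI monotonicity and the Schur complement. You still need the paper's $\delta$-regularization, since $C(\Lambda_s\vb*{\theta})$ need not be invertible for a pure $\ket{\psi}$.

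\textbf{Where Step~2 fails.} Making $\bar O$ Hermitian is not free when $\bar P_B\neq\bar I_B$. Your mirror $\bar X_B$ is supported on the $d_A$-dimensional Schmidt support of $\ket{\psi}$ in $B$. For Haar-random $\ket{\psi}$, $\bar P_B$ maps all but a fraction of order $d_A/d_B$ of that support into its orthogonal complement. Hence $\|\bar X_B\bar P_B\ket{\psi}\|^2$ is only of order $d_A/d_B$ on average. The Hermitian part $\bar O=\frac12(\bar P_B\bar X_B+\bar X_B\bar P_B)$ then satisfies $\bar O\ket{\psi}\approx\frac12\bar P_i\ket{\psi}$, so $\mathrm{Var}_{\bar\rho}(\bar P_i-\bar O)\approx 1/4$. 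There is no exponential suppression.

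Convexity cannot repair this. With $M=\bar P_B\bar X_B$, it only bounds the variance of the Hermitian part by the average of $\|(\bar P_i-M)\ket{\psi}\|^2$ and $\|(\bar P_i-M^\dagger)\ket{\psi}\|^2$, and the second term is close to $1$.

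The full anticommutator $\{\bar P_B,\bar X_B\}$ (no factor $1/2$) does work in order of magnitude. But you must then control two error terms jointly, each of order $d_A^2/d$ on Haar average, and the constant $1/2$ in the lemma becomes delicate.

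\textbf{How the paper handles this class.} It constructs no operator.
\begin{itemize}
\item Local-Clifford invariance of the Haar measure makes $\mathbb{E}_{\mathrm{Haar}}[((C^{-1})_{AA})^{-1}]$ diagonal, with a single constant $\alpha_1$ on all indices with $\bar P_B\neq\bar I_B$.
\item The deterministic identity $\mathrm{tr}[((C^{-1})_{AA})^{-1}]=2^{k-1}(4^{k'}-1)$ follows from $C=2^{k-1}\Pi_{\mathrm{Im}}$ (Lemma~\ref{lem_2}) and $\dim(\mathrm{Im}\,C\cap\mathrm{Im}\,\Pi_A)=4^{k'}-1$.
\item Dividing by the $(4^{k'}-1)(4^{k-k'}-1)$ such indices gives $\alpha_1\le 2^{k-1}/(4^{k-k'}-1)$.
\end{itemize}

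\textbf{The case $\bar P_i=\bar P_A\otimes\bar I_B$ goes through, but not via your fallback.} Here your mirror is already Hermitian. The fallback loses a factor $d_A$ because it bounds $\sum_a\lambda_a|(\bar P_A)_{ca}|^2\le 1$. Instead, note that the Schmidt basis on $A$ of a Haar-random state is Haar-distributed independently of the Schmidt coefficients. This gives $\mathbb{E}|(\bar P_A)_{ca}|^2\le d_A/(d_A^2-1)$. Combined with $\mathbb{E}_{\mathrm{Haar}}\mathrm{tr}[\bar\rho_A^2]=(d_A+d_B)/(d_Ad_B+1)$, you get
$\mathbb{E}_{\mathrm{Haar}}\|(\bar P_A\otimes\bar I_B-\bar I_A\otimes\bar X_B)\ket{\psi}\|^2\le d_A^2/(d_Ad_B+1)$,
which is enough for the claim in this class. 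This plays the role of the paper's $\alpha_0$ estimate, which instead pairs $\bar X_A^{ab}$ with $\bar X_B^{ab}$ and $\bar Z_A^{l}$ with $\bar Z_B^{l}$ in the Schmidt basis.

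The case $\bar P_A=\bar I_A$ is trivial: take $\bar O=\bar P_i$.

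As it stands, Step~3 is a plan rather than a computation, and the proposed $\bar O$ does not handle targets with $\bar P_B\neq\bar I_B$.
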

Now consider an $[[n=mn',k=mk',d]]$ stabilizer code, which is composed of $m$ code blocks of a fixed $[[n',k',d]]$ stabilizer code whose distance $d$ is even.
In this case, $s\in\mathcal{S}_{\Theta(1)}$ corresponds to a syndrome caused by a weight-$d/2$ physical Pauli error within a single block.
This means that the conditional logical noise channel $\overline{\mathcal{N}}_s$ affects only a single code block, and the remaining $m-1$ code blocks remain unaffected.
From Lemma~\ref{lem_1}, the contribution $\Delta_i(\overline{\mathcal{N}}_s)$ of the corresponding syndrome $s$ decays exponentially.
Therefore, the ratio between the average effective logical error rate $\epsilon^{\mathrm{qSynd}}_i$ and the original logical error rate $\epsilon_i$ decays exponentially with the number of code blocks $m$:
\begin{thm}
    \label{thm_2}
    Consider an $[[n=mn',k=mk',d]]$ stabilizer code, which is composed of $m$ code blocks of a fixed $[[n',k',d]]$ stabilizer code whose distance $d$ is even.
    Then, when the ideal logical quantum state $\bar{\rho}(\vb*{\theta})=\ketbra{\psi}$ is drawn from the $k$-qubit Haar measure, the ratio between the average effective logical error rate $\epsilon^{\mathrm{qSynd}}_i$ and the original logical error rate $\epsilon_i$ decays exponentially with the number of code blocks $m$ as
    \begin{equation}
        \lim_{\eta\to0}\frac{\mathbb{E}_{\mathrm{Haar}}[\epsilon^{\mathrm{qSynd}}_i]}{\epsilon_i} = O\qty(\frac{1}{2^{k'm}})
    \end{equation}
    as a function of $m$.
\end{thm}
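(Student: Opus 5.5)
The plan is to combine the low-error expansion in Eq.~\eqref{eq_eff_err_rate_even} with Lemma~\ref{lem_1} block by block. For the composite $[[mn',mk',d]]$ code with even $d$, the leading-order quantities are sums over $\mathcal{S}_{\Theta(1)}$, namely $\epsilon_i\sim\sum_{s\in\mathcal{S}_{\Theta(1)}}p_s\epsilon_{i,s}$ and $\epsilon_i^{\mathrm{qSynd}}\sim\sum_{s\in\mathcal{S}_{\Theta(1)}}p_s\Delta_i(\overline{\mathcal{N}}_s)$, up to $o(\eta^{d/2})$ corrections. I would first argue that every $s\in\mathcal{S}_{\Theta(1)}$ arises from a weight-$d/2$ error confined to a single block. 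The reason is that $p_s=\Theta(\eta^{d/2})$, and an ambiguity with $\epsilon_{i,s}=\Theta(1)$ requires two minimum-weight explanations differing by a weight-$d$ logical operator. Errors spread over several blocks would need each block's restriction to be already ambiguous, which forces total weight greater than $d/2$. Consequently $\overline{\mathcal{N}}_s$ acts only on the $k'$ logical qubits of one block.

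Next I take the Haar average. The weights $p_s$ and rates $\epsilon_{i,s}$ are state-independent, and since $\epsilon_i$ does not depend on $\vb*{\theta}$, I get
\begin{equation}
\mathbb{E}_{\mathrm{Haar}}[\epsilon_i^{\mathrm{qSynd}}]\sim\sum_{s\in\mathcal{S}_{\Theta(1)}}p_s\,\mathbb{E}_{\mathrm{Haar}}[\Delta_i(\overline{\mathcal{N}}_s)]\leq \frac{1}{2^{k-2k'+1}}\sum_{s\in\mathcal{S}_{\Theta(1)}}p_s,
\end{equation}
using Lemma~\ref{lem_1}. That lemma applies once $k'<k/2$, i.e.\ $m\geq 3$, which suffices for an asymptotic statement in $m$. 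Swapping the expectation with the $\eta$-expansion needs a uniform control of the $o(\eta^{d/2})$ remainder over states. I would get this from the same matrix expansion used in Eq.~\eqref{eq_qeler_expansion}, noting that all quantities are bounded by $O(\epsilon_i)$ uniformly in $\vb*{\theta}$.

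For the denominator, $\epsilon_i\sim\sum_{s\in\mathcal{S}_{\Theta(1)}}p_s\epsilon_{i,s}\geq c\sum_{s\in\mathcal{S}_{\Theta(1)}}p_s$, where $c=\min_{s\in\mathcal{S}_{\Theta(1)}}\lim_{\eta\to0}\epsilon_{i,s}>0$. Since each such syndrome is a single-block syndrome of the fixed code with the other blocks trivial, the limiting conditional rates coincide with those of the $[[n',k',d]]$ code. Hence $c$ is independent of $m$; the same factorization shows that the $m$-dependence of $\sum p_s$ (a factor of roughly $m$) cancels between numerator and denominator. Taking $\eta\to0$ then gives
\begin{equation}
\lim_{\eta\to0}\frac{\mathbb{E}_{\mathrm{Haar}}[\epsilon_i^{\mathrm{qSynd}}]}{\epsilon_i}\leq\frac{1}{c\,2^{k-2k'+1}}=\frac{2^{2k'}}{2c}\cdot\frac{1}{2^{k'm}}=O\!\left(2^{-k'm}\right).
\end{equation}

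I expect the main obstacle to be the rigorous interchange of the Haar average with the limit $\eta\to0$, together with making the classification of $\mathcal{S}_{\Theta(1)}$ as single-block syndromes precise. In particular, one must handle the case where the target $\bar{P}_i$ is supported on blocks other than the one that erred; those syndromes contribute zero to $\epsilon_i$ and only a nonnegative $\Delta_i$, which the Lemma~\ref{lem_1} bound still covers.
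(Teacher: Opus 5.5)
Your proposal is correct and follows the paper's proof essentially step for step: you reduce to $\mathcal{S}_{\Theta(1)}$ via Eq.~\eqref{eq_eff_err_rate_even}, note that these syndromes come from weight-$d/2$ errors inside a single block, bound $\mathbb{E}_{\mathrm{Haar}}[\Delta_i(\overline{\mathcal{N}}_s)]\leq 2^{-(k-2k'+1)}$ by Lemma~\ref{lem_1}, and use that the limiting $\epsilon_{i,s}$ are $m$-independent positive constants (your single-block argument, the explicit constant $c$, and the $m\geq 3$ remark spell out what the paper only asserts). One point to tighten is your closing remark together with the ``factor of roughly $m$'': $\mathcal{S}_{\Theta(1)}$ contains only syndromes on blocks where $\bar{P}_i$ acts nontrivially, and the ambiguous syndromes on the other blocks must be discarded using $\Delta_i(\overline{\mathcal{N}}_s)=0$ exactly (the proof of Lemma~\ref{lem_1} gives a zero upper bound when $i\in\mathcal{J}_B$, and $\Delta_i\geq 0$ by monotonicity); bounding them only by $2^{-(k-2k'+1)}$ would put up to order-$m$ more weight in the numerator than in the denominator and give only $O(m\,2^{-k'm})$.
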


We leave the proofs of Lemma~\ref{lem_1} and Theorem~\ref{thm_2} to Appendices~\ref{sec_proof_quantum_3} and~\ref{sec_proof_quantum_4}.
Theorem~\ref{thm_2} shows that, by performing syndrome-dependent logical measurements, the impact of errors on the estimator can be made exponentially small.
The mechanism behind this can be understood from an information-scrambling viewpoint.
Ambiguous syndromes $s\in\mathcal{S}_{\Theta(1)}$, which constitute the dominant contributions to the logical error rates, identify a single code block on which a weight-$d/2$ error occurs, while the remaining $m-1$ code blocks remain intact.
In a highly entangled logical state, the information associated with the damaged block is strongly correlated with the unaffected code blocks.
An optimal quantum measurement constructed adaptively from the syndrome information exploits this correlation and reduces the uncertainty in estimating $\theta_i$ using information available from the unaffected code blocks.
For Haar-random logical states, this scrambling is nearly maximal on average, so the residual irrecoverable part is only a small fraction of the information that remains isolated on the damaged block.
A Page-type estimate then suppresses this leftover contribution by $O(2^{-k'm})$, yielding Theorem~\ref{thm_2}.
We emphasize that Haar randomness is used here as a convenient way to guarantee such strong entanglement on average.
The same mechanism can also reduce the effective logical error rate for non-Haar states, provided that the states are sufficiently entangled and the information associated with the damaged block is strongly correlated with the unaffected code blocks. 
In Sec.~\ref{sec_quantum_4}, we demonstrate that the exponential suppression is indeed observed for random stabilizer states, which form a state 3-design rather than a Haar ensemble~\cite{kueng2015qubit}.

We remark that Theorem~\ref{thm_2} is an asymptotic statement in the low-error limit $\eta\to0$ for even-distance codes.
In this limit, the leading contribution to the logical error rate comes from ambiguous syndromes $s\in\mathcal{S}_{\Theta(1)}$, whose conditional logical error rates remain non-vanishing.
When these syndrome-conditioned logical noise channels act only on a fixed number of logical qubits, Lemma~\ref{lem_1} suppresses their contribution exponentially with the number of code blocks.
Meanwhile, at finite physical error rate $\eta$, higher-order syndromes that are negligible in the strict low-error expansion can also contribute to the effective logical error rate.
Nevertheless, as we will see numerically in Sec.~\ref{sec_quantum_4}, even-distance codes still exhibit an exponential reduction over a broad finite-error regime, until high-order or multi-block error events become dominant.

In contrast to even-distance codes, the low-error-limit argument leading to Theorem~\ref{thm_2} does not directly apply to odd-distance codes.
For odd-distance codes, the dominant contributions to $\epsilon_i^{\mathrm{qSynd}}$ come not only from ambiguous syndromes with non-vanishing conditional logical error rates, $s\in\mathcal{S}_{\Theta(1)}$, but also from syndromes with vanishing conditional logical error rates, $s\in\mathcal{S}_{\Theta(\eta)}$.
While both types of syndromes can correspond to syndrome-conditioned logical noise localized on a fixed number of code blocks, the strength of the residual logical noise is different for $s\in\mathcal{S}_{\Theta(1)}$ and $s\in\mathcal{S}_{\Theta(\eta)}$.
For $s\in\mathcal{S}_{\Theta(1)}$, the conditional logical error rate remains $O(1)$, and the locality mechanism in Lemma~\ref{lem_1} can reduce its contribution to an exponentially small scale in $m$.
For $s\in\mathcal{S}_{\Theta(\eta)}$, on the other hand, the conditional logical error rate satisfies $\epsilon_{i,s}=O(\eta)$, and the syndrome-conditioned logical channel is already close to the identity.
In this case, expanding around the identity channel gives $\Delta_i(\overline{\mathcal{N}}_s) = \epsilon_{i,s} + O(\epsilon_{i,s}^2)$.
Thus, when the low-error limit $\eta\to0$ is taken at fixed $m$, the leading contribution from $\mathcal{S}_{\Theta(\eta)}$ remains at the level of $\sum_{s\in\mathcal{S}_{\Theta(\eta)}}p_s\epsilon_{i,s}$.
In this sense, odd-distance codes are not covered by the same low-error-limit argument as the even-distance block-code families in Theorem~\ref{thm_2}.

This conclusion, however, depends on the order of limits $\eta\to0$ and $m\to \infty$.
At fixed finite physical error rate $\eta$, the contribution of a localized syndrome can be understood as being controlled by two competing scales.
One is the low-error scale $\epsilon_{i,s}$, which appears from the expansion around the identity channel.
The other is the locality-induced scale set by Lemma~\ref{lem_1}, which is exponentially small in the number of code blocks, scaling as $O(2^{-k'm})$ up to constants depending only on the fixed block size.
Roughly speaking, the behavior is therefore governed by
\begin{equation}
    \label{eq_odd_scaling}
    \mathbb{E}_{\mathrm{Haar}}[\Delta_i(\overline{\mathcal{N}}_s)]
    \lesssim
    \min\{\epsilon_{i,s}, C2^{-k'm}\},
\end{equation}
where $C$ is a constant independent of $m$.
If $\eta\to0$ is taken first at fixed $m$, then $\epsilon_{i,s}=O(\eta)$ becomes smaller than the locality-induced scale $O(2^{-k'm})$, so the contribution $\Delta_i(\overline{\mathcal{N}}_s)\sim\epsilon_{i,s}$ for $s\in\mathcal{S}_{\Theta(\eta)}$ remains at leading order.
If instead $\eta$ is fixed and $m$ is increased, the exponential scale $O(2^{-k'm})$ can eventually become smaller than $\epsilon_{i,s}$, and the locality mechanism can dominate, resulting in the exponential suppression of the effective logical error rate.
Consequently, although odd-distance codes are not covered by Theorem~\ref{thm_2} in the same low-error-limit sense as even-distance codes, they may still exhibit exponential suppression at finite $\eta$ as the number of code blocks increases.
We numerically investigate this finite-error behavior in Sec.~\ref{sec_quantum_4}.

\subsection{Numerical analysis}
\label{sec_quantum_4}
\begin{figure*}[t]
    \begin{center}
        \includegraphics[width=0.99\linewidth]{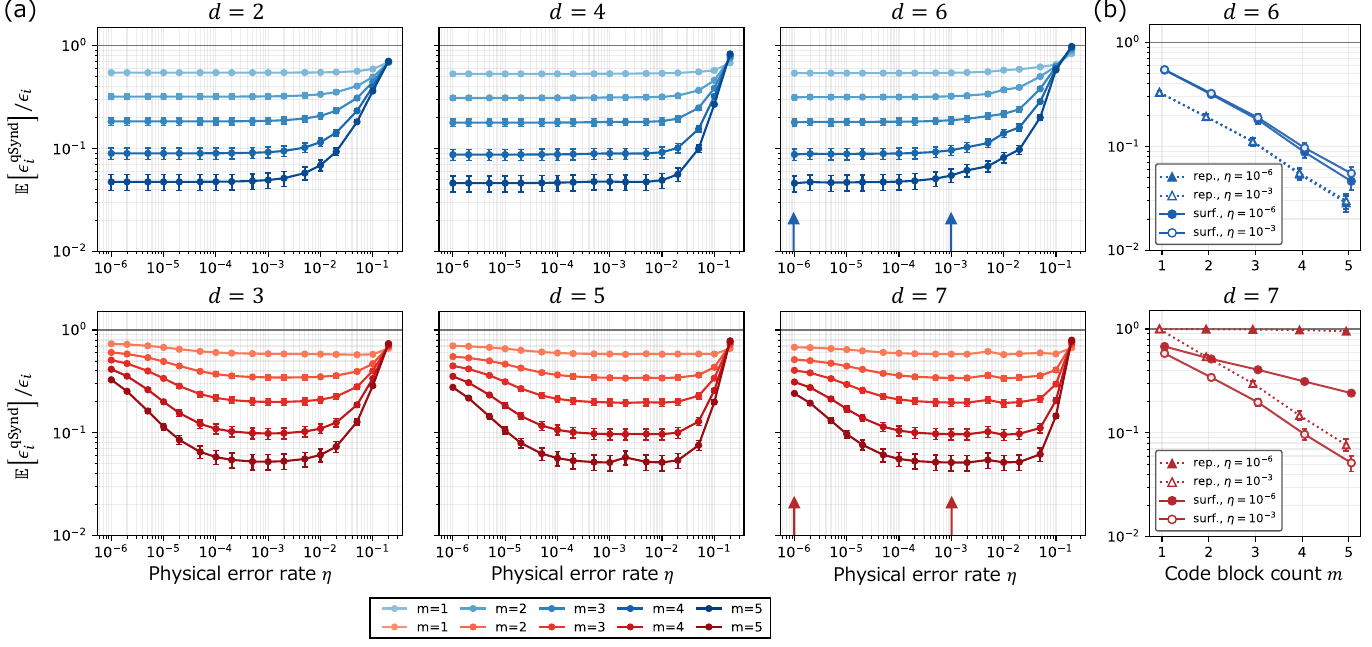}
        \caption{Finite-error-rate ratio $\mathbb{E}[\epsilon^{\mathrm{qSynd}}_i]/\epsilon_i$ as a function of the physical error rate $\eta$ and the number of code blocks $m$. Here, the average is taken over $2000$ random stabilizer states, and the error bars represent 95\% confidence intervals for the estimated ensemble average. Details of the numerical procedure are provided in Appendix~\ref{sec_numerical_details}. (a) Ratio as a function of $\eta$ for $m$ code blocks of fixed rotated surface codes with distances $d=2,\ldots,7$. Blue and red curves correspond to even-distance and odd-distance codes, respectively, and different shades indicate different numbers of code blocks $m=1,\ldots,5$. The arrows indicate the representative physical error rates $\eta=10^{-6}$ and $\eta=10^{-3}$ used in panel (b). (b) Ratio as a function of $m$ for representative distances $d=6$ and $d=7$, comparing repetition code and rotated surface code at $\eta=10^{-6}$ and $\eta=10^{-3}$.}
        \label{fig_numerics_finiteeta}
    \end{center}
\end{figure*}

In this subsection, we numerically investigate the improvement achievable by quantum syndrome-aware protocols in a finite-error-rate regime.
The results are shown in Fig.~\ref{fig_numerics_finiteeta}, and the details of the numerical procedure are provided in Appendix~\ref{sec_numerical_details}.
We consider $m$ code blocks of fixed repetition-code and rotated surface-code families, and take the target logical Pauli operator to be $\bar{P}_i=\bar{Z}\otimes\bar{I}^{\otimes(m-1)}$.
Target logical quantum states are drawn uniformly from random stabilizer states.

Figure~\ref{fig_numerics_finiteeta}(a) shows the ratio $\mathbb{E}[\epsilon_i^{\mathrm{qSynd}}]/\epsilon_i$ as a function of the physical error rate $\eta$ for rotated surface codes with distances $d=2,\ldots,7$.
For even-distance codes, the ratio approaches an $\eta$-independent plateau in the low-error regime, and this plateau decreases exponentially with the number of code blocks $m$ as $O(2^{-m})$.
Meanwhile, at larger physical error rates, the ratio increases and the exponential trend weakens, which is expected because higher-order errors and multi-block error patterns become relevant beyond the strict low-error expansion.
Indeed, beyond the leading ambiguous-syndrome contribution, higher-order contributions of the form $p_s\Delta_i(\overline{\mathcal{N}}_s)=O(\eta^{\lfloor (d+1)/2\rfloor+1})$ can appear.
After normalization by the leading logical error rate $\epsilon_i=\Theta(\eta^{\lfloor (d+1)/2\rfloor})$, these terms give an $O(\eta)$ contribution to the ratio.
Since such higher-order syndrome contributions may involve multi-block error patterns and are not guaranteed to exhibit the same $O(2^{-m})$ suppression as the leading localized contribution, the ratio in this regime can be roughly understood as being controlled by the larger of the two scales:
\begin{equation}
    \frac{\mathbb{E}[\epsilon_i^{\mathrm{qSynd}}]}{\epsilon_i}
    \sim
    \max\{O(2^{-m}),O(\eta)\}.
\end{equation}
This explains why the exponential decay weakens at larger physical error rates, while at experimentally relevant physical error rates around $\eta=10^{-3}$, the $O(2^{-m})$ trend remains clearly visible over the range of code block numbers considered here.

The finite-error results also reveal that odd-distance codes exhibit qualitatively different behavior depending on the physical error rate $\eta$.
At large physical error rates, higher-order and multi-block error events become important.
In this regime, the syndrome-conditioned logical noise is no longer well approximated by noise localized on a fixed number of logical qubits, and the exponential suppression with the number of code blocks becomes weak.
At intermediate values of $\eta$, the situation is more favorable.
Higher-order and multi-block error events are sufficiently suppressed, while the low-error scale $\epsilon_{i,s}$ associated with syndromes in $\mathcal{S}_{\Theta(\eta)}$ remains larger than the locality-induced scale $C2^{-k'm}$ over the range of $m$ considered (see Eq.~\eqref{eq_odd_scaling}).
When the syndrome-conditioned logical noise remains sufficiently localized in this regime, the same locality mechanism underlying Lemma~\ref{lem_1} can dominate as $m$ increases.
Consequently, our numerical results show that odd-distance codes can also exhibit an exponential suppression of $\mathbb{E}[\epsilon_i^{\mathrm{qSynd}}]/\epsilon_i$ with the number of code blocks.
In particular, at $\eta=10^{-3}$, the odd-distance data still show a strong suppression with $m$, suggesting that syndrome information can be useful even for odd-distance codes in experimentally relevant finite-error regimes.

When $\eta$ is made even smaller while $m$ is fixed, the low-error scale $\epsilon_{i,s}$ for the dominant syndromes in $\mathcal{S}_{\Theta(\eta)}$ becomes smaller than the locality-induced scale $C2^{-k'm}$ over the range of $m$ considered in these numerics (see Eq.~\eqref{eq_odd_scaling}).
In this regime, the leading-order relation $\Delta_i(\overline{\mathcal{N}}_s) \sim \epsilon_{i,s}$ remains valid, and the decay with $m$ is weakened compared with the $O(2^{-m})$ behavior observed in the intermediate finite-error regime.
The value of $\eta$ at which this first-order low-error behavior becomes visible depends strongly on the code family.
For example, in the repetition-code data in Fig.~\ref{fig_numerics_finiteeta}(b), the physical error rate $\eta=10^{-6}$ already shows this first-order low-error behavior: the ratio remains nearly independent of $m$.
By contrast, in the rotated surface-code data, the suppression with $m$ remains visible even at $\eta=10^{-6}$, indicating that the system has not yet fully entered the first-order low-error regime.
This suggests that the usefulness of syndrome information at finite physical error rates is not determined solely by the parity of the code distance, but also by the detailed structure of the syndrome-conditioned logical noise.
A detailed theoretical characterization of this code-dependent transition between the intermediate finite-error regime and the first-order low-error regime is left for future work.

\subsection{Optimal estimation protocol}
\label{sec_quantum_3}
In the previous subsections, we saw that the effective logical error rate $\epsilon_i^{\mathrm{qSynd}}$ can be exponentially smaller than the logical error rate $\epsilon_i$.
Here, we investigate the optimal estimation protocol---namely, the optimal syndrome-dependent measurement basis---that achieves this exponential reduction, and discuss the underlying mechanism behind the exponential advantage.
The main idea is to use the correlation between the damaged code block and the unaffected code blocks when an ambiguous syndrome $s\in\mathcal{S}_{\Theta(1)}$ is observed.
We first explain that the correlation between code blocks can be used to improve the estimation precision of the target observable.
Then, we describe the optimal estimation protocol: we first identify the correlated observable in the other blocks using a small number of samples, and then slightly shift the measurement basis toward it so that the estimation precision can be improved by exploiting the correlation.

For simplicity, we consider the toy model:
\begin{equation}
    \label{eq_cq_state_simplified}
    (1-p) \ketbra{0} \otimes \bar{\rho}(\vb*{\theta}) + p \ketbra{1} \otimes \qty( \frac{\bar{I}_A}{2^{k'}}\otimes \mathrm{tr}_A[\bar{\rho}(\vb*{\theta})])
\end{equation}
with
\begin{equation}
    \frac{\bar{I}_A}{2^{k'}}\otimes\mathrm{tr}_A[\bar{\rho}(\vb*{\theta})] = \frac{1}{2^{k}}\qty(\bar{I} + \sum_{j\in\mathcal{J}_B}\theta_j\bar{P}_j) = \bar{\rho}(\Lambda_A\vb*{\theta}).
\end{equation}
Here, $A$ represents the label for the $k' (<k/2)$-qubit subsystem to which the erasure error is applied, and we label the subsystem for the remaining $k-k'$ qubits as $B$.
The index set $\mathcal{J}=\{1,\ldots,4^k-1\}$ for nontrivial $k$-qubit Pauli operators $\bar{P}_j$ is split as $\mathcal{J}=\mathcal{J}_A\sqcup\mathcal{J}_B$, where
\begin{equation}
    \begin{aligned}
        \mathcal{J}_A &= \{j\in\mathcal{J} ~|~ \bar{P}_j = \bar{P}_A \otimes \bar{P}_B, ~ \bar{P}_A \neq \bar{I}_A\}, \\
        \mathcal{J}_B &= \{j\in\mathcal{J} ~|~ \bar{P}_j = \bar{I}_A \otimes \bar{P}_B\}.
    \end{aligned}
\end{equation}
The diagonal matrix $\Lambda_A$ represents the Pauli transfer matrix of the erasure channel satisfying $(\Lambda_A)_{jj} = 0$ for $j\in\mathcal{J}_A$ and $(\Lambda_A)_{jj} = 1$ for $j\in\mathcal{J}_B$.
We assume that the target logical Pauli operator $\bar{P}_i$ satisfies $i\in\mathcal{J}_A$.

The state in Eq.~\eqref{eq_cq_state_simplified} can be regarded as a coarse-grained version of the classical--quantum state arising in the setting of Theorem~\ref{thm_2}.
We coarse-grain all syndromes $s\notin\mathcal{S}_{\Theta(1)}$ into a single label $s=0$, with the corresponding syndrome-conditioned state approximated by $\bar{\rho}(\vb*{\theta})$, by neglecting the vanishing conditional logical error $\epsilon_{i,s}\to0$ as $\eta \to 0$.
We then coarse-grain all syndromes $s\in\mathcal{S}_{\Theta(1)}$ into a single label $s=1$.
Since the conditional noise channel for $s\in\mathcal{S}_{\Theta(1)}$ only acts on $k'$ logical qubits, we consider the ultimate case where the conditional noise channel is an erasure channel on a fixed set of $k'$ logical qubits.
Therefore, the state in Eq.~\eqref{eq_cq_state_simplified} provides a useful idealized model for understanding the situation considered in Theorem~\ref{thm_2}.

When the observed syndrome is $s=0$, we obtain the noiseless logical state $\bar{\rho}(\vb*{\theta})$, and hence can extract its full information, including the target expectation value $\theta_i=\mathrm{tr}[\bar{\rho}(\vb*{\theta})\bar{P}_i]$ of the target logical observable $\bar{P}_i$.
By contrast, when the observed syndrome is $s=1$, we obtain the reduced logical state $\bar{\rho}(\Lambda_A\vb*{\theta})$, from which we cannot extract information about $\theta_i$ with $i\in\mathcal{J}_A$ because information about the subsystem $A$ is completely lost, and the reduced state does not depend on $\theta_i$.
Nevertheless, we can still extract information about parameters $\theta_j$ with $j\in\mathcal{J}_B$ associated with the unaffected subsystem $B$.
Such information about subsystem $B$ can then help reduce the uncertainty in estimating $\theta_i$ from the noiseless branch $\bar{\rho}(\vb*{\theta})$.
In other words, when the error syndrome $s=1$ is observed, we can use $\bar{\rho}(\Lambda_A\vb*{\theta})$ to estimate parameters $\theta_j$ with $j\in\mathcal{J}_B$ associated with the unaffected subsystem $B$, and then treat them as effectively known when analyzing the $s=0$ branch.
This reduces the nuisance-parameter uncertainty and thereby improves the estimation of $\theta_i$ from the noiseless branch $\bar{\rho}(\vb*{\theta})$.

To illustrate this mechanism, we consider the following auxiliary problem.
Suppose we are given $N$ copies of the noiseless logical state $\bar{\rho}(\vb*{\theta})$ (corresponding to $s=0$) and ask how knowledge of parameters $\theta_j$ with $j\in\mathcal{J}_B$ associated with the unaffected subsystem $B$ affects the estimation of $\theta_i$.
When we do not know all parameters $\vb*{\theta}\in\mathbb{R}^{4^k-1}$---that is, when parameters $\theta_j$ with $j\neq i$ are treated as nuisance parameters~\cite{suzuki2020quantum}---the corresponding quantum Fisher information matrix is given by the inverse covariance matrix $C(\vb*{\theta})^{-1}\in\mathbb{R}^{(4^{k}-1)\times(4^{k}-1)}$ defined in Eq.~\eqref{eq_covariance_matrix} (see Appendix~\ref{sec_estimation_theory_multiqubit} or Ref.~\cite{watanabe2010optimal} for the derivation).
In this case, the optimal measurement basis is given by the target Pauli operator $\bar{P}_i$ itself, and the optimal unbiased estimator $\theta_i^{\mathrm{est}}$ is obtained by averaging the measurement outcomes.
Indeed, $\mathbb{E}[\theta_i^{\mathrm{est}}] = \mathrm{tr}[\bar{\rho}(\vb*{\theta})\bar{P}_i] = \theta_i$, and its variance is
$\mathrm{Var}[\theta_i^{\mathrm{est}}] = N^{-1}(\mathrm{tr}[\bar{\rho}(\vb*{\theta})\bar{P}_i^2]-\mathrm{tr}[\bar{\rho}(\vb*{\theta})\bar{P}_i]^2)
= N^{-1}(1-\theta_i^2)=N^{-1}C(\vb*{\theta})_{ii}$,
which attains the quantum Cram\'er--Rao bound in Eq.~\eqref{eq_QCRB}.

Now suppose that we have complete information about all parameters $\theta_j$ with $j\in\mathcal{J}_B$ associated with the unaffected subsystem $B$ (which can, in principle, be estimated from the $s=1$ branch $\bar{\rho}(\Lambda_A\vb*{\theta})$).
In this case, parameters $\theta_j$ satisfying $j\in\mathcal{J}_A$ constitute the target and the nuisance parameters, so the corresponding quantum Fisher information matrix becomes $(C(\vb*{\theta})^{-1})_{AA}\in\mathbb{R}^{\abs{\mathcal{J}_A}\times\abs{\mathcal{J}_A}}$.
Here, $A$ denotes the block corresponding to indices in $\mathcal{J}_{A}$.
Therefore, the variance of the optimal estimator achieving the quantum Cram\'er--Rao bound in Eq.~\eqref{eq_QCRB} is represented as
\begin{equation}
    \label{eq_var_1}
    \mathrm{Var}[\theta_i^{\mathrm{est}}] = \frac{1}{N} \vb*{e}_i^{\mathrm{T}}((C(\vb*{\theta})^{-1})_{AA})^{-1}\vb*{e}_i.
\end{equation}

As we will see in Lemma~\ref{lem_3}, we have
\begin{equation}
    \label{eq_var_2}
    \vb*{e}_i^{\mathrm{T}}((C(\vb*{\theta})^{-1})_{AA})^{-1}\vb*{e}_i 
    = \min_{\vb*{u}_B\in\mathbb{R}^{\abs{\mathcal{J}_B}}}\mathrm{Var}_{\bar{\rho}(\vb*{\theta})}(\bar{P}_i + \vb*{u}_B^{\mathrm{T}}\bar{\vb*{P}}_{B}),
\end{equation}
where the minimum is achieved by choosing
\begin{equation}
    \label{eq_nB_optimal}
    \vb*{u}_B^{\mathrm{T}}=-\vb*{e}_i^{\mathrm{T}}C(\vb*{\theta})_{AB}(C(\vb*{\theta})_{BB})^{-1}.
\end{equation}
Here, $\mathrm{Var}_{\bar{\rho}(\vb*{\theta})}(\cdot)$ represents the variance of the operator $\cdot$ for the state $\bar{\rho}(\vb*{\theta})$, and $\bar{\vb*{P}}_{B}$ denotes the vector of logical Pauli operators with indices in $\mathcal{J}_B$.
When we measure $\bar{\rho}(\vb*{\theta})$ with $\bar{P}_i + \vb*{u}_B^{\mathrm{T}}\bar{\vb*{P}}_{B}$, the expectation value is given as
\begin{equation}
    \theta_i + \vb*{u}_B^{\mathrm{T}}\vb*{\theta}_B,
\end{equation}
where $\vb*{\theta}_B$ denotes the vector of parameters $\theta_j$ with indices in $\mathcal{J}_B$.
Since the values of $\vb*{\theta}_B$ are assumed to be known, we can construct an unbiased estimator $\theta_i^{\mathrm{est}}$ by measuring $N$ copies of $\bar{\rho}(\vb*{\theta})$ with $\bar{P}_i + \vb*{u}_B^{\mathrm{T}}\bar{\vb*{P}}_{B}$ and subtracting $\vb*{u}_B^{\mathrm{T}}\vb*{\theta}_B$ from the empirical average.
Eqs.~\eqref{eq_var_1} and~\eqref{eq_var_2} mean that such an estimator achieves the quantum Cram\'er--Rao bound when $\vb*{u}_B$ is chosen as in Eq.~\eqref{eq_nB_optimal}.
In other words, when the parameters associated with subsystem $B$ are known, we can use this information to reduce the uncertainty associated with subsystem $A$, especially when the two subsystems are strongly correlated.
For later convenience, we denote the corresponding optimal measurement operator as
\begin{equation}
    \bar{P}_i - \bar{O}_i(\vb*{\theta})
\end{equation}
with
\begin{equation}
    \bar{O}_i(\vb*{\theta}) = \vb*{e}_i^{\mathrm{T}}C(\vb*{\theta})_{AB}(C(\vb*{\theta})_{BB})^{-1}\bar{\vb*{P}}_{B}.
\end{equation}

The key point is that the optimal measurement basis changes from $\bar{P}_i$ to $\bar{P}_i-\bar{O}_i(\vb*{\theta})$ once the parameters $\theta_j$ for $j\in\mathcal{J}_B$ are known.
This modification reduces the average uncertainty of the remaining unknown parameters indexed by $\mathcal{J}_A$ by utilizing the correlation between subsystems $A$ and $B$.
In particular, the Haar average of the inverse of the relevant quantum Fisher information matrix decays exponentially, as stated in the following proposition, whose proof is provided in Appendix~\ref{sec_proof_quantum_2}.
\begin{prop}
    \label{prop_3}
    Let the ideal quantum state $\bar{\rho}(\vb*{\theta})=\ketbra{\psi}$ be drawn from the $k$-qubit Haar measure.
    Then, assuming $k'<k/2$, the Haar average of the inverse of the quantum Fisher information matrix $(C(\vb*{\theta})^{-1})_{AA}$ in the setting where the parameters $\theta_j$ for $j\in\mathcal{J}_{B}$ are known and the parameters indexed by $\mathcal{J}_{A}$ are unknown satisfies
    \begin{equation}
        \mathbb{E}_{\mathrm{Haar}}[((C(\vb*{\theta})^{-1})_{AA})^{-1}] \leq \frac{2}{2^{k-2k'}}I.
    \end{equation}
\end{prop}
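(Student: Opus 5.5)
The plan is to rewrite $((C(\vb*{\theta})^{-1})_{AA})^{-1}$ as a Schur complement, bound it by an explicit trial choice of $B$-operators, and then average over the Haar measure.

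\textbf{Schur complement and variational form.} By block inversion, $((C^{-1})_{AA})^{-1}=C_{AA}-C_{AB}C_{BB}^{-1}C_{BA}$ (with a pseudo-inverse or a limiting argument if $C$ is singular, as happens for pure states). For any real vector $\vb*{u}_A\in\mathbb{R}^{\abs{\mathcal{J}_A}}$, the quadratic form of this Schur complement is the minimum of a variance, exactly as in Eq.~\eqref{eq_var_2}:
\[
\vb*{u}_A^{\mathrm{T}}((C^{-1})_{AA})^{-1}\vb*{u}_A=\min_{\vb*{v}_B}\mathrm{Var}_{\psi}\bigl(\vb*{u}_A^{\mathrm{T}}\bar{\vb*{P}}_A+\vb*{v}_B^{\mathrm{T}}\bar{\vb*{P}}_B\bigr).
\]
Hence any trial choice of $\vb*{v}_B$ gives an upper bound. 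For a Hermitian operator $X$ we also have $\mathrm{Var}_\psi(X)\le\|X\ket{\psi}\|^2$, and more generally $\mathrm{Var}_\psi(X)\le \|(X-c)\ket{\psi}\|^2$ for any real constant $c$.

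\textbf{Trial choice via a mirror operator.} Take a Schmidt decomposition $\ket{\psi}=\sum_a\sqrt{\lambda_a}\ket{a}_A\ket{b_a}_B$ with $\rho_A=\sum_a\lambda_a\ketbra{a}$. For a Pauli $\bar{P}_j=\bar{P}_A\otimes\bar{P}_B$ with $j\in\mathcal{J}_A$, define the mirrored operator
\[
\bar{M}(\bar{P}_A)=\sum_{a,a'}\langle a|\bar{P}_A|a'\rangle^{*}\,|b_a\rangle\langle b_{a'}|,
\]
which acts on $B$. If $\lambda_a=2^{-k'}$ for all $a$, then $(\bar{P}_A\otimes\bar{I})\ket{\psi}=(\bar{I}\otimes\bar{M})\ket{\psi}$ holds exactly. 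In that case $\bar{P}_j\ket{\psi}=(\bar{I}\otimes\bar{P}_B\bar{M})\ket{\psi}$, and the Hermitian part of $\bar{P}_B\bar{M}$ is a real linear combination of $B$-Paulis, which is a valid trial. For general $\lambda_a$, I will write the residual as
\[
\bar{P}_j\ket{\psi}-(\bar{I}\otimes \bar{P}_B\bar{M})\ket{\psi}=(\bar{P}_A\otimes \bar{P}_B)\bigl(D\otimes\bar{I}\bigr)\ket{\tilde\Phi}\text{-type term},
\]
where $D$ measures the deviation $\sqrt{\lambda_a}-\sqrt{\lambda_{a'}}$. Its squared norm, summed linearly over $\vb*{u}_A$, is bounded by $\abs{\vb*{u}_A}^2$ times a multiple of $\sum_{a,a'}(\sqrt{\lambda_a}-\sqrt{\lambda_{a'}})^2\le 2\cdot 2^{k'}\,\mathrm{tr}[(\rho_A-\bar{I}_A/2^{k'})^2]$. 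The anti-Hermitian part of $\bar{P}_B\bar{M}$ needs care; I plan to use the symmetrized trial $\tfrac12(\bar{P}_B\bar{M}+\bar{M}^\dagger\bar{P}_B)$ and bound the extra error by the same deviation quantity. This step produces an operator inequality
\[
((C^{-1})_{AA})^{-1}\le c\,2^{k'}\,\mathrm{tr}\bigl[(\rho_A-\bar{I}_A/2^{k'})^2\bigr]\,I .
\]

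\textbf{Haar average.} The Page/purity formula gives
\[
\mathbb{E}_{\mathrm{Haar}}\mathrm{tr}\rho_A^2=\frac{d_A+d_B}{d_Ad_B+1},\qquad d_A=2^{k'},\ d_B=2^{k-k'}.
\]
Therefore $\mathbb{E}\,\mathrm{tr}[(\rho_A-I/d_A)^2]=\mathbb{E}\,\mathrm{tr}\rho_A^2-1/d_A\le 1/d_B$. Multiplying by $d_A$ gives $2^{k'}/2^{k-k'}=2^{-(k-2k')}$, and the condition $k'<k/2$ makes this small. Linearity of expectation then carries the operator inequality through to the average.

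\textbf{Main obstacle.} The hard step is the error estimate for the mirror trial with a tight constant. Achieving exactly the factor $2$ (rather than a larger constant $c$) requires handling the Hermitian symmetrization and the cross terms sharply. An alternative that may give the constant more cleanly is to compute $\mathbb{E}_{\mathrm{Haar}}$ of the trial variance directly with second- and fourth-moment Weingarten formulas, applied to the optimal linear "mirror" regression coefficients averaged over the ensemble. I would attempt that computation first.
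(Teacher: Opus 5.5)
\textbf{Gap in the central step.} Your plan needs the pointwise operator inequality $((C(\vb*{\theta})^{-1})_{AA})^{-1}\le c\,2^{k'}\mathrm{tr}[(\rho_A-\bar{I}_A/2^{k'})^2]\,I$. This inequality is false, and in fact no pointwise smallness of this matrix holds.

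The paper's Step~3 shows that for every pure state of full Schmidt rank, $\mathrm{tr}[((C(\vb*{\theta})^{-1})_{AA})^{-1}]=2^{k-1}(2^{2k'}-1)$, independent of the state. Combining $C(\vb*{\theta})=2^{k-1}\Pi_{\mathrm{Im}}$ (Lemma~\ref{lem_2}) with the variational formula shows more: the Schur complement equals $2^{k-1}$ times the orthogonal projector onto $\mathrm{Im}(C(\vb*{\theta}))\cap\mathrm{Im}(\Pi_A)$. Its largest eigenvalue is therefore $2^{k-1}$ for every such state. Take a state with $\rho_A=\bar{I}_A/2^{k'}$: your right-hand side is zero, while the left-hand side has trace $2^{k-1}(2^{2k'}-1)$.

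You can see the mechanism inside your own construction. Take $\bar{P}_j=\bar{P}_A\otimes\bar{P}_B$ with $\bar{P}_B\neq\bar{I}_B$ and exactly uniform Schmidt coefficients. A Hermitian $B$-correction can cancel everything except the anti-Hermitian part of $\bar{P}_B\bar{M}$ compressed to $V=\mathrm{span}\{\ket{b_a}\}$. The minimal variance is then $\frac{1}{4\cdot 2^{k'}}\norm{[\Pi_V\bar{P}_B\Pi_V,\bar{M}]}_F^2$. This is generically nonzero and has nothing to do with $\sqrt{\lambda_a}-\sqrt{\lambda_{a'}}$. So the step where you ``bound the extra error by the same deviation quantity'' cannot work.

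Even on the $\bar{P}_A\otimes\bar{I}_B$ block a worst-case bound loses too much. Consider the unit direction $\sqrt{2^{k'}/2}\,(\ketbra{a}{b}+\ketbra{b}{a})\otimes\bar{I}_B$ in the Schmidt basis. Its exact value is $\frac{2^{k'}}{2}\frac{(\lambda_a-\lambda_b)^2}{\lambda_a+\lambda_b}$. When only $\lambda_a,\lambda_b$ are perturbed, this exceeds $c\,2^{k'}\mathrm{tr}[(\rho_A-\bar{I}_A/2^{k'})^2]$ by a factor of order $2^{k'}$.

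\textbf{What is missing.} The smallness of $\mathbb{E}_{\mathrm{Haar}}[((C(\vb*{\theta})^{-1})_{AA})^{-1}]$ is purely an averaging effect. A $(2^{2k'}-1)$-dimensional subspace that rotates with $\ket{\psi}$ gets spread over all $(2^{2k'}-1)2^{2(k-k')}$ directions indexed by $\mathcal{J}_A$. You need two ingredients.
\begin{enumerate}
\item A symmetry reduction performed \emph{before} any bounding. The paper uses invariance of the Haar measure under local Cliffords $\bar{U}_A\otimes\bar{U}_B$ to show the average equals $\alpha_0\Pi_{A_0}+\alpha_1\Pi_{A_1}$. Then only direction-averaged diagonal entries need to be controlled.
\item A separate argument for the block of $\bar{P}_A\otimes\bar{P}_B$ with $\bar{P}_B\neq\bar{I}_B$. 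Your Schmidt-basis mirror trial plus the purity formula is the right tool for $\alpha_0$; it is essentially the paper's Step~2, where averaging over the $2^{2k'}-1$ directions absorbs the extra $2^{k'}$. For $\alpha_1$, the paper instead uses the state-independent trace identity together with $\alpha_0\ge 0$, giving $\alpha_1\le 2^{k-1}/(2^{2(k-k')}-1)\le 2^{-(k-2k')}$.
\end{enumerate}

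Your Weingarten fallback does not close the gap as stated. The optimal coefficients involve $(C(\vb*{\theta})_{BB})^{-1}$ and are not polynomial in $\ket{\psi}$. You would still need an explicit polynomial trial for the $\mathcal{J}_{A_1}$ directions, and a reduction from arbitrary $\vb*{u}_A$ to diagonal entries.
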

Therefore, the estimator variance is exponentially reduced from $N^{-1}(1-\theta_i^2)\sim N^{-1}$ to
\begin{equation}
    \label{eq_var_exponential}
    \mathbb{E}_{\mathrm{Haar}}[\mathrm{Var}[\theta_i^{\mathrm{est}}]] \leq \frac{1}{N}\frac{2}{2^{k-2k'}}.
\end{equation}

Let us explain the intuition behind the exponential decay.
For simplicity, we assume that $k'=1$ and the target Pauli operator is represented in the form $\bar{P}_i=\bar{P}_A\otimes\bar{I}_B$.
In this case, the Schmidt decomposition of the ideal quantum state can be written as
\begin{equation}
    \ket{\psi} = \sqrt{\frac{1+r}{2}}\ket{\psi_0}_A\ket{\phi_0}_B + \sqrt{\frac{1-r}{2}}\ket{\psi_1}_A\ket{\phi_1}_B,
\end{equation}
where $r$ quantifies the imbalance of the Schmidt coefficients.
A smaller value of $r$ corresponds to stronger entanglement between subsystems $A$ and $B$.
For a Haar-random state, $r^2\sim O(2^{-k})$, meaning that the two subsystems are nearly maximally entangled on average.
To see how this entanglement reduces the estimator variance, let us define Pauli-like operators in the Schmidt bases $\{\ket{\psi_0}_A,\ket{\psi_1}_A\}$ and $\{\ket{\phi_0}_B,\ket{\phi_1}_B\}$ as
\begin{equation}
    \begin{aligned}
        \bar{X}_A' &= \ketbra{\psi_0}{\psi_1}_A + \ketbra{\psi_1}{\psi_0}_A, \\
        \bar{Y}_A' &= -i\ketbra{\psi_0}{\psi_1}_A + i\ketbra{\psi_1}{\psi_0}_A, \\
        \bar{Z}_A' &= \ketbra{\psi_0}{\psi_0}_A - \ketbra{\psi_1}{\psi_1}_A, \\
        \bar{X}_B' &= \ketbra{\phi_0}{\phi_1}_B + \ketbra{\phi_1}{\phi_0}_B, \\
        \bar{Y}_B' &= i\ketbra{\phi_0}{\phi_1}_B - i\ketbra{\phi_1}{\phi_0}_B, \\
        \bar{Z}_B' &= \ketbra{\phi_0}{\phi_0}_B - \ketbra{\phi_1}{\phi_1}_B.
    \end{aligned}
\end{equation}
Here, the operators on subsystem $B$ are understood to act on the Schmidt support and to be extended by zero on its orthogonal complement.
The variances of these operators satisfy
\begin{equation}
    \begin{aligned}
        \mathrm{Var}_{\bar{\rho}(\vb*{\theta})}\qty(\bar{X}_A'\otimes\bar I_B) = \mathrm{Var}_{\bar{\rho}(\vb*{\theta})}\qty(\bar I_A\otimes\bar{X}_B') &= 1, \\
        \mathrm{Var}_{\bar{\rho}(\vb*{\theta})}\qty(\bar{Y}_A'\otimes\bar I_B) = \mathrm{Var}_{\bar{\rho}(\vb*{\theta})}\qty(\bar I_A\otimes\bar{Y}_B') &= 1, \\
        \mathrm{Var}_{\bar{\rho}(\vb*{\theta})}\qty(\bar{Z}_A'\otimes\bar I_B) = \mathrm{Var}_{\bar{\rho}(\vb*{\theta})}\qty(\bar I_A\otimes\bar{Z}_B') &= 1-r^2.
    \end{aligned}
\end{equation}
Moreover, these operators are strongly correlated:
\begin{equation}
    \begin{aligned}
        \mathrm{Cov}_{\bar{\rho}(\vb*{\theta})}\qty(\bar{X}'_A \otimes \bar{I}_B, \bar{I}_A \otimes \bar{X}_B') &= \sqrt{1-r^2}, \\
        \mathrm{Cov}_{\bar{\rho}(\vb*{\theta})}\qty(\bar{Y}'_A \otimes \bar{I}_B, \bar{I}_A \otimes \bar{Y}_B') &= \sqrt{1-r^2}, \\
        \mathrm{Cov}_{\bar{\rho}(\vb*{\theta})}\qty(\bar{Z}'_A \otimes \bar{I}_B, \bar{I}_A \otimes \bar{Z}_B') &= 1-r^2.
    \end{aligned}
\end{equation}
Because of these correlations, adding suitable operators on subsystem $B$ can greatly reduce the variance of an operator on subsystem $A$:
\begin{equation}
    \begin{aligned}
        \min_x\mathrm{Var}_{\bar{\rho}(\vb*{\theta})}\qty(\bar{X}'_A \otimes \bar{I}_B +x \bar{I}_A \otimes \bar{X}_B') &= r^2 = O(2^{-k}), \\
        \min_x\mathrm{Var}_{\bar{\rho}(\vb*{\theta})}\qty(\bar{Y}'_A \otimes \bar{I}_B +x \bar{I}_A \otimes \bar{Y}_B') &= r^2 = O(2^{-k}), \\
        \min_x\mathrm{Var}_{\bar{\rho}(\vb*{\theta})}\qty(\bar{Z}'_A \otimes \bar{I}_B +x \bar{I}_A \otimes \bar{Z}_B') &= 0.
    \end{aligned}
\end{equation}
Since any single-qubit Pauli operator on subsystem $A$ can be written as $\bar{P}_A=n_x\bar{X}'_A+n_y\bar{Y}'_A+n_z\bar{Z}'_A$ with $n_x^2+n_y^2+n_z^2=1$, and since the corresponding operators on subsystem $B$ can be expanded in the Pauli basis $\bar{\vb*{P}}_B$, we obtain
\begin{equation}
    \min_{\vb*{u}_B\in\mathbb{R}^{\abs{\mathcal{J}_B}}}\mathrm{Var}_{\bar{\rho}(\vb*{\theta})}(\bar{P}_i + \vb*{u}_B^{\mathrm{T}}\bar{\vb*{P}}_{B}) \leq r^2 = O(2^{-k}).
\end{equation}
In summary, the strong correlation between subsystems $A$ and $B$ allows one to reduce the variance of the estimator by adding an appropriate operator on the unaffected subsystem $B$.
For Haar-random states, the two subsystems are nearly maximally entangled on average, so $r^2\sim O(2^{-k})$, which explains the exponential decay.

We now return to the classical--quantum state in Eq.~\eqref{eq_cq_state_simplified}.
As we will discuss in Appendix~\ref{sec_proof_quantum_5}, the inverse of the quantum Fisher information matrix of the classical--quantum state is
\begin{equation}
    (J^{\mathrm{Synd}})^{-1} 
    = C(\vb*{\theta}) + \frac{p}{1-p}
    \begin{pmatrix}
        ((C(\vb*{\theta})^{-1})_{AA})^{-1} & 0 \\
        0 & 0 \\
    \end{pmatrix}.
\end{equation}
As discussed above, $(C(\vb*{\theta})^{-1})_{AA}$ is the quantum Fisher information matrix when the parameters indexed by $\mathcal{J}_B$ (associated with the unaffected subsystem $B$) are known while the parameters indexed by $\mathcal{J}_A$ (associated with the noisy subsystem $A$) remain unknown.
Thus, the increase in the variance induced by the noisy branch is governed by the residual uncertainty in the parameters associated with the noisy subsystem $A$, given the information available from the unaffected subsystem $B$.
Since this uncertainty decays exponentially on average as in Proposition~\ref{prop_3}, we obtain an exponential advantage in quantum syndrome-aware protocols.

Finally, we describe an optimal quantum syndrome-aware estimation protocol for the simplified state in Eq.~\eqref{eq_cq_state_simplified}.
As we will discuss in Appendix~\ref{sec_proof_quantum_5}, the optimal measurement basis defined through Eq.~\eqref{eq_optimal_measurement_basis} for the target parameter $\theta_i$ is 
\begin{equation}
    \ketbra{0}\otimes\frac{1}{1-p}(\bar{P}_i-p\bar{O}_i(\vb*{\theta})) + \ketbra{1}\otimes \bar{O}_i(\vb*{\theta}).
\end{equation}
Thus, an asymptotically optimal quantum syndrome-aware estimation protocol can be constructed as follows, which is based on an adaptive two-step strategy~\cite{yang2019attaining, zhou2020saturating}.
\begin{enumerate}
    \item Using the initial $N'=\lfloor\sqrt{N}\rfloor$ copies, construct a rough estimator $\vb*{\theta}^{\mathrm{init}}$ for all parameters.
    \item For $t=N'+1,\ldots,N$, construct $\Delta_t$ from the remaining $N-N'$ copies as follows:
    \begin{itemize}
        \item If the error syndrome $s=0$ is obtained for the $t$-th copy, measure $\bar{\rho}(\vb*{\theta})$ with the operator $\bar{P}_i-p\bar{O}_i(\vb*{\theta}^{\mathrm{init}})$ and obtain outcome $o_t$. Then set
        $\Delta_t=(o_t-\mathrm{tr}[\bar{\rho}(\vb*{\theta}^{\mathrm{init}})(\bar{P}_i-p\bar{O}_i(\vb*{\theta}^{\mathrm{init}}))])/(1-p)$.
        \item If the error syndrome $s=1$ is obtained for the $t$-th copy, measure $\bar{\rho}(\Lambda_A\vb*{\theta})$ with the operator $\bar{O}_i(\vb*{\theta}^{\mathrm{init}})$ and obtain outcome $o_t$. Then set
        $\Delta_t=o_t-\mathrm{tr}[\bar{\rho}(\vb*{\theta}^{\mathrm{init}})\bar{O}_i(\vb*{\theta}^{\mathrm{init}})]$.
    \end{itemize}
    \item Construct the estimator as $\theta_i^{\mathrm{est}} = \theta_i^{\mathrm{init}} + \frac{1}{N-N'}\sum_{t=N'+1}^N\Delta_t$.
\end{enumerate}

Conditioned on the first-stage data (i.e., fixing $\vb*{\theta}^{\mathrm{init}}$), it is straightforward to verify that the expectation of $\theta_i^{\mathrm{est}}$ over the remaining $N-N'$ copies equals $\theta_i$.
Hence, $\theta_i^{\mathrm{est}}$ constructed above is an unbiased estimator of $\theta_i$.
Moreover, the variance of $\theta_i^{\mathrm{est}}$ is obtained by averaging (over the first-stage data) the conditional variance given fixed $\vb*{\theta}^{\mathrm{init}}$.
When $\vb*{\theta}^{\mathrm{init}}=\vb*{\theta}$, the variance of $\Delta_t$ is $\vb*{e}_i^{\mathrm{T}}(J^{\mathrm{Synd}})^{-1}\vb*{e}_i$.
Therefore, in the asymptotic limit $N\to\infty$ we typically have $\vb*{\theta}^{\mathrm{init}}\to\vb*{\theta}$, resulting in
$\mathrm{Var}[\theta_i^{\mathrm{est}}] \sim N^{-1} \vb*{e}_i^{\mathrm{T}}(J^{\mathrm{Synd}})^{-1}\vb*{e}_i$.
Thus, the above estimator is asymptotically optimal and attains the quantum Cram\'er--Rao bound.

\section{Discussion}
\label{sec_discussion}
In this work, we developed an information-theoretic framework to quantify the utility of error-syndrome information for noisy logical observable estimation. 
We introduced syndrome-agnostic, classical syndrome-aware, and quantum syndrome-aware estimation protocols and compared their achievable precision through Fisher information, summarized via an effective logical error rate. 
For classical syndrome-aware protocols---where the logical measurement basis is fixed and syndrome information is used only in classical post-processing---we proved a universal limitation showing that, on average, the effective logical error rate can improve by at most a factor of two.
In contrast, when syndrome-conditioned logical measurement is permitted, we have demonstrated that the effective logical error rate can become exponentially smaller than the original logical error rate as the number of code blocks increases.
We further clarified the mechanism behind this exponential advantage by analyzing the structure of the optimal syndrome-dependent measurement, and supported the generality of the phenomenon via numerical studies beyond the idealized assumptions used in our analytical constructions.

Our results establish a sharp separation between what can and cannot be gained from error-syndrome information in logical-layer estimation tasks.
Our no-go theorem for classical syndrome-aware protocols shows that syndrome-dependent classical post-processing alone cannot remove the exponential sampling overhead inherent to logical-layer error mitigation, placing fundamental constraints on post-selection and related syndrome-conditioned classical strategies. 
At the same time, our exponential separations identify syndrome-conditioned quantum control---in particular, adapting the logical measurement basis to the observed syndrome record---as the key ingredient that unlocks genuine exponential improvements. 
Conceptually, this highlights error syndromes as a form of side information whose value depends qualitatively on the allowed interface between decoding and logical control, and provides a principled criterion for when syndrome records should be actively exploited rather than discarded. 
Practically, our framework offers guidance for designing early fault-tolerant architectures and experimental protocols: if one seeks genuine gains in logical observable estimation, it is not sufficient to merely condition classical post-processing on syndrome data; one must instead enable syndrome-dependent quantum operations at the logical layer.

Future research could explore several promising directions.
A particularly interesting direction is to develop more practical quantum syndrome-aware estimation protocols.
Although we proposed an asymptotically optimal protocol in Sec.~\ref{sec_quantum_3}, it relies on characterizing the logical state in advance.
Moreover, the required syndrome-dependent measurements are generally non-stabilizer measurements and thus may not be implementable fault-tolerantly without additional resources.
It is therefore important, from a practical standpoint, to construct fault-tolerant quantum syndrome-aware protocols that use only stabilizer operations or other native fault-tolerant primitives and avoid extensive pre-characterization of the logical state.

Another interesting direction is to investigate the utility of syndrome information in quantum metrology.
In error-corrected quantum metrology~\cite{zhou2018achieving, zhou2021asymptotic}, probe systems are encoded into quantum error-correcting codes so that the effect of noise can be suppressed while preserving sensitivity to the signal parameter.
Although the objective in metrology is to estimate a physical signal parameter rather than a logical observable, the performance is again characterized by quantum Fisher information, suggesting a possible connection to our framework.
However, this extension is not a straightforward application of our present state-estimation framework, because metrology is fundamentally a channel-estimation problem: the parameter is imprinted through a signal dynamics rather than being encoded in a fixed quantum state.
Extending our framework from state estimation to channel estimation could provide a way to address this problem and clarify when syndrome-aware strategies improve the precision of error-corrected metrology beyond standard protocols.

We can further explore the utility of quantum syndrome-aware protocols for other tasks by considering alternative information measures.
Here, we studied observable estimation through the quantum Fisher information of a classical--quantum state, where the classical register encodes the observed error syndrome and the quantum register is the corresponding syndrome-conditioned noisy logical state.
Since syndrome-aware protocols can be modeled as operations on such classical--quantum states, the utility of syndrome information for other tasks may be analyzed by evaluating appropriate information measures on the same state.
For example, advantages in quantum-state discrimination or hypothesis testing can be quantified via trace distance or relative entropy of the classical--quantum state.
Such extensions would further broaden the applicability of our framework.

Finally, it will be crucial to develop early fault-tolerant algorithms that actively exploit syndrome information to further suppress the impact of errors.
Since quantum syndrome-aware protocols can yield exponential improvements for logical observable estimation, it is natural to expect that adaptive modifications of logical operations conditioned on observed syndromes could also enhance algorithmic performance.
Developing syndrome-adaptive variants of key primitives such as Hamiltonian simulation and phase estimation may therefore accelerate the practical deployment of early fault-tolerant devices.

\section*{Acknowledgements}
The authors wish to thank Seok-Hyung Lee for helpful discussions.
K.T. is supported by the Program for Leading Graduate Schools (MERIT-WINGS) and JST BOOST Grant No. JPMJBS2418.
This work was supported in part with funding from Google.org.
H.K. is supported by the IITP (RS-2025-25464252) and the NRF (RS-2025-25464492, RS-2024-00442710) funded by the Ministry of Science and ICT (MSIT), Korea.
L.J. is supported by the ARO(W911NF-23-1-0077), ARO MURI (W911NF-21-1-0325), AFOSR MURI (FA9550-21-1-0209, FA9550-23-1-0338), DARPA (HR0011-24-9-0359, HR0011-24-9-0361), NSF (ERC-1941583, OMA-2137642, OSI-2326767, CCF-2312755, OSI-2426975), and the Packard Foundation (2020-71479).
N.Y. is supported by JST Grant Number JPMJPF2221, JST CREST Grant Number JPMJCR23I4, IBM Quantum, Google Quantum AI, JST ASPIRE Grant Number JPMJAP2316, JST ERATO Grant Number JPMJER2302, JST [Moonshot R\&D] [Grant Number JPMJMS256J], and Institute of AI and Beyond of the University of Tokyo.

\appendix
\section{Estimation protocol applied at the physical layer}
\label{sec_physical_layer}
In this section, we consider an estimation protocol performed directly on a noisy \emph{physical} quantum state and show that it is equivalent (up to a unitary transformation) to the syndrome-aware estimation protocol, as illustrated in Fig.~\ref{fig_decoding_estimation}.
Throughout this section, operators with $\bar{\cdot}$ act on the $k$-qubit logical Hilbert space, while operators without $\bar{\cdot}$ act on the $n$-qubit physical Hilbert space.

As in the main text, consider the task of preparing a $k$-qubit logical state
\begin{equation}
    \bar{\rho}(\vb*{\theta}) = \frac{1}{2^k}\qty(\bar{I} + \sum_{i=1}^{4^k-1}\theta_i\bar{P}_i),
\end{equation}
parameterized by an unnormalized generalized Bloch vector $\vb*{\theta} = (\theta_1,\ldots, \theta_{4^k-1})$~\cite{kimura2003bloch}.
Our goal is to estimate the expectation value of a logical Pauli operator $\bar{P}_i$, i.e., the parameter $\theta_i=\mathrm{tr}[\bar{\rho}(\vb*{\theta})\bar{P}_i]$.
Here, $\{\bar{P}_i\}_{i=0}^{4^k-1}$ denotes the set of $k$-qubit Pauli operators, with $\bar{P}_0=\bar{I}$.

When the device is subject to noise, we encode $\bar{\rho}(\vb*{\theta})$ into an $n$-qubit physical state using an $[[n,k,d]]$ stabilizer code.
Let $\{g_{a}\}_{a=1}^{r}$ be a set of stabilizer generators with $r=n-k$, and let $\{P_{i}\}_{i=1}^{4^{k}-1}$ be a choice of physical representatives of the nontrivial logical Pauli operators.
We describe encoding by an isometry channel $\mathcal{U}_{\mathrm{E}}(\cdot)=U_{\mathrm{E}}\cdot U_{\mathrm{E}}^\dagger$, which can be implemented as
\begin{equation}
    \mathcal{U}_{\mathrm{E}}(\cdot)
    = U_{\mathrm{E}}'(\ketbra{0^{r}}\otimes \cdot)\,U_{\mathrm{E}}'^\dagger,
\end{equation}
where the unitary $U_{\mathrm{E}}'$ satisfies
\begin{equation}
    \begin{aligned}
        U_{\mathrm{E}}' Z_a U_{\mathrm{E}}'^\dagger &= g_a \qquad (1\leq a\leq r), \\
        U_{\mathrm{E}}' (I_{r}\otimes\bar{P}_i) U_{\mathrm{E}}'^\dagger &= P_i \qquad (1\leq i\leq 4^{k}-1).
    \end{aligned}
\end{equation}
Here, $I_{r}$ is the identity operator on the $r$-qubit ancilla system and $Z_a$ is the Pauli-$Z$ operator acting on the $a$-th ancilla qubit.

The encoded physical state $\rho(\vb*{\theta})=\mathcal{U}_{\mathrm{E}}(\bar{\rho}(\vb*{\theta}))$ can be written as
\begin{equation}
    \begin{aligned}
        \rho(\vb*{\theta})
        &= U_{\mathrm{E}}' \qty(\prod_{a=1}^{r}\frac{1}{2}(I + Z_a))\frac{1}{2^k}\qty(I + \sum_{i=1}^{4^k-1}\theta_i\,I_{r}\otimes\bar{P}_i) U_{\mathrm{E}}'^\dagger \\
        &= \Pi_{0}\frac{1}{2^k}\qty(I + \sum_{i=1}^{4^k-1}\theta_i P_{i}),
    \end{aligned}
\end{equation}
where, for each syndrome $s\in\{0,1\}^r$, we define the projector
\begin{equation}
    \Pi_{s} = \prod_{a=1}^{r}\frac{1}{2}\bigl(I + (-1)^{s_a}g_{a}\bigr).
\end{equation}

Next, consider a physical Pauli channel $\mathcal{E}_Q(\cdot)=Q\cdot Q^\dagger$.
Then,
\begin{equation}
    \mathcal{E}_Q(\rho(\vb*{\theta}))
    = \Pi_{s}\frac{1}{2^k}\qty(I + \sum_{i=1}^{4^k-1}(-1)^{\langle P_i, Q \rangle}\theta_iP_{i}),
\end{equation}
where we define $\langle P,Q\rangle=0$ if $P$ and $Q$ commute and $\langle P,Q\rangle=1$ if they anticommute.
The syndrome $s\in\{0,1\}^r$ associated with $Q$ is determined by $s_a=\langle g_a,Q\rangle$.
Therefore, after physical Pauli noise channel $\mathcal{N}$ acts on $\rho(\vb*{\theta})$, the state can be decomposed into syndrome sectors as
\begin{equation}
    \mathcal{N}(\rho(\vb*{\theta}))
    = \sum_{s\in\{0,1\}^r} p_{s}\,\Pi_{s}\frac{1}{2^k}\qty(I + \sum_{i=1}^{4^k-1}\lambda_{i,s}\theta_iP_{i}),
\end{equation}
where $p_s\ge 0$ with $\sum_s p_s=1$ is the probability of obtaining syndrome $s$, and $\lambda_{i,s}\in[-1,1]$ describes the Pauli eigenvalue associated with $P_i$.

We now perform syndrome measurement followed by a Pauli recovery.
Upon observing syndrome $s$, we apply a Pauli operator $R_{s}$ such that $\langle g_a,R_s\rangle=s_a$ for all $a$.
The resulting corrected state through the recovery channel $\mathcal{R}$ is
\begin{equation}
    \begin{aligned}
        &~~~~\mathcal{R}\circ\mathcal{N}(\rho(\vb*{\theta}))\\
        &= \Pi_{0}\frac{1}{2^k}\qty(I + \sum_{i=1}^{4^k-1}\qty(\sum_{s\in\{0,1\}^r} p_s(-1)^{\langle P_i, R_s \rangle} \lambda_{i,s})\theta_iP_{i}).
    \end{aligned}
\end{equation}
Defining the induced logical noise channel
\begin{equation}
    \overline{\mathcal{N}} = \mathcal{U}_{\mathrm{E}}^\dagger \circ \mathcal{R}\circ \mathcal{N}\circ\mathcal{U}_{\mathrm{E}},
\end{equation}
we obtain
\begin{equation}
    \overline{\mathcal{N}}(\bar{\rho}(\vb*{\theta})) = \frac{1}{2^k}\qty(\bar{I} + \sum_{i=1}^{4^k-1} (1-2\epsilon_i)\theta_i\bar{P}_{i}),
\end{equation}
where
\begin{equation}
    \epsilon_i = \frac{1}{2}\qty(1-\sum_{s\in\{0,1\}^r} p_s\,(-1)^{\langle P_i, R_{s} \rangle}\lambda_{i,s}).
\end{equation}

On the other hand, applying $(U_{\mathrm{E}}')^\dagger$ to the noisy physical state $\mathcal{N}(\rho(\vb*{\theta}))$ yields a unitarily equivalent state of the form
\begin{equation}
    \mathcal{N}(\rho(\vb*{\theta}))
    \cong
    \sum_{s\in\{0,1\}^r} p_{s}\ketbra{s}\otimes\frac{1}{2^k}\qty(\bar{I} + \sum_{i=1}^{4^k-1}\lambda_{i,s}\theta_i\bar{P}_{i}),
\end{equation}
where $\cong$ means unitary equivalence.
Moreover, by applying an $s$-conditioned logical Pauli correction $\bar{R}_s$ satisfying
$\langle P_i, R_s\rangle=\langle \bar{P}_i,\bar{R}_s\rangle$ for all $i$,
we obtain
\begin{equation}
    \mathcal{N}(\rho(\vb*{\theta}))
    \cong
    \sum_{s\in\{0,1\}^r} p_s \ketbra{s} \otimes \overline{\mathcal{N}}_s(\bar{\rho}(\vb*{\theta})),
\end{equation}
where the $s$-conditioned noise channel is given as
\begin{equation}
    \overline{\mathcal{N}}_s(\bar{\rho}(\vb*{\theta})) = \frac{1}{2^k}\qty(\bar{I} + \sum_{i=1}^{4^k-1}(1-2\epsilon_{i,s})\theta_i\bar{P}_i)
\end{equation}
with
\begin{equation}
    \epsilon_{i,s}=\frac{1}{2}\qty(1-(-1)^{\langle P_i, R_s\rangle}\lambda_{i,s}).
\end{equation}
Therefore, the noisy physical state $\mathcal{N}(\rho(\vb*{\theta}))$ is unitarily equivalent to the classical--quantum state
$\sum_s p_s\ketbra{s}\otimes \overline{\mathcal{N}}_s(\bar{\rho}(\vb*{\theta}))$ used to model syndrome-aware estimation in the main text.

Since the quantum Fisher information is invariant under unitary transformations, any estimation protocol performed directly on the noisy physical state $\mathcal{N}(\rho(\vb*{\theta}))$ can be mapped to a quantum syndrome-aware estimation protocol acting on
$\sum_s p_s\ketbra{s}\otimes \overline{\mathcal{N}}_s(\bar{\rho}(\vb*{\theta}))$
with the same achievable precision.
In particular, protocols that perform estimation directly at the physical layer---including physical-level error-mitigation schemes such as those studied in Ref.~\cite{jeon2026quantum}---can be interpreted within our framework as quantum syndrome-aware estimation protocols that retain and exploit the syndrome record.
Therefore, the information difference between the physical-state setting and the usual syndrome-agnostic logical-state setting is solely characterized by whether the syndrome information is kept or discarded.

\section{Quantum Fisher information of multi-qubit quantum state}
\label{sec_estimation_theory_multiqubit}
In this section, based on Ref.~\cite{watanabe2010optimal}, we review the quantum Fisher information of noiseless and noisy multi-qubit quantum states.

\subsection{Analysis of noiseless multi-qubit quantum state}
First, we review the quantum Fisher information of a $k$-qubit quantum state parameterized as
\begin{equation}
    \bar{\rho}(\vb*{\theta}) = \frac{1}{2^k}\qty(\bar{I} + \sum_{i=1}^{4^k-1}\theta_i\bar{P}_i).
\end{equation}
Here, the parameters $\vb*{\theta} = (\theta_1,\ldots, \theta_{4^k-1})$ form an unnormalized Bloch vector~\cite{kimura2003bloch}, which represents the expectation value of the Pauli operator $\bar{P}_i$ as $\theta_i = \mathrm{tr}[\bar{\rho}(\vb*{\theta})\bar{P}_i]$.
By defining the covariance matrix $C(\vb*{\theta})$ of this state as
\begin{equation}
    \begin{aligned}
    C(\vb*{\theta})_{ij} 
    &= \mathrm{tr}\qty[\bar{\rho}(\vb*{\theta})\frac{1}{2}\{\bar{P}_i-\theta_i\bar{I},\bar{P}_j-\theta_j\bar{I}\}]\\
    &= \mathrm{tr}\qty[\bar{\rho}(\vb*{\theta})\frac{1}{2}\{\bar{P}_i,\bar{P}_j\}]
    - \mathrm{tr}\qty[\bar{\rho}(\vb*{\theta})\bar{P}_i]\,
      \mathrm{tr}\qty[\bar{\rho}(\vb*{\theta})\bar{P}_j],
    \end{aligned}
\end{equation}
the SLD operator $\bar{L}_i$ of this state, defined as in Eq.~\eqref{eq_SLD}, can be represented as
\begin{equation}
    \label{eq_SLD_multiqubit}
    \bar{L}_i = \sum_{j}(C(\vb*{\theta})^{-1})_{ij}(\bar{P}_j-\theta_{j}\bar{I}).
\end{equation}
This is because the SLD operator $\bar{L}_i$ satisfies
\begin{equation}
    \begin{aligned}
        \mathrm{tr}[\bar{\rho}(\vb*{\theta})\frac{1}{2}\{\bar{L}_i,\bar{P}_j-\theta_j\bar{I}\}]
        &= \mathrm{tr}\qty[\frac{1}{2}\{\bar{\rho}(\vb*{\theta}),\bar{L}_i\}(\bar{P}_j-\theta_j\bar{I})] \\
        &= \mathrm{tr}[\partial_{\theta_i}\bar{\rho}(\vb*{\theta})(\bar{P}_j-\theta_j\bar{I})] \\
        &= \mathrm{tr}[2^{-k}\bar{P}_i(\bar{P}_j-\theta_j\bar{I})] \\
        &= \delta_{ij}.
    \end{aligned}
\end{equation}
Therefore, the quantum Fisher information matrix $J$, defined in Eq.~\eqref{eq_QFIM}, is given as
\begin{equation}
    J = C(\vb*{\theta})^{-1}C(\vb*{\theta})C(\vb*{\theta})^{-1} = C(\vb*{\theta})^{-1},
\end{equation}
and the quantum Fisher information for the parameter $\mathrm{tr}[\bar{\rho}(\vb*{\theta})\bar{P}_i] = \theta_i$, defined in Eq.~\eqref{eq_QFI}, is given as
\begin{equation}
    J_{\theta_i}=\frac{1}{(J^{-1})_{ii}} = \frac{1}{C(\vb*{\theta})_{ii}} = \frac{1}{1-\theta_i^2}.
\end{equation}

Assume that our goal is to estimate the parameter $\theta_i = \mathrm{tr}[\bar{\rho}(\vb*{\theta})\bar{P}_i]$ given $N$ copies of $\bar{\rho}(\vb*{\theta})$, in the situation where all the parameters in $\vb*{\theta}$ are unknown.
In this case, the optimal measurement operator, given by Eq.~\eqref{eq_optimal_measurement_basis}, is
\begin{equation}
    \sum_{j}(J^{-1})_{ij}\bar{L}_j = \bar{P}_i-\theta_{i}\bar{I}.
\end{equation}
Therefore, the optimal measurement basis is given by the target Pauli operator $\bar{P}_i$ itself, and the optimal unbiased estimator $\theta_i^{\mathrm{est}}$ is obtained by averaging the measurement outcomes.
Indeed, $\mathbb{E}[\theta_i^{\mathrm{est}}] = \mathrm{tr}[\bar{\rho}(\vb*{\theta})\bar{P}_i] = \theta_i$, and its variance is
$\mathrm{Var}[\theta_i^{\mathrm{est}}] = N^{-1}(\mathrm{tr}[\bar{\rho}(\vb*{\theta})\bar{P}_i^2]-\mathrm{tr}[\bar{\rho}(\vb*{\theta})\bar{P}_i]^2)
= N^{-1}(1-\theta_i^2)$,
which attains the quantum Cram\'er--Rao bound in Eq.~\eqref{eq_QCRB}.

\subsection{Analysis of noisy multi-qubit quantum state}
Next, we consider the case where the quantum state $\bar{\rho}(\vb*{\theta})$ is affected by a noise channel $\overline{\mathcal{N}}$, resulting in a noisy quantum state
\begin{equation}
    \overline{\mathcal{N}}(\bar{\rho}(\vb*{\theta})) = \frac{1}{2^k}\qty(\bar{I} + \sum_{i=1}^{4^k-1}((A\vb*{\theta})_i+c_i)\bar{P}_i).
\end{equation}
Here, $A\in\mathbb{R}^{(4^{k}-1)\times (4^{k}-1)}$ is a Pauli transfer matrix defined as $A_{ij} = 2^{-k}\mathrm{tr}[\bar{P}_i\overline{\mathcal{N}}(\bar{P}_j)]$, and $c_i = 2^{-k}\mathrm{tr}[\bar{P}_i\overline{\mathcal{N}}(\bar{I})]$.
By defining the modified parameter $\vb*{\theta}' = A\vb*{\theta}+\vb*{c}$, the noisy state can be represented as $\overline{\mathcal{N}}(\bar{\rho}(\vb*{\theta})) = \bar{\rho}(\vb*{\theta}')$.
For $\bar{\rho}(\vb*{\theta}')$, the SLD operators and quantum Fisher information matrix in terms of the parameters $\vb*{\theta}'$ can be represented as
\begin{align}
    \bar{L}_i' &= \sum_{j}(C(\vb*{\theta}')^{-1})_{ij}(\bar{P}_j-\theta_{j}'\bar{I}), \\
    J' &= C(\vb*{\theta}')^{-1}.
\end{align}
Meanwhile, the SLD operators and quantum Fisher information matrices under the two parameterizations $\vb*{\theta}$ and $\vb*{\theta}'$ are related as
\begin{align}
    \bar{L}_i &= \sum_j \pdv{\theta'_j}{\theta_i} \bar{L}'_j = \sum_j (A^{\mathrm{T}})_{ij} \bar{L}'_j, \\
    J_{ij} &= \sum_{kl} \pdv{\theta'_k}{\theta_i} J_{kl}' \pdv{\theta'_l}{\theta_j} = \sum_{kl} (A^{\mathrm{T}})_{ik} J_{kl}' A_{lj}.
\end{align}
Therefore, the SLD operators and the quantum Fisher information matrix of the noisy state $\overline{\mathcal{N}}(\bar{\rho}(\vb*{\theta}))$ in terms of the parameters $\vb*{\theta}$ are
\begin{align}
    \bar{L}_i &= \sum_{jk} (A^{\mathrm{T}})_{ij} (C(A\vb*{\theta}+\vb*{c})^{-1})_{jk}(\bar{P}_k-\theta_{k}'\bar{I}), \\
    J &= A^{\mathrm{T}} C(A\vb*{\theta}+\vb*{c})^{-1} A,
\end{align}
and the quantum Fisher information for the parameter $\theta_i = \mathrm{tr}[\bar{\rho}(\vb*{\theta})\bar{P}_i]$ is given as
\begin{equation}
    J_{\theta_i} = \frac{1}{(A^{-1} C(A\vb*{\theta}+\vb*{c})(A^{\mathrm{T}})^{-1})_{ii}}.
\end{equation}

Consider the task of estimating the parameter $\theta_i = \mathrm{tr}[\bar{\rho}(\vb*{\theta})\bar{P}_i]$ given $N$ copies of the noisy state $\overline{\mathcal{N}}(\bar{\rho}(\vb*{\theta}))$.
In this case, we have
\begin{equation}
    \sum_{j}(J^{-1})_{ij}\bar{L}_j = \sum_{j}(A^{-1})_{ij}(\bar{P}_j-\theta_{j}'\bar{I}) = \qty(\overline{\mathcal{N}}^{-1})^{\dagger}(\bar{P}_i) + a \bar{I}
\end{equation}
for some $a\in\mathbb{R}$.
Therefore, the optimal measurement basis is given by $\big(\overline{\mathcal{N}}^{-1}\big)^{\dagger}(\bar{P}_i)$, and the optimal unbiased estimator $\theta_i^{\mathrm{est}}$ is obtained by averaging the measurement outcomes.
Indeed, $\mathbb{E}[\theta_i^{\mathrm{est}}] = \mathrm{tr}[\overline{\mathcal{N}}(\bar{\rho}(\vb*{\theta}))\big(\overline{\mathcal{N}}^{-1}\big)^{\dagger}(\bar{P}_i)] = \theta_i$, and its variance is
$\mathrm{Var}[\theta_i^{\mathrm{est}}] =N^{-1} (A^{-1} C(A\vb*{\theta}+\vb*{c})(A^{\mathrm{T}})^{-1})_{ii}$, which attains the quantum Cram\'er--Rao bound in Eq.~\eqref{eq_QCRB}.

In the main text, we especially consider the case where the noise channel $\overline{\mathcal{N}}$ is a Pauli channel, so $A=\Lambda$ becomes a diagonal matrix and $\vb*{c}=0$.
By representing $\Lambda = \mathrm{diag}(1-2\epsilon_1,\ldots,1-2\epsilon_{4^{k}-1})$, the optimal measurement basis is represented as $\big(\overline{\mathcal{N}}^{-1}\big)^{\dagger}(\bar{P}_i) = (1-2\epsilon_i)^{-1}\bar{P}_i$.
Therefore, the optimal estimation protocol boils down to measuring the noisy state $\overline{\mathcal{N}}(\bar{\rho}(\vb*{\theta}))$ with the original Pauli operator $\bar{P}_i$, and then rescaling the measurement outcome by a factor of $(1-2\epsilon_i)^{-1}$.

\section{Proof of the results for classical syndrome-aware protocols}
\label{sec_proof_classical}
In this section, we provide detailed proofs of the results for classical syndrome-aware protocols discussed in Sec.~\ref{sec_classical}.

\subsection{Proof of Theorem~\ref{thm_1}}
\label{sec_proof_classical_1}
\begin{proof}
    From the definition, the effective logical error rate satisfies
    \begin{equation}
        \label{eq_proof_c_1_1}
        \begin{aligned}
            \epsilon_i^{\mathrm{cSynd}} 
            &= f_{\theta_i}^{-1}\qty(\sum_{s\in\mathcal{S}} p_{s}  f_{\theta_i}\qty(\epsilon_{i,s}))\\
            &\leq \sum_{s\in\mathcal{S}} p_{s}f_{\theta_i}^{-1}\qty( f_{\theta_i}\qty(\epsilon_{i,s}))\\
            &= \sum_{s\in\mathcal{S}} p_{s}\epsilon_{i,s} \\
            &= \epsilon_{i}.
        \end{aligned}
    \end{equation}
    Here, $f_{\theta_i}^{-1}$ is the inverse function of 
    \begin{equation}
        f_{\theta_i}(\epsilon)
        = \frac{(1-2\epsilon)^2}{1-(1-2\epsilon)^2\theta_i^2}
    \end{equation}
    defined in Eq.~\eqref{eq_def_f} on the domain $0\leq\epsilon\leq1/2$.
    The first equality follows from the definition of the effective logical error rate in Eq.~\eqref{eq_effective_logical_error_rate_classical}.
    The next inequality follows from Jensen's inequality (since $f_{\theta_i}^{-1}$ is convex).
    The last equality follows from Eq.~\eqref{eq_conditional_logical_error}.
    Moreover, from the convexity of $f^{-1}_{\theta_i}$, we have
    \begin{equation}
        \label{eq_proof_c_1_2}
        \begin{aligned}
            \epsilon_{i,s}
            &= f^{-1}_{\theta_i}(f_{\theta_i}(\epsilon_{i,s})) \\
            &\leq \frac{f^{-1}_{\theta_i}(f_{\theta_i}(0)) - f^{-1}_{\theta_i}(f_{\theta_i}(1/2))}{f_{\theta_i}(0) - f_{\theta_i}(1/2)}\bigl(f_{\theta_i}(\epsilon_{i,s})-f_{\theta_i}(1/2)\bigr) \\
            &~~~~+  f^{-1}_{\theta_i}(f_{\theta_i}(1/2))\\
            &= -\frac{1-\theta_i^2}{2}f_{\theta_i}(\epsilon_{i,s}) + \frac{1}{2}.
        \end{aligned}
    \end{equation}
    Multiplying by $p_s$ and summing over $s\in\mathcal{S}$, we obtain
    \begin{equation}
        \label{eq_proof_c_1_3}
        \begin{aligned}
            \epsilon_{i}
            &\leq -\frac{1-\theta_i^2}{2}f_{\theta_i}(\epsilon_i^{\mathrm{cSynd}}) + \frac{1}{2} \\
            &= \epsilon_i^{\mathrm{cSynd}}\frac{2-2\epsilon_i^{\mathrm{cSynd}}}{1-\theta_i^2(1-2\epsilon_i^{\mathrm{cSynd}})^2}.
        \end{aligned}
    \end{equation}
    Meanwhile, we have
    \begin{equation}
        \label{eq_proof_c_1_4}
        \frac{2-2\epsilon_i^{\mathrm{cSynd}}}{1-\theta_i^2(1-2\epsilon_i^{\mathrm{cSynd}})^2} \leq \frac{2}{1-\theta_i^2},
    \end{equation}
    since the left-hand side is maximized when $\epsilon_i^{\mathrm{cSynd}}=0$.
    By combining Eqs.~\eqref{eq_proof_c_1_1}, \eqref{eq_proof_c_1_3}, \eqref{eq_proof_c_1_4}, we obtain
    \begin{equation}
        \frac{1-\theta_i^2}{2}\epsilon_i \leq \epsilon_i^{\mathrm{cSynd}} \leq \epsilon_i.
    \end{equation}
    
    The Haar-average result follows from the fact that
    \begin{equation}
        \label{eq_proof_c_1_5}
        \mathbb{E}_{\mathrm{Haar}}[\theta^2_i]
        = \mathbb{E}_{\mathrm{Haar}}[\mathrm{tr}[\ketbra{\psi} \bar{P}_i]^2]
        = \frac{1}{2^k+1}.
    \end{equation}
    See Ref.~\cite{mele2024introduction} for details.
\end{proof}

\subsection{Proof of Corollary~\ref{cor_1}}
\label{sec_proof_classical_2}
\begin{proof}
    From Theorem~\ref{thm_1}, we have
    \begin{equation}
        \label{eq_proof_c_2_1}
        \frac{1-\theta_i^2}{2}\epsilon_i \leq \epsilon_i^{\mathrm{cSynd}} \leq \epsilon_i.
    \end{equation}
    This implies that if $\epsilon_i=\Theta(\eta^l)$, then $\epsilon_i^{\mathrm{cSynd}}=\Theta(\eta^l)$, assuming $\theta_i\neq\pm1$.

    Moreover, the code threshold for the logical error rate $\epsilon_i$ (respectively, the effective logical error rate $\epsilon_i^{\mathrm{cSynd}}$) is defined as the maximal physical error rate such that, if $\eta$ is below this value, $\epsilon_i\to0$ (respectively, $\epsilon_i^{\mathrm{cSynd}}\to0$) as the code distance $d$ increases.
    From Eq.~\eqref{eq_proof_c_2_1}, $\epsilon_i\to0$ implies $\epsilon_i^{\mathrm{cSynd}}\to0$, and $\epsilon_i^{\mathrm{cSynd}}\to0$ implies $\epsilon_i\to0$.
    Therefore, the threshold defined in terms of the effective logical error rate $\epsilon_i^{\mathrm{cSynd}}$ is identical to that defined in terms of the logical error rate $\epsilon_i$ under the maximum-likelihood decoder.
\end{proof}

\subsection{Proof of Corollary~\ref{cor_2}}
\label{sec_proof_classical_3}
\begin{proof}
    Since $N^{\mathrm{cSynd}} \leq N^{\mathrm{L}}$ is obvious from Eq.~\eqref{eq_cfi_inequality}, we prove the other inequality.
    From Eq.~\eqref{eq_proof_c_1_2}, we have
    \begin{equation}
        f_{\theta_i}(\epsilon_{i,s}) \leq \frac{1-2\epsilon_{i,s}}{1-\theta_i^2}.
    \end{equation}
    Multiplying by $p_s$ and summing over $s\in\mathcal{S}$, we obtain
    \begin{equation}
        \label{eq_proof_c_3_2}
        F_{\theta_i}^{\mathrm{Synd}} \leq \frac{1-2\epsilon_i}{1-\theta_i^2}
        = \frac{1}{1-\theta_i^2}\sqrt{\frac{f_{\theta_i}(\epsilon_{i})}{1+\theta_i^2f_{\theta_i}(\epsilon_i)}}.
    \end{equation}
    Taking the inverse of the above inequality, we obtain
    \begin{equation}
        (1-\theta_i^2)\sqrt{(F_{\theta_i}^{\mathrm{L}})^{-1}+\theta_i^2}\leq (F_{\theta_i}^{\mathrm{Synd}})^{-1}.
    \end{equation}
    Dividing both sides by $\sigma_{\mathrm{target}}^2$ yields the claimed inequality.

    The Haar-average results follow from the fact that
    \begin{equation}
        \begin{aligned}
            \mathbb{E}_{\mathrm{Haar}}[N^{\mathrm{cSynd}}]
            &=\frac{\mathbb{E}_{\mathrm{Haar}}[(F_{\theta_i}^{\mathrm{Synd}})^{-1}]}{\sigma_{\mathrm{target}}^2}\\
            &\geq \frac{1-\mathbb{E}_{\mathrm{Haar}}[\theta_i^2]}{\sigma_{\mathrm{target}}^2(1-2\epsilon_i)}\\
            &= \frac{1}{\sigma_{\mathrm{target}}^2(1-2\epsilon_i)}\frac{2^k}{2^k+1}\\
            &\sim \frac{1}{\sigma_{\mathrm{target}}^2(1-2\epsilon_i)},
        \end{aligned}
    \end{equation}
    and
    \begin{equation}
        \begin{aligned}
            \mathbb{E}_{\mathrm{Haar}}[N^{\mathrm{L}}]
            &=\frac{\mathbb{E}_{\mathrm{Haar}}[(F_{\theta_i}^{\mathrm{L}})^{-1}]}{\sigma_{\mathrm{target}}^2}\\
            &= \frac{1}{\sigma_{\mathrm{target}}^2}\qty(\frac{1}{(1-2\epsilon_i)^2} - \mathbb{E}_{\mathrm{Haar}}[\theta_i^2])\\
            &= \frac{1}{\sigma_{\mathrm{target}}^2}\qty(\frac{1}{(1-2\epsilon_i)^2} - \frac{1}{2^k+1})\\
            &\sim \frac{1}{\sigma_{\mathrm{target}}^2}\frac{1}{(1-2\epsilon_i)^2}.
        \end{aligned}
    \end{equation}
    Here, we use Eqs.~\eqref{eq_QFI_agnostic}, \eqref{eq_proof_c_1_5}, and \eqref{eq_proof_c_3_2}.
\end{proof}

\subsection{Proof of Proposition~\ref{prop_1}}
\label{sec_proof_classical_4}
\begin{proof}
    From Eq.~\eqref{eq_eff_err_rate_classical_even}, we have
    \begin{equation}
        \label{eq_c_4_1}
        \lim_{\eta\to0} \frac{\epsilon_i^{\mathrm{cSynd}}}{\epsilon_i}
        = \lim_{\eta\to0} \frac{\sum_{s\in\mathcal{S}_{\Theta(1)}}p_s\bigl(f_{\theta_i}(\epsilon_{i,s})-f_{\theta_i}(0)\bigr)}{\sum_{s\in\mathcal{S}_{\Theta(1)}} p_s\,f_{\theta_i}^{(1)}\,\epsilon_{i,s}}.
    \end{equation}
    Here, we expand the function $f_{\theta_i}(\epsilon)$ around $\epsilon=0$ as
    $f_{\theta_i}(\epsilon) = f_{\theta_i}(0) + f_{\theta_i}^{(1)}\epsilon + O(\epsilon^2)$,
    where $f_{\theta_i}^{(1)} = -4(1-\theta_i^2)^{-2}$.
    From the convexity of the function $f_{\theta_i}(\epsilon)$, we have
    \begin{equation}
        \label{eq_c_4_2}
        \frac{1-\theta_i^2}{2} \leq \frac{f_{\theta_i}(\epsilon_{i,s}) - f_{\theta_i}(0)}{f_{\theta_i}^{(1)}\epsilon_{i,s}} < 1
    \end{equation}
    for $s\in\mathcal{S}_{\Theta(1)}$, where $0<\epsilon_{i,s}\leq 1/2$.
    The equality holds if and only if $\epsilon_{i,s} = 1/2$.
    Combining Eqs.~\eqref{eq_c_4_1} and~\eqref{eq_c_4_2}, we obtain the claimed inequality in Proposition~\ref{prop_1}.
\end{proof}

\subsection{Proof of Proposition~\ref{prop_2}}
\label{sec_proof_classical_5}
\begin{proof}
    From Eq.~\eqref{eq_eff_err_rate_classical_odd}, we have
    \begin{equation}
        \begin{aligned}
            &~~~~\lim_{\eta\to0} \frac{\epsilon_i^{\mathrm{cSynd}}}{\epsilon_i}\\
            &= \lim_{\eta\to0}
            \frac{\sum_{s\in\mathcal{S}_{\Theta(\eta)}} p_s f_{\theta_i}^{(1)}\epsilon_{i,s}
            + \sum_{s\in\mathcal{S}_{\Theta(1)}} p_s\bigl(f_{\theta_i}(\epsilon_{i,s})-f_{\theta_i}(0)\bigr)}
            {\sum_{s\in\mathcal{S}_{\Theta(\eta)}} p_s f_{\theta_i}^{(1)}\epsilon_{i,s}
            + \sum_{s\in\mathcal{S}_{\Theta(1)}} p_s f_{\theta_i}^{(1)}\epsilon_{i,s}}.
        \end{aligned}
    \end{equation}
    For $s\in\mathcal{S}_{\Theta(1)}$ (where $0<\epsilon_{i,s}\leq 1/2$), we have
    \begin{equation}
        \frac{1-\theta_i^2}{2} \leq \frac{f_{\theta_i}(\epsilon_{i,s}) - f_{\theta_i}(0)}{f_{\theta_i}^{(1)}\epsilon_{i,s}} < 1.
    \end{equation}
    Therefore, the limit ratio satisfies $\lim_{\eta\to0}\epsilon_i^{\mathrm{cSynd}}/\epsilon_i<1$ if and only if $\mathcal{S}_{\Theta(1)}\neq\emptyset$.
\end{proof}

\section{Proof of the results for quantum syndrome-aware protocols}
\label{sec_proof_quantum}
In this section, we provide detailed proofs of the results for quantum syndrome-aware protocols discussed in Sec.~\ref{sec_quantum}.
Before presenting the proofs, we first introduce useful lemmas about the covariance matrix $C(\vb*{\theta})$.
We then prove Proposition~\ref{prop_3}, Lemma~\ref{lem_1}, and Theorem~\ref{thm_2} in this order.
Finally, we perform additional analysis on the quantum Fisher information of the classical--quantum state in Eq.~(\ref{eq_cq_state_simplified}).

Before going into details, we note that the ideal logical state $\bar{\rho}(\vb*{\theta})$ should be of full rank for the covariance matrix $C(\vb*{\theta})$ to have an inverse.
Therefore, strictly speaking, whenever an inverse matrix appears, we take the ideal state to be a Haar-random state affected by a small depolarizing noise,
$\bar{\rho}((1-\delta)\vb*{\theta}) = (1-\delta)\ketbra{\psi} + \delta \frac{\bar{I}}{2^k}$,
and then take $\delta\to0$ after the calculation.
Accordingly, when writing $\mathbb{E}_{\mathrm{Haar}}[f(\vb*{\theta})]$ for some function $f$ in which matrix inversion appears, we mean that we first take the Haar average of $f((1-\delta)\vb*{\theta})$ over Haar-random states $\ket{\psi}$, and then take the limit $\delta\to0$.
Nevertheless, in most cases where $f(\vb*{\theta})$ does not involve matrix inversion, we can simply set $\delta = 0$ and take the average.
Therefore, we regard the ideal logical state as $\bar{\rho}((1-\delta)\vb*{\theta})$ only when an explicit calculation involving matrix inversion is required.

\subsection{Properties of the covariance matrix}
\label{sec_proof_quantum_1}
For a $k$-qubit quantum state
\begin{equation}
    \bar{\rho}(\vb*{\theta}) = \frac{1}{2^k}\qty(\bar{I} + \sum_{i=1}^{4^k-1}\theta_i\bar{P}_i),
\end{equation}
we have defined its covariance matrix as
\begin{equation}
    C(\vb*{\theta})_{ij}
    = \mathrm{tr}\qty[\bar{\rho}(\vb*{\theta})\frac{1}{2}\{\bar{P}_i,\bar{P}_j\}]
    - \mathrm{tr}\qty[\bar{\rho}(\vb*{\theta})\bar{P}_i]\,
      \mathrm{tr}\qty[\bar{\rho}(\vb*{\theta})\bar{P}_j].
\end{equation}
In particular, when $\bar{\rho}(\vb*{\theta})=\ketbra{\psi}$ is a pure state, we obtain the following lemma.

\begin{lem}
    \label{lem_2}
    Let $\bar{\rho}(\vb*{\theta}) = \ketbra{\psi}$ be a pure state.
    Then, the corresponding $(4^k-1)\times(4^k-1)$ covariance matrix $C(\vb*{\theta})$ has the following properties:
    \begin{itemize}
        \item $C(\vb*{\theta})\vb*{\theta} = \vb*{0}$.
        \item $C(\vb*{\theta})$ has eigenvalues $0$ and $2^{k-1}$ with multiplicities $(2^k-1)^2$ and $2(2^k-1)$, respectively.
    \end{itemize}
\end{lem}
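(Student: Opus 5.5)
The plan is to express $C(\vb*{\theta})$ as a Gram matrix of vectorized operators and read off its spectrum from a projector. For a pure state $\bar{\rho}=\ketbra{\psi}$, define for each $i$ the vector $\ket{v_i}=(\bar{P}_i-\theta_i\bar{I})\ket{\psi}$. Since $\bar{P}_i,\bar{P}_j$ are Hermitian, $\mathrm{Re}\braket{v_i|v_j}=\mathrm{tr}[\bar\rho\tfrac12\{\bar P_i,\bar P_j\}]-\theta_i\theta_j=C(\vb*{\theta})_{ij}$. Hence $C(\vb*{\theta})=\mathrm{Re}\,G$, where $G_{ij}=\braket{v_i|v_j}$ is a Gram matrix. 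Equivalently, $C=V^{\mathrm T}V$ with $V$ the real $2^{k+1}\times(4^k-1)$ matrix whose columns are the real and imaginary parts of the $\ket{v_i}$. All the $\ket{v_i}$ lie in the $(2^k-1)$-dimensional complex subspace orthogonal to $\ket{\psi}$, because $\braket{\psi|v_i}=\theta_i-\theta_i=0$. So $\mathrm{rank}\,C\le 2(2^k-1)$, which already accounts for the kernel multiplicity $(4^k-1)-2(2^k-1)=(2^k-1)^2$ once the nonzero eigenvalues are pinned down.

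First I would prove $C\vb*{\theta}=\vb*{0}$. Using $\sum_{j=0}^{4^k-1}\theta_j\bar{P}_j=2^k\ketbra{\psi}$ with $\theta_0=1$, we get $\sum_{j\ge1}\theta_j(\bar{P}_j-\theta_j\bar I)\ket{\psi}=(2^k\ketbra{\psi}-\bar I-\sum_{j\ge1}\theta_j^2\bar I)\ket\psi$. Purity gives $\sum_{j\ge1}\theta_j^2=2^k-1$, so this vector vanishes. Therefore $\sum_j\theta_j\ket{v_j}=0$, and then $(C\vb*{\theta})_i=\mathrm{Re}\sum_j\braket{v_i|v_j}\theta_j=0$.

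For the spectrum, I would compute $VV^{\mathrm T}$, a real operator on $\mathbb{R}^{2^{k+1}}\cong\mathbb{C}^{2^k}$, using Pauli completeness $\sum_{j=0}^{4^k-1}\bar P_j X\bar P_j=2^k\,\mathrm{tr}[X]\bar I$. Realified, $VV^{\mathrm T}$ corresponds to the map $w\mapsto\tfrac12\sum_j\bigl(\ket{v_j}\braket{v_j|w}+\ket{\bar v_j}\braket{\bar v_j|\bar w}\bigr)$, i.e.\ the average of a complex-linear part $\sum_j\ketbra{v_j}$ and an antilinear part $\sum_j\ket{v_j}\braket{\bar v_j|\cdot}$ with complex conjugation. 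The linear part is handled by completeness: $\sum_{j\ge0}\bar P_j\ketbra{\psi}\bar P_j=2^k\bar I$, and after subtracting the $\theta_j$ terms this gives $2^k(\bar I-\ketbra{\psi})$ on the complement of $\ket\psi$. The antilinear part requires $\sum_j\bar P_j\ket\psi\bra{\psi}\bar P_j^{\mathrm T}$ type sums, which are governed by the transpose/swap identity $\sum_j\bar P_j\otimes\bar P_j^{*}=2^k\cdot$(maximally entangled projector)$\cdot 2^k$-style relations, and should vanish or cancel on the complement. If that works out, $VV^{\mathrm T}=2^{k-1}\Pi_\perp$, where $\Pi_\perp$ is the real projector of rank $2(2^k-1)$ onto the realified complement of $\ket\psi$. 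Then $C=V^{\mathrm T}V$ shares its nonzero spectrum with $VV^{\mathrm T}$: eigenvalue $2^{k-1}$ with multiplicity $2(2^k-1)$, and $0$ otherwise.

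The main obstacle is the antilinear cross term, where the factor $1/2$ and the exact cancellation must be verified carefully. As a fallback, I would instead use a trace check: $\mathrm{tr}\,C=\sum_{j\ge1}(1-\theta_j^2)=(4^k-1)-(2^k-1)=2^k(2^k-1)$. I would also compute $\mathrm{tr}\,C^2=\sum_{ij}C_{ij}^2$ via Haar-free Pauli twirl identities. Showing $\mathrm{tr}\,C^2=2^{k-1}\,\mathrm{tr}\,C$ together with $\mathrm{rank}\le 2(2^k-1)$ would force every nonzero eigenvalue to equal $2^{k-1}$ by the Cauchy–Schwarz equality case.
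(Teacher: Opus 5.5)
Your route works and is genuinely different from the paper's. As written, though, it leaves open exactly the step that fixes the spectrum. That step does close, as follows.

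First, a slip: the second term in your display should be $\ket{v_j}\langle \bar v_j|\bar w\rangle=\ket{v_j}\overline{\langle v_j|w\rangle}$, not $\ket{\bar v_j}\langle\bar v_j|\bar w\rangle$. Your next sentence already uses the correct form.

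Write $\psi$ and $v_j=(\bar P_j-\theta_j\bar I)\psi$ as column vectors. The antilinear part is then governed by $M=\sum_{j\ge1}v_jv_j^{\mathrm T}$ (no conjugation), and $M$ vanishes identically:
\begin{itemize}
\item The swap identity $\sum_{j\ge0}\bar P_j\otimes\bar P_j=2^k\,\mathrm{SWAP}$ gives, entrywise, $\sum_{j\ge0}\bar P_jX\bar P_j^{\mathrm T}=2^kX^{\mathrm T}$.
\item Apply this to the symmetric matrix $X=\psi\psi^{\mathrm T}$.
\item Use $\sum_{j\ge1}\theta_j\bar P_j\psi=(2^k\ket{\psi}\bra{\psi}-\bar I)\psi=(2^k-1)\psi$ and $\sum_{j\ge1}\theta_j^2=2^k-1$.
\end{itemize}
Expanding $M$ term by term then gives
\[
M=\bigl[(2^k-1)-2(2^k-1)+(2^k-1)\bigr]\psi\psi^{\mathrm T}=0 .
\]
Your evaluation of the linear part, $L=\sum_{j\ge1}\ket{v_j}\bra{v_j}=2^k(\bar I-\ket{\psi}\bra{\psi})$, is correct. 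Combined with $M=0$, it shows that $VV^{\mathrm T}$ is the realification of $2^{k-1}(\bar I-\ket{\psi}\bra{\psi})$. That is $2^{k-1}$ times a real projector of rank $2(2^k-1)$. The lemma then follows because $VV^{\mathrm T}$ and $V^{\mathrm T}V$ share their nonzero spectrum.

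Your trace fallback is not an independent route. One finds $\mathrm{tr}\,C^2=\tfrac12\mathrm{tr}\,L^2+\tfrac12\mathrm{tr}\,MM^\dagger$, so the Cauchy--Schwarz equality case again requires $M=0$.

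The paper works on the parameter side instead. It writes $\bar O=\sum_io_i\bar P_i$ in an orthonormal basis whose first vector is $\ket{\psi}$. It notes that $\vb*{o}^{\mathrm T}C(\vb*{\theta})\vb*{o}=\|\vb*{h}\|^2$, where $\vb*{h}$ is the off-diagonal column. The kernel is then read off as the block-diagonal traceless operators: this space has dimension $(2^k-1)^2$ and contains $\sum_i\theta_i\bar P_i=2^k\ket{\psi}\bra{\psi}-\bar I$. Finally it checks, on an explicit Hilbert--Schmidt-orthonormal basis of the operators $\ket{\psi}\bra{w}+\ket{w}\bra{\psi}$, that $C$ acts as $2^{k-1}$.

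Your factorization is the dual picture: $V\vb*{o}$ is the realification of $(\bar O-\langle\bar O\rangle)\ket{\psi}$, which is precisely the paper's $\vb*{h}$. The two versions trade off as follows.
\begin{itemize}
\item The paper's argument needs only the orthonormality of $2^{-k/2}\bar P_i$. It also gives an explicit description of the image of $C(\vb*{\theta})$, which is reused in Step~3 of the proof of Proposition~\ref{prop_3}.
\item Yours is basis-free and gets the whole spectrum from one frame computation, at the cost of two Pauli-sum identities. The image is recovered as the image of $V^{\mathrm T}$, whose elements correspond to the operators $2^{k-1}(\ket{w}\bra{\psi}+\ket{\psi}\bra{w})$ with $w\perp\psi$.
\end{itemize}
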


\begin{proof}
    For $\vb*{o}\in \mathbb{R}^{4^k-1}$, define an observable $\bar{O} = \sum_i o_i\bar{P}_i$.
    Let $\{\ket{\psi_i}\}_{i=1}^{2^k}$ denote an orthonormal basis of the $k$-qubit Hilbert space with $\ket{\psi_1} = \ket{\psi}$, and write the matrix representation of $\bar{O}$ in this basis as
    \begin{equation}
        \begin{pmatrix}
            a & \vb*{h}^\dagger \\
            \vb*{h} & B \\
        \end{pmatrix}
    \end{equation}
    with $a\in\mathbb{C}$, $\vb*{h}\in\mathbb{C}^{2^k-1}$, and $B\in\mathbb{C}^{(2^k-1)\times(2^k-1)}$ satisfying $a + \mathrm{tr}[B] = 0$.
    Note that the first row and column correspond to $\ket{\psi_1} = \ket{\psi}$.
    Then,
    \begin{equation}
        \vb*{o}^{\mathrm{T}}C(\vb*{\theta})\vb*{o} = \ev*{\bar{O}^2}{\psi} - \ev*{\bar{O}}{\psi}^2 = \norm{\vb*{h}}^2.
    \end{equation}
    This implies that if the operator $\bar{O} = \sum_i o_i\bar{P}_i$ is of the form
    \begin{equation}
        \begin{pmatrix}
            a & \vb*{0}^\dagger \\
            \vb*{0} & B \\
        \end{pmatrix},
    \end{equation}
    then the corresponding vector $\vb*{o}$ belongs to the kernel of $C(\vb*{\theta})$.
    Since $\sum_i \theta_i \bar{P}_i = 2^k\ketbra{\psi}-\bar{I}$ is of this form, we have $C(\vb*{\theta})\vb*{\theta} = \vb*{0}$.
    Moreover, since a Hermitian matrix $B\in\mathbb{C}^{(2^k-1)\times(2^k-1)}$ has $(2^k-1)^2$ real degrees of freedom, the dimension of the kernel of $C(\vb*{\theta})$ is $(2^k-1)^2$.

    Meanwhile, the image of $C(\vb*{\theta})$ is spanned by vectors $\vb*{o}$ whose corresponding operators $\bar{O} = \sum_i o_i\bar{P}_i$ are of the form
    \begin{equation}
        \begin{pmatrix}
            0 & \vb*{h}^\dagger \\
            \vb*{h} & 0 \\
        \end{pmatrix}.
    \end{equation}
    Let us choose an orthonormal basis $\{\vb*{o}_i\}_{i=1}^{2(2^k-1)}$ of the support such that $\bar{O}_i = \sum_j (\vb*{o}_i)_j\bar{P}_j$ is given by
    \begin{equation}
        \begin{aligned}
            &2^{(k-1)/2}(\ketbra{\psi}{\psi_{(i+3)/2}} + \ketbra{\psi_{(i+3)/2}}{\psi})~~~~(\text{$i$: odd}),\\
            &2^{(k-1)/2}i(\ketbra{\psi}{\psi_{i/2+1}} - \ketbra{\psi_{i/2+1}}{\psi})~~~~(\text{$i$: even}).
        \end{aligned}
    \end{equation}
    We can check the orthonormality of this basis as $\vb*{o}_i^{\mathrm{T}}\vb*{o}_j = 2^{-k}\mathrm{tr}[\bar{O}_i\bar{O}_j] = \delta_{ij}$.
    Moreover,
    \begin{equation}
        \vb*{o}_i^{\mathrm{T}}C(\vb*{\theta})\vb*{o}_j = \frac{1}{2}\ev*{\{\bar{O}_i,\bar{O}_j\}}{\psi} = 2^{k-1} \delta_{ij}.
    \end{equation}
    This means that all the nonzero eigenvalues  of $C(\vb*{\theta})$ are $2^{k-1}$.
    Therefore, $C(\vb*{\theta})$ has eigenvalues $0$ and $2^{k-1}$ with multiplicities $(2^k-1)^2$ and $2(2^k-1)$, respectively.
\end{proof}

Next, as in the main text, let us decompose the $k$-qubit Hilbert space as $\mathcal{H}=\mathcal{H}_A\otimes\mathcal{H}_B$, where $\mathcal{H}_A$ consists of $k'(<k/2)$ qubits and $\mathcal{H}_B$ consists of the remaining $k-k'$ qubits.
We also decompose the index set $\mathcal{J}=\{1,\ldots,4^k-1\}$ of nontrivial $k$-qubit logical Pauli operators $\{\bar P_j\}_{j\in\mathcal J}$ as $\mathcal{J}=\mathcal{J}_A\sqcup\mathcal{J}_B$, where
\begin{equation}
    \begin{aligned}
        \mathcal{J}_A &= \{j\in\mathcal{J}~|~\bar{P}_j=\bar P_A\otimes \bar P_B,\ \bar P_A\neq \bar I_A\},\\
        \mathcal{J}_B &= \{j\in\mathcal{J}~|~\bar{P}_j=\bar I_A\otimes \bar P_B,\ \bar P_B\neq \bar I_B\}.
    \end{aligned}
\end{equation}
Here, $\bar P_A$ and $\bar P_B$ denote Pauli operators on subsystems $A$ and $B$, respectively.
In the following, the subscripts $A$ and $B$ on vectors and matrices refer to the blocks associated with the decomposition $\mathcal{J}=\mathcal{J}_A\sqcup\mathcal{J}_B$.

The quantum Fisher information matrix for all parameters is given by $C(\vb*{\theta})^{-1}\in\mathbb{R}^{|\mathcal{J}|\times|\mathcal{J}|}$.
Therefore, when the parameters $\theta_j$ with $j\in\mathcal{J}_B$ are known, while the parameters $\theta_j$ with $j\in\mathcal{J}_A$ are treated as the target and nuisance parameters, the relevant quantum Fisher information matrix is the principal submatrix $(C(\vb*{\theta})^{-1})_{AA}\in\mathbb{R}^{|\mathcal{J}_A|\times|\mathcal{J}_A|}$.
The inverse of this matrix admits the following variational characterization.

\begin{lem}
    \label{lem_3}
    For any vector $\vb*{u}_A\in\mathbb{R}^{|\mathcal{J}_A|}$, we have
    \begin{equation}
        \begin{aligned}
            &~~~~\vb*{u}_A^{\mathrm{T}} \bigl((C(\vb*{\theta})^{-1})_{AA}\bigr)^{-1} \vb*{u}_A \\
            &= \min_{\vb*{u}_B\in\mathbb{R}^{|\mathcal{J}_B|}} \mathrm{Var}_{\bar{\rho}(\vb*{\theta})} \left( \vb*{u}_A^{\mathrm{T}}\bar{\vb*{P}}_{A} + \vb*{u}_B^{\mathrm{T}}\bar{\vb*{P}}_{B} \right),
        \end{aligned}
    \end{equation}
    where the minimum is achieved by choosing
    \begin{equation}
        \vb*{u}_B = - (C(\vb*{\theta})_{BB})^{-1} C(\vb*{\theta})_{BA} \vb*{u}_A.
    \end{equation}
    Here, $\mathrm{Var}_{\bar{\rho}(\vb*{\theta})}(\cdot)$ denotes the variance of the operator $\cdot$ with respect to $\bar{\rho}(\vb*{\theta})$, and $\bar{\vb*{P}}_{A}$ and $\bar{\vb*{P}}_{B}$ denote the vectors of logical Pauli operators whose indices belong to $\mathcal{J}_A$ and $\mathcal{J}_B$, respectively.
\end{lem}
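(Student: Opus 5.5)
This is the standard Schur-complement identity for a Gaussian-type quadratic form, and I would prove it by linear algebra on the covariance matrix. The first step is to rewrite the variance. For real coefficient vectors, $\bar O=\vb*{u}_A^{\mathrm{T}}\bar{\vb*{P}}_A+\vb*{u}_B^{\mathrm{T}}\bar{\vb*{P}}_B$ is Hermitian. By the definition of $C(\vb*{\theta})$ in Eq.~\eqref{eq_covariance_matrix},
\begin{equation}
\mathrm{Var}_{\bar\rho(\vb*{\theta})}(\bar O)=\vb*{u}_A^{\mathrm{T}}C_{AA}\vb*{u}_A+2\vb*{u}_A^{\mathrm{T}}C_{AB}\vb*{u}_B+\vb*{u}_B^{\mathrm{T}}C_{BB}\vb*{u}_B ,
\end{equation}
where $C_{XY}=C(\vb*{\theta})_{XY}$ and we use $C_{BA}=C_{AB}^{\mathrm{T}}$.

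Following the convention stated at the start of this appendix, I replace $\bar\rho(\vb*{\theta})$ by $\bar\rho((1-\delta)\vb*{\theta})$ whenever needed, so that the state is full rank. Then $C(\vb*{\theta})$ is positive definite, since $\vb*{o}^{\mathrm{T}}C\vb*{o}$ is the variance of a nonzero traceless Hermitian operator in a full-rank state and is therefore strictly positive. In particular, $C_{BB}$ is positive definite and invertible.

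Next I minimize over $\vb*{u}_B$. The expression is a strictly convex quadratic in $\vb*{u}_B$. Setting its gradient to zero gives $C_{BB}\vb*{u}_B+C_{BA}\vb*{u}_A=0$, so the minimizer is $\vb*{u}_B=-(C_{BB})^{-1}C_{BA}\vb*{u}_A$, as claimed. Substituting back gives the minimum value
\begin{equation}
\vb*{u}_A^{\mathrm{T}}\bigl(C_{AA}-C_{AB}C_{BB}^{-1}C_{BA}\bigr)\vb*{u}_A .
\end{equation}

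The last step identifies this Schur complement with $((C^{-1})_{AA})^{-1}$. I would invoke the block-inversion formula for a positive definite matrix: the $AA$ block of $C^{-1}$ equals $(C_{AA}-C_{AB}C_{BB}^{-1}C_{BA})^{-1}$. As a quick check, write $C^{-1}=\begin{pmatrix}X&Y\\Y^{\mathrm{T}}&Z\end{pmatrix}$. Then $CC^{-1}=I$ gives $C_{BA}X+C_{BB}Y^{\mathrm{T}}=0$ and $C_{AA}X+C_{AB}Y^{\mathrm{T}}=I$. Eliminating $Y^{\mathrm{T}}=-C_{BB}^{-1}C_{BA}X$ yields $(C_{AA}-C_{AB}C_{BB}^{-1}C_{BA})X=I$, so $X^{-1}$ is the Schur complement. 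Combining this with the previous display proves the lemma.

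I expect the only delicate point to be invertibility, since Lemma~\ref{lem_2} shows $C(\vb*{\theta})$ is singular for pure states. This is why I handle it through the $\delta$-regularization above rather than by a separate argument; everything after that is routine algebra.
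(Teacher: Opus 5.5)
Your proposal is correct and follows the same route as the paper's proof: write the variance as the quadratic form in $C(\vb*{\theta})$, minimize over $\vb*{u}_B$ to obtain the Schur complement $C(\vb*{\theta})_{AA}-C(\vb*{\theta})_{AB}(C(\vb*{\theta})_{BB})^{-1}C(\vb*{\theta})_{BA}$, and identify it with $\bigl((C(\vb*{\theta})^{-1})_{AA}\bigr)^{-1}$ by block inversion. Your two additions are sound and make explicit what the paper leaves implicit: positive definiteness of $C$ for a full-rank state, which gives strict convexity and invertibility of $C_{BB}$, and the short derivation of the block-inverse formula.
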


Note that for the covariance matrix $C(\vb*{\theta})$ and a vector
$\vb*{u}\in\mathbb{R}^{|\mathcal{J}|}$, we have
\begin{equation}
    \vb*{u}^{\mathrm{T}}C(\vb*{\theta})\vb*{u}
    =
    \mathrm{Var}_{\bar{\rho}(\vb*{\theta})}
    \left(\vb*{u}^{\mathrm{T}}\bar{\vb*{P}}\right).
\end{equation}
Therefore, Lemma~\ref{lem_3} shows that, once the parameters $\theta_j$ with $j\in\mathcal{J}_B$ are known, one can reduce the estimator variance by adding observables supported on the $B$ sector whose expectation values are known.
\begin{proof}
    From the definition of the covariance matrix, we have
    \begin{equation}
        \begin{aligned}
            & \mathrm{Var}_{\bar{\rho}(\vb*{\theta})} \left( \vb*{u}_A^{\mathrm{T}}\bar{\vb*{P}}_{A} + \vb*{u}_B^{\mathrm{T}}\bar{\vb*{P}}_{B} \right)\\
            &=
            \begin{pmatrix}
                \vb*{u}_A^{\mathrm{T}} & \vb*{u}_B^{\mathrm{T}}
            \end{pmatrix}
            C(\vb*{\theta})
            \begin{pmatrix}
                \vb*{u}_A\\
                \vb*{u}_B
            \end{pmatrix}\\
            &= \vb*{u}_A^{\mathrm{T}} C(\vb*{\theta})_{AA}\vb*{u}_A + 2\vb*{u}_A^{\mathrm{T}} C(\vb*{\theta})_{AB}\vb*{u}_B + \vb*{u}_B^{\mathrm{T}} C(\vb*{\theta})_{BB}\vb*{u}_B.
        \end{aligned}
    \end{equation}
    This quadratic form in $\vb*{u}_B$ is minimized at
    \begin{equation}
        \vb*{u}_B = - (C(\vb*{\theta})_{BB})^{-1} C(\vb*{\theta})_{BA}\vb*{u}_A,
    \end{equation}
    and the minimum value is
    \begin{equation}
        \vb*{u}_A^{\mathrm{T}} \left[ C(\vb*{\theta})_{AA} - C(\vb*{\theta})_{AB} (C(\vb*{\theta})_{BB})^{-1} C(\vb*{\theta})_{BA} \right] \vb*{u}_A.
    \end{equation}
    By the Schur complement identity,
    \begin{equation}
        \left((C(\vb*{\theta})^{-1})_{AA}\right)^{-1}
        = C(\vb*{\theta})_{AA} - C(\vb*{\theta})_{AB} (C(\vb*{\theta})_{BB})^{-1} C(\vb*{\theta})_{BA}.
    \end{equation}
    Therefore,
    \begin{equation}
        \begin{aligned}
            &~~~~ \min_{\vb*{u}_B\in\mathbb{R}^{|\mathcal{J}_B|}} \mathrm{Var}_{\bar{\rho}(\vb*{\theta})} \left( \vb*{u}_A^{\mathrm{T}}\bar{\vb*{P}}_{A} + \vb*{u}_B^{\mathrm{T}}\bar{\vb*{P}}_{B} \right)\\
            &= \vb*{u}_A^{\mathrm{T}} \bigl((C(\vb*{\theta})^{-1})_{AA}\bigr)^{-1} \vb*{u}_A.
        \end{aligned}
    \end{equation}
\end{proof}

\subsection{Proof of Proposition~\ref{prop_3}}
\label{sec_proof_quantum_2}
\begin{proof}
We divide the proof into three steps.
First, using the invariance of the Haar measure under local Clifford transformations, we reduce the problem to evaluating two constants $\alpha_0$ and $\alpha_1$.
These constants are the two possible diagonal entries of
$\mathbb{E}_{\mathrm{Haar}}[((C(\vb*{\theta})^{-1})_{AA})^{-1}]$:
$\alpha_0$ corresponds to the diagonal entries indexed by Pauli operators of the form $\bar{P}_A\otimes\bar{I}_B$, while $\alpha_1$ corresponds to those indexed by Pauli operators of the form $\bar{P}_A\otimes\bar{P}_B$ with $\bar{P}_B\neq\bar{I}_B$.
The desired matrix inequality then follows by showing that both $\alpha_0$ and $\alpha_1$ are upper bounded by $2/2^{k-2k'}$.

\paragraph*{Step 1: reduction to two constants.}
Let an orthonormal matrix $V$ denote the Pauli transfer matrix of a unitary operator $\bar{U} = \bar{U}_A\otimes\bar{U}_B$, i.e., $V_{ij} = 2^{-k}\mathrm{tr}[\bar{P}_i\bar{U}\bar{P}_j\bar{U}^\dagger]$.
Since the off-diagonal blocks of $V$ are zero ($V_{AB} = V_{BA} = 0$), we have
\begin{equation}
    \label{eq_proof_q_2_1}
    \begin{aligned}
        &~~~~\mathbb{E}_{\mathrm{Haar}}[((C(\vb*{\theta})^{-1})_{AA})^{-1}]\\
        &= \mathbb{E}_{\mathrm{Haar}}[((C(V\vb*{\theta})^{-1})_{AA})^{-1}] \\
        &= V_{AA}\mathbb{E}_{\mathrm{Haar}}[((C(\vb*{\theta})^{-1})_{AA})^{-1}](V_{AA})^{\mathrm{T}}.
    \end{aligned}
\end{equation}
Here, we use the fact that $\mathbb{E}_{\mathrm{Haar}}[f(\vb*{\theta})] = \mathbb{E}_{\mathrm{Haar}}[f(V\vb*{\theta})]$ and $C(V\vb*{\theta}) = VC(\vb*{\theta})V^{\mathrm{T}}$.
Let $\bar{U}$ be a Pauli operator that anti-commutes with $\bar{P}_j$ and commutes with $\bar{P}_{j'}$ (note that any Pauli operator can be represented as $\bar{U} = \bar{U}_A\otimes\bar{U}_B$).
Then, $V$ is a diagonal matrix with $V_{jj} = -1$ and $V_{j'j'} = 1$, so we have
\begin{equation}
    \begin{aligned}
        &~~~~(\mathbb{E}_{\mathrm{Haar}}[((C(\vb*{\theta})^{-1})_{AA})^{-1}])_{jj'} \\
        &= -(\mathbb{E}_{\mathrm{Haar}}[((C(\vb*{\theta})^{-1})_{AA})^{-1}])_{jj'}.
    \end{aligned}
\end{equation}
This implies that the off-diagonal element ($j\neq j'$) is $0$, so $\mathbb{E}_{\mathrm{Haar}}[((C(\vb*{\theta})^{-1})_{AA})^{-1}]$ is diagonal.

Let us further decompose the set $\mathcal{J}_A$ as $\mathcal{J}_A = \mathcal{J}_{A_0}\sqcup\mathcal{J}_{A_1}$, where
\begin{equation}
    \begin{aligned}
        \mathcal{J}_{A_0} &= \{j\in\mathcal{J}_A ~|~ \bar{P}_j = \bar{P}_A \otimes \bar{I}_B, ~ \bar{P}_A \neq \bar{I}_A\}, \\
        \mathcal{J}_{A_1} &= \{j\in\mathcal{J}_A ~|~ \bar{P}_j = \bar{P}_A \otimes \bar{P}_B, ~ \bar{P}_A \neq \bar{I}_A, ~ \bar{P}_B \neq \bar{I}_B\}.
    \end{aligned}
\end{equation}
For arbitrary $j,j'\in\mathcal{J}_{A_0}$, there exists a Clifford unitary $\bar{U} = \bar{U}_A\otimes\bar{U}_B$ that satisfies $\bar{U}\bar{P}_j\bar{U}^\dagger = \bar{P}_{j'}$ and $\bar{U}\bar{P}_{j'}\bar{U}^\dagger = \bar{P}_{j}$, meaning that the corresponding Pauli transfer matrix satisfies $(V\vb*{\theta})_j = \theta_{j'}$ and $(V\vb*{\theta})_{j'} = \theta_{j}$.
Therefore, from Eq.~\eqref{eq_proof_q_2_1}, $(j,j)$-th and $(j',j')$-th element in $\mathbb{E}_{\mathrm{Haar}}[((C(\vb*{\theta})^{-1})_{AA})^{-1}]$ must be identical.
The same statement holds for $\mathcal{J}_{A_1}$.
Thus, we obtain
\begin{equation}
    \label{eq_proof_q_2_2}
    \mathbb{E}_{\mathrm{Haar}}[((C(\vb*{\theta})^{-1})_{AA})^{-1}] = \alpha_0 \Pi_{A_0} + \alpha_1\Pi_{A_1},
\end{equation}
where $\Pi_{A_0} = \sum_{j\in\mathcal{J}_{A_0}}\vb*{e}_j \vb*{e}_j^{\mathrm{T}}$ ($\Pi_{A_1} = \sum_{j\in\mathcal{J}_{A_1}}\vb*{e}_j \vb*{e}_j^{\mathrm{T}}$) represents the projector to the subsystems spanned by basis indexed by $\mathcal{J}_{A_0}$ ($\mathcal{J}_{A_1}$), and constants $\alpha_0$ and $\alpha_1$ are defined as
\begin{equation}
    \begin{aligned}
        \alpha_0 &= \frac{1}{\abs{\mathcal{J}_{A_0}}} \mathbb{E}_{\mathrm{Haar}}[\mathrm{tr}[((C(\vb*{\theta})^{-1})_{AA})^{-1}\Pi_{A_0}]], \\
        \alpha_1 &=  \frac{1}{\abs{\mathcal{J}_{A_1}}}  \mathbb{E}_{\mathrm{Haar}}[\mathrm{tr}[((C(\vb*{\theta})^{-1})_{AA})^{-1}\Pi_{A_1}]] .
    \end{aligned}
\end{equation}

\paragraph*{Step 2: evaluation of $\alpha_0$.}
Next, let us evaluate $\alpha_0$.
Let the Schmidt decomposition of the state $\ket{\psi}$ be
\begin{equation}
    \ket{\psi} = \sum_{a=1}^{d_A}\sqrt{\lambda_a}\ket{a}_A\ket{a}_B,
\end{equation}
where $d_A = 2^{k'}$ ($d_B=2^{k-k'}$) is the dimension of the subsystem $A$ ($B$) and $\lambda_a > 0$ are Schmidt coefficients.
Note that for Haar-random state $\ket{\psi}$, its Schmidt rank is almost surely $d_A$, since we assume $k' < k/2$ and thus $d_A < d_B$.
Using the Schmidt basis, we define the orthogonal basis of the traceless Hermitian matrix of subsystem $A$ as
\begin{equation}
    \begin{aligned}
        \bar{X}^{ab}_A &= \sqrt{\frac{d_A}{2}}(\ketbra{a}{b}_A + \ketbra{b}{a}_A),\\
        \bar{Y}^{ab}_A &= -i\sqrt{\frac{d_A}{2}}(\ketbra{a}{b}_A - \ketbra{b}{a}_A), \\
        \bar{Z}^{l}_A &= \sum_{a=1}^{d_A}h_{a,l}\ketbra{a}{a}_A. \\
    \end{aligned}
\end{equation}
Here, $1\leq a < b \leq d_A$, $l = 1,\ldots,d_A-1$, and coefficients $h_{a,l}$ are taken to satisfy
\begin{equation}
    \mathrm{tr}[\bar{Z}^{l}_A\bar{Z}^{l'}_A] = d_A\delta_{ll'}, ~~~~ \sum_a h_{a,l} = 0.
\end{equation}
Using the orthogonal basis, we define orthonormal basis for the subsystem projected by $\Pi_{A_0}$ as
\begin{equation}
    \begin{aligned}
        (\vb*{x}_A^{ab})^{\mathrm{T}}\bar{\vb*{P}}_{A} &= \bar{X}^{ab}_A \otimes \bar{I}_B, \\
        (\vb*{y}_A^{ab})^{\mathrm{T}}\bar{\vb*{P}}_{A} &= \bar{Y}^{ab}_A \otimes \bar{I}_B,\\
        (\vb*{z}_A^{l})^{\mathrm{T}}\bar{\vb*{P}}_{A} &= \bar{Z}^{l}_A  \otimes \bar{I}_B.
    \end{aligned}
\end{equation}
Then, the constant $\alpha_0$ is represented as
\begin{equation}
    \begin{aligned}
        \alpha_0 = \frac{1}{\abs{\mathcal{J}_{A_0}}}\mathbb{E}_{\mathrm{Haar}}\Bigg[&\sum_{a<b} (\vb*{x}_A^{ab})^{\mathrm{T}}((C(\vb*{\theta})^{-1})_{AA})^{-1}\vb*{x}_A^{ab} \\
        +&\sum_{a<b} (\vb*{y}_A^{ab})^{\mathrm{T}}((C(\vb*{\theta})^{-1})_{AA})^{-1}\vb*{y}_A^{ab} \\
        +&\sum_{l} (\vb*{z}_A^{l})^{\mathrm{T}}((C(\vb*{\theta})^{-1})_{AA})^{-1}\vb*{z}_A^{l} \Bigg].
    \end{aligned}
\end{equation}

Now, we calculate the terms in the above equation.
From Lemma~\ref{lem_3}, we have
\begin{equation}
    \begin{aligned}
        &~~~~(\vb*{x}_A^{ab})^{\mathrm{T}}((C(\vb*{\theta})^{-1})_{AA})^{-1}\vb*{x}_A^{ab} \\
        &\leq \min_{x\in\mathbb{R}} \mathrm{Var}_{\bar{\rho}(\vb*{\theta})} \qty(\bar{X}^{ab}_A \otimes \bar{I}_B + x \bar{I}_A \otimes \bar{X}^{ab}_B),
    \end{aligned}
\end{equation}
where we have defined
\begin{equation}
    \bar{X}^{ab}_B = \sqrt{\frac{d_B}{2}}(\ketbra{a}{b}_B + \ketbra{b}{a}_B).
\end{equation}
The variance and covariance for the operators $\bar{X}_{ab}^A \otimes \bar{I}_B$ and $\bar{I}_A \otimes \bar{X}_{ab}^B$ are represented as
\begin{equation}
    \begin{aligned}
        \mathrm{Var}_{\bar{\rho}(\vb*{\theta})} \qty(\bar{X}^{ab}_A \otimes \bar{I}_B) &= \frac{d_A}{2}(\lambda_a + \lambda_b), \\
        \mathrm{Var}_{\bar{\rho}(\vb*{\theta})} \qty(\bar{I}_A \otimes \bar{X}^{ab}_B) &= \frac{d_B}{2}(\lambda_a + \lambda_b), \\
        \mathrm{Cov}_{\bar{\rho}(\vb*{\theta})} \qty(\bar{X}^{ab}_A \otimes \bar{I}_B, \bar{I}_A \otimes \bar{X}^{ab}_B) &= \sqrt{d_Ad_B}\sqrt{\lambda_a\lambda_b}.\\
    \end{aligned}
\end{equation}
Therefore, 
\begin{equation}
    \begin{aligned}
        &~~~~(\vb*{x}_A^{ab})^{\mathrm{T}}((C(\vb*{\theta})^{-1})_{AA})^{-1}\vb*{x}_A^{ab} \\
        &\leq \min_{x\in\mathbb{R}} \mathrm{Var}_{\bar{\rho}(\vb*{\theta})} \qty(\bar{X}^{ab}_A \otimes \bar{I}_B + x \bar{I}_A \otimes \bar{X}^{ab}_B) \\
        &= \mathrm{Var}_{\bar{\rho}(\vb*{\theta})} \qty(\bar{X}^{ab}_A \otimes \bar{I}_B) - \frac{\mathrm{Cov}_{\bar{\rho}(\vb*{\theta})} \qty(\bar{X}^{ab}_A \otimes \bar{I}_B, \bar{I}_A \otimes \bar{X}^{ab}_B)^2}{\mathrm{Var}_{\bar{\rho}(\vb*{\theta})} \qty(\bar{I}_A \otimes \bar{X}^{ab}_B)} \\
        &= \frac{d_A}{2}\frac{(\lambda_a - \lambda_b)^2}{\lambda_a+\lambda_b}.
    \end{aligned}
\end{equation}
In the same manner, we obtain
\begin{equation}
    (\vb*{y}_A^{ab})^{\mathrm{T}}((C(\vb*{\theta})^{-1})_{AA})^{-1}\vb*{y}_A^{ab}
    \leq \frac{d_A}{2}\frac{(\lambda_a - \lambda_b)^2}{\lambda_a+\lambda_b}.
\end{equation}
Meanwhile, for the vector $\vb*{z}_A^{l}$, we have
\begin{equation}
    \begin{aligned}
        &~~~~(\vb*{z}_A^{l})^{\mathrm{T}}((C(\vb*{\theta})^{-1})_{AA})^{-1}\vb*{z}_A^{l} \\
        &\leq  \mathrm{Var}_{\bar{\rho}(\vb*{\theta})} \qty(\bar{Z}^{l}_A \otimes \bar{I}_B - \bar{I}_A \otimes \bar{Z}^{l}_B) \\
        &= 0,
    \end{aligned}
\end{equation}
where $\bar{Z}^{l}_B$ is defined as
\begin{equation}
    \bar{Z}^{l}_B = \sum_{a=1}^{d_A}h_{a,l}\ketbra{a}{a}_B.
\end{equation}
Thus, we obtain
\begin{equation}
    \label{eq_proof_q_2_3}
    \alpha_0 \leq \frac{d_A}{d_A^2-1}\mathbb{E}_{\mathrm{Haar}}\qty[\sum_{1\leq a < b \leq d_A}\frac{(\lambda_a - \lambda_b)^2}{\lambda_a+\lambda_b}].
\end{equation}

Let us define $u=\lambda_a+\lambda_b$ and $v=\lambda_a-\lambda_b$.
Then
\begin{equation}
    \left(\lambda_a-d_A^{-1}\right)^2+\left(\lambda_b-d_A^{-1}\right)^2
    = \frac{1}{2}\left(v^2+(u-2d_A^{-1})^2\right).
\end{equation}
Hence
\begin{equation}
    \begin{aligned}
        &~~~~ 2d_A\left[\left(\lambda_a-d_A^{-1}\right)^2+\left(\lambda_b-d_A^{-1}\right)^2\right] - \frac{(\lambda_a-\lambda_b)^2}{\lambda_a+\lambda_b}\\
        &=\frac{u(u-2d_A^{-1})^2+(u-d_A^{-1})v^2}{d_A^{-1}u}.
    \end{aligned}
\end{equation}
Since the numerator is a linear function of $v^2$ with $0\leq v^2\leq u^2$, its minimum is attained at $v^2=0$ or $v^2=u^2$.
At $v^2=0$, it is equal to $u(u-2d_A^{-1})^2\geq 0$.
At $v^2=u^2$, it is equal to
\begin{equation}
    u\qty(2(u-5d_A^{-1}/4)^2+7d_A^{-2}/8) \geq 0.
\end{equation}
Therefore,
\begin{equation}
    \frac{(\lambda_a-\lambda_b)^2}{\lambda_a+\lambda_b}
    \leq 2d_A\left[\left(\lambda_a-\frac{1}{d_A}\right)^2+\left(\lambda_b-\frac{1}{d_A}\right)^2\right].
\end{equation}
Combining this with Eq.~\eqref{eq_proof_q_2_3}, we obtain
\begin{equation}
    \begin{aligned}
        \alpha_0 
        &\leq \frac{2d_A^2(d_A-1)}{d_A^2-1}\mathbb{E}_{\mathrm{Haar}}\qty[\sum_a\left(\lambda_a-\frac{1}{d_A}\right)^2] \\
        &= \frac{2d_A^2(d_A-1)}{d_A^2-1}\mathbb{E}_{\mathrm{Haar}}\qty[\mathrm{tr}[\bar{\rho}_A^2] - \frac{1}{d_A}], \\
    \end{aligned}
\end{equation}
where $\bar{\rho}_A$ is the reduced density matrix of the subsystem $A$.
Since
\begin{equation}
    \mathbb{E}_{\mathrm{Haar}}\qty[\mathrm{tr}[\bar{\rho}_A^2]] = \frac{d_A+d_B}{d_Ad_B+1},
\end{equation}
(see Example~50 of Ref.~\cite{mele2024introduction},) we obtain 
\begin{equation}
    \label{eq_proof_q_2_6}
    \alpha_0 \leq \frac{2d_A(d_A-1)}{d_Ad_B+1} \leq \frac{2}{2^{k-2k'}}.
\end{equation}

\paragraph*{Step 3: Evaluation of $\alpha_1$.}
Finally, we evaluate $\alpha_1$.
For this purpose, let us evaluate the trace of $((C(\vb*{\theta})^{-1})_{AA})^{-1}$.
By considering the Schur complement, we obtain
\begin{equation}
    \begin{aligned}
        &~~~~\mathrm{tr}[((C(\vb*{\theta})^{-1})_{AA})^{-1}] \\
        &= \mathrm{tr}[C(\vb*{\theta})_{AA}] - \mathrm{tr}[C(\vb*{\theta})_{AB} (C(\vb*{\theta})_{BB})^{-1} C(\vb*{\theta})_{BA}].
    \end{aligned}
\end{equation}
Meanwhile, by considering $((C(\vb*{\theta}))^2)_{BB}$, we have
\begin{equation}
    (C(\vb*{\theta})_{BB})^2 + C(\vb*{\theta})_{BA}C(\vb*{\theta})_{AB} = ((C(\vb*{\theta}))^2)_{BB}.
\end{equation}
By multiplying $(C(\vb*{\theta})_{BB})^{-1}$ and taking the trace, we obtain
\begin{equation}
    \begin{aligned}
        &~~~~\mathrm{tr}[C(\vb*{\theta})_{AB} (C(\vb*{\theta})_{BB})^{-1} C(\vb*{\theta})_{BA}] \\
        &= -\mathrm{tr}[C(\vb*{\theta})_{BB}] + \mathrm{tr}[(C(\vb*{\theta})_{BB})^{-1}((C(\vb*{\theta}))^2)_{BB}],
    \end{aligned}
\end{equation}
which leads to
\begin{equation}
    \label{eq_proof_q_2_8}
    \begin{aligned}
        &~~~~\mathrm{tr}[((C(\vb*{\theta})^{-1})_{AA})^{-1}]\\
        &=\mathrm{tr}[C(\vb*{\theta})] - \mathrm{tr}[(C(\vb*{\theta})_{BB})^{-1}((C(\vb*{\theta}))^2)_{BB}].
    \end{aligned}
\end{equation}
For an accurate calculation of this value, we need to replace $\vb*{\theta}$ with $(1-\delta)\vb*{\theta}$ and take $\delta\to0$.
Under this treatment, we can regard the inverse matrix as a pseudo-inverse matrix by considering an eigenvalue decomposition of the matrices $C((1-\delta)\vb*{\theta})_{BB}$ and $((C((1-\delta)\vb*{\theta}))^2)_{BB}$ and taking $\delta\to0$.
From Lemma~\ref{lem_2}, we recall that the covariance matrix $C(\vb*{\theta})$ can be represented as
\begin{align}
    \label{eq_proof_q_2_9}
    C(\vb*{\theta}) &= 2^{k-1} \Pi_{\mathrm{Im}},
\end{align}
where $\Pi_{\mathrm{Im}}$ represents the projector onto the image of $C(\vb*{\theta})$, whose dimension is $2(2^k-1)$.
Therefore, we have
\begin{equation}
    \label{eq_proof_q_2_10}
    \begin{aligned}
        &~~~~\lim_{\delta\to0}\mathrm{tr}[(C((1-\delta)\vb*{\theta})_{BB})^{-1}((C((1-\delta)\vb*{\theta}))^2)_{BB}]\\
        &= 2^{k-1}\mathrm{tr}[((\Pi_{\mathrm{Im}})_{BB})^{\mathrm{+}}(\Pi_{\mathrm{Im}})_{BB}]\\
        &= 2^{k-1}\bigl(\mathrm{dim}(\mathrm{Im}(C(\vb*{\theta}))) - \mathrm{dim}(\mathrm{Im}(C(\vb*{\theta})) \cap \mathrm{Im}(\Pi_A))\bigr).
    \end{aligned}
\end{equation}
Here, $((\Pi_{\mathrm{Im}})_{BB})^{\mathrm{+}}$ is the pseudo-inverse matrix of $(\Pi_{\mathrm{Im}})_{BB}$, $\Pi_A=\sum_{j\in\mathcal{J}_{A}}\vb*{e}_j\vb*{e}_j^{\mathrm{T}}$ is a projector to the subsystem spanned by basis indexed by $\mathcal{J}_A$, $\mathrm{Im}(\cdot)$ represents the image of $\cdot$, and $\mathrm{dim}(\cdot)$ represents the dimension of a vector space $\cdot$.
The second equality follows from the fact that $\mathrm{tr}[((\Pi_{\mathrm{Im}})_{BB})^{\mathrm{+}}(\Pi_{\mathrm{Im}})_{BB}]$ equals the dimension of the image of the linear map $\vb*{o}\in\mathrm{Im}(C(\vb*{\theta})) \mapsto (I-\Pi_A)\vb*{o}$, together with the rank--nullity theorem.
From Eqs.~\eqref{eq_proof_q_2_8}, \eqref{eq_proof_q_2_9}, \eqref{eq_proof_q_2_10}, we obtain
\begin{equation}
    \label{eq_proof_q_2_4}
    \begin{aligned}
        &~~~~\mathrm{tr}[((C(\vb*{\theta})^{-1})_{AA})^{-1}] \\
        &=2^{k-1}\mathrm{dim}(\mathrm{Im}(C(\vb*{\theta})) \cap \mathrm{Im}(\Pi_A)).
    \end{aligned}
\end{equation}

Now, we analyze the dimension of $\mathrm{Im}(C(\vb*{\theta})) \cap \mathrm{Im}(\Pi_A)$.
From the proof of Lemma~\ref{lem_2}, we note that when $\vb*{o}\in\mathrm{Im}(C(\vb*{\theta}))$, the corresponding operator $\sum_i o_i\bar{P}_i$ is of the form
\begin{equation}
    \sum_i o_i\bar{P}_i = \ketbra{\psi}{w} + \ketbra{w}{\psi},
\end{equation}
where $\ket{w}\in\mathbb{C}^{2^k}$ with $\ket{w}\perp\ket{\psi}$.
Moreover, when $\vb*{o}\in\mathrm{Im}(\Pi_A)$, the operator $\sum_i o_i\bar{P}_i$ should satisfy
\begin{equation}
    \label{eq_proof_q_2_5}
    \mathrm{tr}_A\qty[\sum_i o_i\bar{P}_i] = 0.
\end{equation}
Denote 
\begin{equation}
    \ket{w} = \sum_{a=1}^{d_A}\ket{a}_A\ket{u_a}_B.
\end{equation}
Then Eq.~\eqref{eq_proof_q_2_5} becomes
\begin{equation}
    \sum_{a=1}^{d_A} \sqrt{\lambda_a}(\ketbra{a}{u_a}_B + \ketbra{u_a}{a}_B) = 0.
\end{equation}
This means that each $\ket{u_a}_B$ can be represented as a linear combination of the Schmidt basis $\{\ket{a}_B\}$.
Hence we may write
\begin{equation}
    \ket{u_a}_B = \sum_{b=1}^{d_A} \sqrt{\lambda_b} K_{ba} \ket{b}_B.
\end{equation}
Under this notation, Eq.~\eqref{eq_proof_q_2_5} is equivalent to
\begin{equation}
    K = -K^\dagger.
\end{equation}
Moreover, since $\ket{w}\perp\ket{\psi}$, we have
\begin{equation}
    \sum_{a=1}^{d_A} \lambda_a K_{aa} = 0.
\end{equation}
The space of anti-Hermitian $d_A\times d_A$ matrices has real dimension $d_A^2$, and the above equation imposes one independent real linear constraint.
Therefore,
\begin{equation}
    \mathrm{dim}(\mathrm{Im}(C(\vb*{\theta})) \cap \mathrm{Im}(\Pi_A))=d_A^2-1.
\end{equation}
Combining this with Eq.~\eqref{eq_proof_q_2_4} results in
\begin{equation}
    \mathrm{tr}[((C(\vb*{\theta})^{-1})_{AA})^{-1}] = 2^{k-1}(2^{2k'}-1).
\end{equation}

Now we go back to the analysis of $\alpha_1$.
From Eq.~\eqref{eq_proof_q_2_2}, we have
\begin{equation}
    \label{eq_proof_q_2_7}
    \alpha_1 \leq \frac{\mathbb{E}_{\mathrm{Haar}}\qty[\mathrm{tr}[((C(\vb*{\theta})^{-1})_{AA})^{-1}]]}{(2^{2k'}-1)(2^{2(k-k')}-1)} \leq \frac{1}{2^{k-2k'}}.
\end{equation}
By combining Eq.~\eqref{eq_proof_q_2_2}, Eq.~\eqref{eq_proof_q_2_6}, and Eq.~\eqref{eq_proof_q_2_7}, we obtain
\begin{equation}
    \mathbb{E}_{\mathrm{Haar}}[((C(\vb*{\theta})^{-1})_{AA})^{-1}] \leq \frac{2}{2^{k-2k'}} I.
\end{equation}
\end{proof}

\subsection{Proof of Lemma~\ref{lem_1}}
\label{sec_proof_quantum_3}
\begin{proof}
Let $\overline{\mathcal{N}}_s$ be a Pauli noise channel acting only on $k' < k/2$ logical qubits.
We label the subsystem of the noisy $k'$ qubits as $A$ and the remaining $k-k'$ qubits as $B$.
Let 
\begin{equation}
    \overline{\mathcal{N}}_{A}(\cdot) = \frac{\bar{I}_A}{2^{k'}}\otimes \mathrm{tr}_A[\cdot]
\end{equation}
be an erasure channel on the subsystem $A$.
Since $\overline{\mathcal{N}}_s$ acts only on subsystem $A$, we have
\begin{equation}
    \overline{\mathcal{N}}_{A}\circ\overline{\mathcal{N}}_s = \overline{\mathcal{N}}_{A}.
\end{equation}
Denote the Pauli transfer matrices of $\overline{\mathcal{N}}_s$ and $\overline{\mathcal{N}}_A$ as $\Lambda_s$ and $\Lambda_A$.
Since $\Lambda_{s}C(\Lambda_{s}\vb*{\theta})^{-1}\Lambda_{s}$ and $\Lambda_{A}C(\Lambda_{A}\vb*{\theta})^{-1}\Lambda_{A}$ are the quantum Fisher information matrices of $\overline{\mathcal{N}}_{s}(\bar{\rho}(\vb*{\theta}))$ and $\overline{\mathcal{N}}_{A}(\bar{\rho}(\vb*{\theta})) = \overline{\mathcal{N}}_{A}\circ\overline{\mathcal{N}}_{s}(\bar{\rho}(\vb*{\theta}))$ (see Appendix~\ref{sec_estimation_theory_multiqubit}), we obtain
\begin{equation}
    \Lambda_{s}C(\Lambda_{s}\vb*{\theta})^{-1}\Lambda_{s} \geq \Lambda_{A}C(\Lambda_{A}\vb*{\theta})^{-1}\Lambda_{A}
\end{equation}
from the monotonicity of the quantum Fisher information matrix.
Therefore,
\begin{equation}
    \Delta_i(\overline{\mathcal{N}}_s)
    \leq \frac{\vb*{e}_i^{\mathrm{T}}\qty(C(\vb*{\theta}) - C(\vb*{\theta})\Lambda_{A}C(\Lambda_{A}\vb*{\theta})^{-1}\Lambda_{A}C(\vb*{\theta}))\vb*{e}_i}{4}.
\end{equation}

As in the main text, let us separate the set of indices $\mathcal{J}=\{1,\ldots,4^k-1\}$ for nontrivial $k$-qubit Pauli operators $\bar{P}_j$ as $\mathcal{J}=\mathcal{J}_A\sqcup\mathcal{J}_B$, where
\begin{equation}
    \begin{aligned}
        \mathcal{J}_A &= \{j\in\mathcal{J} ~|~ \bar{P}_j = \bar{P}_A \otimes \bar{P}_B, ~ \bar{P}_A \neq \bar{I}_A\}, \\
        \mathcal{J}_B &= \{j\in\mathcal{J} ~|~ \bar{P}_j = \bar{I}_A \otimes \bar{P}_B\}.
    \end{aligned}
\end{equation}
In the following, the subscripts $A$ and $B$ on vectors and matrices refer to the blocks associated with the decomposition $\mathcal{J}=\mathcal{J}_A\sqcup\mathcal{J}_B$.
Let us represent the covariance matrix as
\begin{equation}
    C(\vb*{\theta})
    =
    \begin{pmatrix}
        C(\vb*{\theta})_{AA} & C(\vb*{\theta})_{AB} \\
        C(\vb*{\theta})_{BA} & C(\vb*{\theta})_{BB}
    \end{pmatrix}.
\end{equation}
The Pauli transfer matrix $\Lambda_{A}$ can be represented as
\begin{equation}
    \Lambda_{A}
    =
    \begin{pmatrix}
        0 & 0 \\
        0 & I
    \end{pmatrix}.
\end{equation}
Therefore, the covariance matrix $C(\Lambda_{A}\vb*{\theta})$ satisfies
\begin{equation}
    C(\Lambda_{A}\vb*{\theta})
    =
    \begin{pmatrix}
        C(\Lambda_{A}\vb*{\theta})_{AA} & 0 \\
        0 & C(\vb*{\theta})_{BB}
    \end{pmatrix}.
\end{equation}
Thus,
\begin{equation}
    \begin{aligned}
        &~~~~C(\vb*{\theta}) - C(\vb*{\theta})\Lambda_{A}C(\Lambda_{A}\vb*{\theta})^{-1}\Lambda_{A}C(\vb*{\theta}) \\
        &=
        \begin{pmatrix}
            C(\vb*{\theta})_{AA} - C(\vb*{\theta})_{AB}\bigl(C(\vb*{\theta})_{BB}\bigr)^{-1}C(\vb*{\theta})_{BA} & 0 \\
            0 & 0
        \end{pmatrix} \\
        &=
        \begin{pmatrix}
            \bigl((C(\vb*{\theta})^{-1})_{AA}\bigr)^{-1} & 0 \\
            0 & 0
        \end{pmatrix}.
    \end{aligned}
\end{equation}
Here, we have used the Schur complement in the last equality.
This yields
\begin{equation}
    \Delta_i(\overline{\mathcal{N}}_s)
    \leq
    \frac{1}{4}
    \vb*{e}_i^{\mathrm{T}}
    \begin{pmatrix}
        \bigl((C(\vb*{\theta})^{-1})_{AA}\bigr)^{-1} & 0 \\
        0 & 0
    \end{pmatrix}
    \vb*{e}_i.
\end{equation}

If $i\in\mathcal{J}_B$, the right-hand side is $0$, so the claim is trivial.
If $i\in\mathcal{J}_A$, then taking the Haar average and combining with Proposition~\ref{prop_3}, we obtain
\begin{equation}
    \mathbb{E}_{\mathrm{Haar}}[\Delta_i(\overline{\mathcal{N}}_s)] \leq \frac{1}{2^{k-2k'+1}}.
\end{equation}
\end{proof}

\subsection{Proof of Theorem~\ref{thm_2}}
\label{sec_proof_quantum_4}
\begin{proof}
    Since the code distance $d$ is assumed to be even, we have
    \begin{equation}
        \lim_{\eta\to0}\frac{\mathbb{E}_{\mathrm{Haar}}[\epsilon^{\mathrm{qSynd}}_i]}{\epsilon_i}
        = \lim_{\eta\to0} \frac{\sum_{s\in\mathcal{S}_{\Theta(1)}}p_s\mathbb{E}_{\mathrm{Haar}}[\Delta_i(\overline{\mathcal{N}}_s)]}{\sum_{s\in\mathcal{S}_{\Theta(1)}} p_s\epsilon_{i,s}}
    \end{equation}
    from Eq.~\eqref{eq_eff_err_rate_even}.
    For an $[[n=mn',k=mk',d]]$ stabilizer code composed of $m$ code blocks of a fixed $[[n',k',d]]$ stabilizer code, the ambiguous error syndrome $s\in\mathcal{S}_{\Theta(1)}$ with non-vanishing conditional logical error rate corresponds to a syndrome caused by a weight-$d/2$ physical Pauli error within a single block.
    This means that the conditional logical noise channel $\overline{\mathcal{N}}_s$ affects only a single code block, and the remaining $m-1$ code blocks remain unaffected.
    Therefore, from Lemma~\ref{lem_1}, we have
    \begin{equation}
        \mathbb{E}_{\mathrm{Haar}}[\Delta_i(\overline{\mathcal{N}}_s)]\leq\frac{1}{2^{k-2k'+1}} = \frac{1}{2^{(m-2)k'+1}}.
    \end{equation}
    Meanwhile, since the conditional noise channel $\overline{\mathcal{N}}_s$ affects only a single code block and the underlying block code is fixed, the corresponding conditional logical error rate $\epsilon_{i,s}$ is independent of $m$.
    In other words,
    \begin{equation}
        \epsilon_{i,s} = \Theta(1)
    \end{equation}
    as a function of $m$.
    Therefore, we obtain
    \begin{equation}
        \lim_{\eta\to0}\frac{\mathbb{E}_{\mathrm{Haar}}[\epsilon^{\mathrm{qSynd}}_i]}{\epsilon_i}
        = O\qty(\frac{1}{2^{(m-2)k'}})
        = O\qty(\frac{1}{2^{k'm}})
    \end{equation}
    as a function of $m$.
\end{proof}

\subsection{Quantum Fisher information analysis of the classical--quantum state in Eq.~(\ref{eq_cq_state_simplified})}
\label{sec_proof_quantum_5}
In this subsection, we analyze the quantum Fisher information of the classical--quantum state in Eq.~\eqref{eq_cq_state_simplified}:
\begin{equation}
    \label{eq_cq_state_simplified_2}
    (1-p) \ketbra{0} \otimes \bar{\rho}(\vb*{\theta}) + p \ketbra{1} \otimes \bar{\rho}(\Lambda_{A}\vb*{\theta}).
\end{equation}
Here, $\Lambda_{A}$ is a $(4^k-1)$-dimensional diagonal matrix whose $(j,j)$-th element is $0$ if $j\in\mathcal{J}_A$ and $1$ otherwise.
From Eq.~\eqref{eq_qfim_syndrome}, the quantum Fisher information matrix of this state is represented as
\begin{equation}
    J^{\mathrm{Synd}} = (1-p)C(\vb*{\theta})^{-1} + p\Lambda_{A}C(\Lambda_{A}\vb*{\theta})^{-1}\Lambda_{A}.
\end{equation}
Following the notation of Sec.~\ref{sec_quantum_3}, the inverse of the quantum Fisher information matrix can be written as
\begin{equation}
    (J^{\mathrm{Synd}})^{-1} 
    = C(\vb*{\theta}) + \frac{p}{1-p}
    \begin{pmatrix}
        \bigl((C(\vb*{\theta})^{-1})_{AA}\bigr)^{-1} & 0 \\
        0 & 0
    \end{pmatrix}.
\end{equation}
Here, we use the Woodbury matrix identity $(A+UCV)^{-1} = A^{-1}-A^{-1}U(C^{-1}+VA^{-1}U)^{-1}VA^{-1}$ together with the Schur complement identity
$\bigl((C(\vb*{\theta})^{-1})_{AA}\bigr)^{-1} = C(\vb*{\theta})_{AA} - C(\vb*{\theta})_{AB}(C(\vb*{\theta})_{BB})^{-1}C(\vb*{\theta})_{BA}$.
Meanwhile, from Eq.~\eqref{eq_QFIM_classical_quantum} and Eq.~\eqref{eq_SLD_multiqubit}, the SLD operator $\bar{L}_j$ of this state is given by
\begin{equation}
    \begin{aligned}
        &\ketbra{0}\otimes\vb*{e}_j^{\mathrm{T}}C(\vb*{\theta})^{-1}(\bar{\vb*{P}}-\vb*{\theta}\bar{I}) \\
        +\,&\ketbra{1}\otimes\vb*{e}_j^{\mathrm{T}}(C(\vb*{\theta})_{BB})^{-1}(\bar{\vb*{P}}_{B}-\vb*{\theta}_{B}\bar{I})
    \end{aligned}
\end{equation}
for $j\in\mathcal{J}_{B}$, and by
\begin{equation}
    \ketbra{0}\otimes\vb*{e}_j^{\mathrm{T}}C(\vb*{\theta})^{-1}(\bar{\vb*{P}}-\vb*{\theta}\bar{I})
\end{equation}
for $j\in\mathcal{J}_{A}$.
Here, $\bar{\vb*{P}}$ denotes the vector of all nontrivial Pauli operators $\bar{P}_j$, and $\bar{\vb*{P}}_{B}$ and $\vb*{\theta}_{B}$ denote the subvectors restricted to indices $j\in\mathcal{J}_{B}$.
Therefore, the optimal measurement basis defined through Eq.~\eqref{eq_optimal_measurement_basis} for estimating the target parameter $\theta_i$ with $i\in\mathcal{J}_{A}$ is
\begin{equation}
    \ketbra{0}\otimes\frac{1}{1-p}\bigl(\bar{P}_i-p\bar{O}_i(\vb*{\theta})\bigr) + \ketbra{1}\otimes \bar{O}_i(\vb*{\theta}),
\end{equation}
where
\begin{equation}
    \bar{O}_i(\vb*{\theta}) = \vb*{e}_i^{\mathrm{T}}C(\vb*{\theta})_{AB}(C(\vb*{\theta})_{BB})^{-1}\bar{\vb*{P}}_{B}.
\end{equation}
Note that we use $((C(\vb*{\theta})^{-1})_{AA})^{-1}(C(\vb*{\theta})^{-1})_{AB} = -C(\vb*{\theta})_{AB}(C(\vb*{\theta})_{BB})^{-1}$ from the Schur complement.

\section{Details on numerical demonstration} \label{app:numerics-details}
\label{sec_numerical_details}
In this section, we describe the details of the numerical simulations presented in Fig.~\ref{fig_numerics_finiteeta} in the main text.
The noise model is taken as the code capacity noise model of
independent single-qubit depolarizing noise with error rate \(\eta\). For simplicity, we focus on the case where each logical patch contains a single logical qubit since calculations are done for the bit-flip repetition code and the rotated surface code, while the extension to patches encoding multiple logical qubits is straightforward. 

For a given measured syndrome \(s\), we assume maximum-likelihood decoding (MLD) is performed.
Let
\(w_{s,g}\) denote the joint probability that the measured syndrome is $s$ and that, after the MLD recovery,  the residual logical
Pauli error is \(g\in\{I,X,Y,Z\}\). Thus 
\[
  p_s=\sum_g w_{s,g}
\]
is the probability of obtaining the syndrome $s$. 
For the $i$-th logical Pauli observable
\(P_i\), define the conditional attenuation factor as
\begin{align}
  \lambda_{s,i}
  &=
  \frac{1}{p_s}
  \sum_g \chi_i(g) w_{s,g}, \\
  \chi_i(g)&=
  \begin{cases}
    +1, & [P_i,g]=0,\\
    -1, & \{P_i,g\}=0.
  \end{cases}
  \label{eq:syndrome-attenuation}
\end{align}
The MLD frame is chosen so that the relevant attenuations are non-negative; 
thus \(\lambda_{s,i}\in[0,1]\) for the channels used below.  
In the case of syndrome-agnostic estimation, as explained in the main text, the
 logical error rate $\epsilon_i$ is related with attenuation factors as
\begin{align}
  \epsilon_i
  =
  \frac{1-\bar\lambda_i}{2},
  \qquad
  \bar\lambda_i
  =
  \sum_s p_s\lambda_{s,i}.
  \label{eq:mld-logical}
\end{align}

For syndrome-aware protocols, we consider grouping syndromes into a small number of classes \(c\) within the code block. Concretely, the grouping is done as follows:
\begin{itemize}
    \item {\it even-distance codes:} Three classes ($\mathcal{S}_{\Theta(1)}$, $\mathcal{S}_{\Theta(\eta^d)}$, and others).
    \item {\it odd-distance codes:} $d-1$ classes (according to error scaling $k$ where $\epsilon_{i, s} = \Theta(\eta^k)$).
\end{itemize}
For each class, we store
its probability \(p_c\) and its class-conditioned attenuation
as
\begin{align}
  p_c=\sum_{s\in c}p_s,
  \qquad
  \lambda_{c,i}
  =
  \frac{1}{p_c}\sum_{s\in c}p_s\lambda_{s,i}.
  \label{eq:class-attenuation}
\end{align}
Furthermore, for $m$ logical patches, we consider a joint class label ${\bf c}=(c_1,\ldots,c_m)$ which naturally extends the probability as $p_{\bf c} = \prod_{\ell=1}^m p_{c_l}$ and $  \lambda_{{\bf c},i}
  =
  \prod_{\ell:i_\ell\neq I}
  \lambda_{c_\ell,i_\ell}.$

This defines  a syndrome-conditioned logical Pauli
channel
\begin{align}
  \bar{\mathcal N}_{\bf c}\!\left(\bar\rho(\theta)\right)
  =
  \bar\rho(\Lambda_{\bf c}\theta)
  =
  \frac{1}{2^m}
  \left(
    \bar I+\sum_i\lambda_{{\bf c},i}\theta_i\bar P_i
  \right),
  \label{eq:class-logical-channel}
\end{align}
where
$
  \Lambda_{\bf c}
  =
  \mathrm{diag}\!\left(\lambda_{{\bf c},i}\right)_i
 $
is the diagonal Pauli transfer matrix of
\(\bar{\mathcal N}_{\bf c}\).  Let \(C(\Lambda_{\bf c}\theta)\) denote the
real Pauli covariance matrix of \(\bar\rho(\Lambda_{\bf c}\theta)\):
\begin{align}
  C(\Lambda_{\bf c}\theta)_{ij}
  =
  \mathrm{Tr}\!\left[
    \bar\rho(\Lambda_{\bf c}\theta)
    \frac{\{\bar P_i,\bar P_j\}}{2}
  \right]
  -
  (\lambda_{{\bf c},i}\theta_i)
  (\lambda_{{\bf c},j}\theta_j),
  \label{eq:attenuated-covariance}
\end{align}
which yields the syndrome-aware quantum Fisher information
matrix as
\begin{align}
  J^{\rm qSynd}(\bar\rho)
  =
  \sum_{\bf c}
  p_{\bf c}\,
  \Lambda_{\bf c}\,
  C(\Lambda_{\bf c}\theta)^{-1}
  \Lambda_{\bf c}.
  \label{eq:syndrome-qfi}
\end{align}

From the definition of the effective logical error rate as defined in the main text, we obtain
\begin{align}
  \epsilon_{i}^{\rm qSynd}(\bar\rho)
  &=
  \frac{1-\lambda_{i}^{\rm qSynd}(\bar\rho)}{2} \nonumber\\
  &=
  \frac{1}{2}
  \left(
    1-\frac{1}{\sqrt{\bigl[J^{\rm qSynd}(\bar\rho)^{-1}\bigr]_{ii}+\theta_{i}^2}}
  \right).
  \label{eq:effective-error}
\end{align}

Since the number of possible syndromes grows exponentially with the
number of stabilizer generators, exact evaluation of the class
probabilities \(p_{\bf c}\) and attenuations \(\lambda_{{\bf c},i}\) is practical only
for small codes.  For the rotated surface codes considered in the main
text, we therefore use one of the following three strategies, depending
on \(d\) and \(\eta\):
 \begin{enumerate}
  \item \emph{Exact enumeration.}
  The syndrome-conditioned logical channel is obtained by directly
  enumerating the relevant syndrome contributions.  

  \item \emph{Low-weight enumeration.}
  Syndromes corresponding to low-weight errors are enumerated. This is a good approximation at small $\eta$ when the behavior is dominated solely by the leading order.

  \item \emph{Sampled enumeration.}
  Syndromes are enumerated by performing Monte Carlo sampling on the physical errors. This is valid
  in the higher-\(\eta\) regime, where the syndrome classes are
  no longer dominated by low-weight events. 
\end{enumerate}

Table~\ref{tab:surface-source-selection} records where each method is used in
Fig.~\ref{fig_numerics_finiteeta} in the main text.  The important distinction is that low
\(\eta\) points are treated by exact or low-weight calculations, while
sampling is used only where rare low-weight classes no longer control the
dominant behavior. The sampling is done up to $10^6$ non-zero syndromes.

\begin{table}[ht]
  \centering
  \small
  \caption{Enumeration method used in Fig.~\ref{fig_numerics_finiteeta} in the main text.}
  \label{tab:surface-source-selection}
  \begin{tabular}{@{}c |p{0.34\linewidth}| p{0.48\linewidth}@{}}
    \toprule
    \(d\) & \(\eta\) region & method used \\
    \midrule
    2 & All & Exact \\
    3 & All & Exact \\
    4 & All & Exact \\
    5 & $ \leq 10^{-3}$ &
    Low-weight (\(\leq 4\)) \\
    5 & $> 10^{-3}$ & Sampled \\
    6 & $\leq 5\times 10^{-3}$ & Low-weight \((\leq 4)\) \\
    6 & $> 5\times 10^{-3}$ & Sampled \\
    7 & $\leq 5\times10^{-3}$ &
    Low-weight (\(\leq 6\)) \\
    7 & $>5\times 10^{-3}$ & Sampled \\
    \bottomrule
  \end{tabular}
\end{table}

\bibliography{bib.bib}

\begin{thebibliography}{64}%
\makeatletter
\providecommand \@ifxundefined [1]{%
 \@ifx{#1\undefined}
}%
\providecommand \@ifnum [1]{%
 \ifnum #1\expandafter \@firstoftwo
 \else \expandafter \@secondoftwo
 \fi
}%
\providecommand \@ifx [1]{%
 \ifx #1\expandafter \@firstoftwo
 \else \expandafter \@secondoftwo
 \fi
}%
\providecommand \natexlab [1]{#1}%
\providecommand \enquote  [1]{``#1''}%
\providecommand \bibnamefont  [1]{#1}%
\providecommand \bibfnamefont [1]{#1}%
\providecommand \citenamefont [1]{#1}%
\providecommand \href@noop [0]{\@secondoftwo}%
\providecommand \href [0]{\begingroup \@sanitize@url \@href}%
\providecommand \@href[1]{\@@startlink{#1}\@@href}%
\providecommand \@@href[1]{\endgroup#1\@@endlink}%
\providecommand \@sanitize@url [0]{\catcode `\\12\catcode `\$12\catcode `\&12\catcode `\#12\catcode `\^12\catcode `\_12\catcode `\%12\relax}%
\providecommand \@@startlink[1]{}%
\providecommand \@@endlink[0]{}%
\providecommand \url  [0]{\begingroup\@sanitize@url \@url }%
\providecommand \@url [1]{\endgroup\@href {#1}{\urlprefix }}%
\providecommand \urlprefix  [0]{URL }%
\providecommand \Eprint [0]{\href }%
\providecommand \doibase [0]{https://doi.org/}%
\providecommand \selectlanguage [0]{\@gobble}%
\providecommand \bibinfo  [0]{\@secondoftwo}%
\providecommand \bibfield  [0]{\@secondoftwo}%
\providecommand \translation [1]{[#1]}%
\providecommand \BibitemOpen [0]{}%
\providecommand \bibitemStop [0]{}%
\providecommand \bibitemNoStop [0]{.\EOS\space}%
\providecommand \EOS [0]{\spacefactor3000\relax}%
\providecommand \BibitemShut  [1]{\csname bibitem#1\endcsname}%
\let\auto@bib@innerbib\@empty
\bibitem [{\citenamefont {Shor}(1995)}]{shor1995scheme}%
  \BibitemOpen
  \bibfield  {author} {\bibinfo {author} {\bibfnamefont {P.~W.}\ \bibnamefont {Shor}},\ }\bibfield  {title} {\bibinfo {title} {Scheme for reducing decoherence in quantum computer memory},\ }\href {https://doi.org/10.1103/PhysRevA.52.R2493} {\bibfield  {journal} {\bibinfo  {journal} {Phys. Rev. A}\ }\textbf {\bibinfo {volume} {52}},\ \bibinfo {pages} {R2493} (\bibinfo {year} {1995})}\BibitemShut {NoStop}%
\bibitem [{\citenamefont {Knill}\ \emph {et~al.}(1996)\citenamefont {Knill}, \citenamefont {Laflamme},\ and\ \citenamefont {Zurek}}]{knill1996threshold}%
  \BibitemOpen
  \bibfield  {author} {\bibinfo {author} {\bibfnamefont {E.}~\bibnamefont {Knill}}, \bibinfo {author} {\bibfnamefont {R.}~\bibnamefont {Laflamme}},\ and\ \bibinfo {author} {\bibfnamefont {W.}~\bibnamefont {Zurek}},\ }\bibfield  {title} {\bibinfo {title} {Threshold accuracy for quantum computation},\ }\href {https://doi.org/10.48550/arXiv.quant-ph/9610011} {\bibfield  {journal} {\bibinfo  {journal} {arXiv:quant-ph/9610011}\ } (\bibinfo {year} {1996})}\BibitemShut {NoStop}%
\bibitem [{\citenamefont {Aharonov}\ and\ \citenamefont {Ben-Or}(1997)}]{aharonov1997fault}%
  \BibitemOpen
  \bibfield  {author} {\bibinfo {author} {\bibfnamefont {D.}~\bibnamefont {Aharonov}}\ and\ \bibinfo {author} {\bibfnamefont {M.}~\bibnamefont {Ben-Or}},\ }\bibfield  {title} {\bibinfo {title} {Fault-tolerant quantum computation with constant error},\ }in\ \href {https://doi.org/10.1145/258533.258579} {\emph {\bibinfo {booktitle} {Proc. 29th Annu. ACM Symp. Theory Comput.}}}\ (\bibinfo  {publisher} {ACM},\ \bibinfo {address} {El Paso, Texas, USA},\ \bibinfo {year} {1997})\ pp.\ \bibinfo {pages} {176--188}\BibitemShut {NoStop}%
\bibitem [{\citenamefont {Lidar}\ and\ \citenamefont {Brun}(2013)}]{lidar2013quantum}%
  \BibitemOpen
  \bibfield  {author} {\bibinfo {author} {\bibfnamefont {D.~A.}\ \bibnamefont {Lidar}}\ and\ \bibinfo {author} {\bibfnamefont {T.~A.}\ \bibnamefont {Brun}},\ }\href {https://doi.org/10.1017/CBO9781139034807} {\emph {\bibinfo {title} {Quantum error correction}}}\ (\bibinfo  {publisher} {Cambridge university press},\ \bibinfo {address} {Cambridge, UK},\ \bibinfo {year} {2013})\BibitemShut {NoStop}%
\bibitem [{\citenamefont {Ofek}\ \emph {et~al.}(2016)\citenamefont {Ofek}, \citenamefont {Petrenko}, \citenamefont {Heeres}, \citenamefont {Reinhold}, \citenamefont {Leghtas}, \citenamefont {Vlastakis}, \citenamefont {Liu}, \citenamefont {Frunzio}, \citenamefont {Girvin}, \citenamefont {Jiang} \emph {et~al.}}]{ofek2016extending}%
  \BibitemOpen
  \bibfield  {author} {\bibinfo {author} {\bibfnamefont {N.}~\bibnamefont {Ofek}}, \bibinfo {author} {\bibfnamefont {A.}~\bibnamefont {Petrenko}}, \bibinfo {author} {\bibfnamefont {R.}~\bibnamefont {Heeres}}, \bibinfo {author} {\bibfnamefont {P.}~\bibnamefont {Reinhold}}, \bibinfo {author} {\bibfnamefont {Z.}~\bibnamefont {Leghtas}}, \bibinfo {author} {\bibfnamefont {B.}~\bibnamefont {Vlastakis}}, \bibinfo {author} {\bibfnamefont {Y.}~\bibnamefont {Liu}}, \bibinfo {author} {\bibfnamefont {L.}~\bibnamefont {Frunzio}}, \bibinfo {author} {\bibfnamefont {S.}~\bibnamefont {Girvin}}, \bibinfo {author} {\bibfnamefont {L.}~\bibnamefont {Jiang}}, \emph {et~al.},\ }\bibfield  {title} {\bibinfo {title} {Extending the lifetime of a quantum bit with error correction in superconducting circuits},\ }\href {https://doi.org/10.1038/nature18949} {\bibfield  {journal} {\bibinfo  {journal} {Nature}\ }\textbf {\bibinfo {volume} {536}},\ \bibinfo {pages} {441} (\bibinfo {year} {2016})}\BibitemShut {NoStop}%
\bibitem [{\citenamefont {Krinner}\ \emph {et~al.}(2022)\citenamefont {Krinner}, \citenamefont {Lacroix}, \citenamefont {Remm}, \citenamefont {Di~Paolo}, \citenamefont {Genois}, \citenamefont {Leroux}, \citenamefont {Hellings}, \citenamefont {Lazar}, \citenamefont {Swiadek}, \citenamefont {Herrmann} \emph {et~al.}}]{krinner2022realizing}%
  \BibitemOpen
  \bibfield  {author} {\bibinfo {author} {\bibfnamefont {S.}~\bibnamefont {Krinner}}, \bibinfo {author} {\bibfnamefont {N.}~\bibnamefont {Lacroix}}, \bibinfo {author} {\bibfnamefont {A.}~\bibnamefont {Remm}}, \bibinfo {author} {\bibfnamefont {A.}~\bibnamefont {Di~Paolo}}, \bibinfo {author} {\bibfnamefont {E.}~\bibnamefont {Genois}}, \bibinfo {author} {\bibfnamefont {C.}~\bibnamefont {Leroux}}, \bibinfo {author} {\bibfnamefont {C.}~\bibnamefont {Hellings}}, \bibinfo {author} {\bibfnamefont {S.}~\bibnamefont {Lazar}}, \bibinfo {author} {\bibfnamefont {F.}~\bibnamefont {Swiadek}}, \bibinfo {author} {\bibfnamefont {J.}~\bibnamefont {Herrmann}}, \emph {et~al.},\ }\bibfield  {title} {\bibinfo {title} {Realizing repeated quantum error correction in a distance-three surface code},\ }\href {https://doi.org/10.1038/s41586-022-04566-8} {\bibfield  {journal} {\bibinfo  {journal} {Nature}\ }\textbf {\bibinfo {volume} {605}},\ \bibinfo {pages} {669} (\bibinfo {year} {2022})}\BibitemShut {NoStop}%
\bibitem [{\citenamefont {{Google Quantum AI}}(2023)}]{google2023suppressing}%
  \BibitemOpen
  \bibfield  {author} {\bibinfo {author} {\bibnamefont {{Google Quantum AI}}},\ }\bibfield  {title} {\bibinfo {title} {Suppressing quantum errors by scaling a surface code logical qubit},\ }\href {https://doi.org/10.1038/s41586-022-05434-1} {\bibfield  {journal} {\bibinfo  {journal} {Nature}\ }\textbf {\bibinfo {volume} {614}},\ \bibinfo {pages} {676} (\bibinfo {year} {2023})}\BibitemShut {NoStop}%
\bibitem [{\citenamefont {Sivak}\ \emph {et~al.}(2023)\citenamefont {Sivak}, \citenamefont {Eickbusch}, \citenamefont {Royer}, \citenamefont {Singh}, \citenamefont {Tsioutsios}, \citenamefont {Ganjam}, \citenamefont {Miano}, \citenamefont {Brock}, \citenamefont {Ding}, \citenamefont {Frunzio} \emph {et~al.}}]{sivak2023real}%
  \BibitemOpen
  \bibfield  {author} {\bibinfo {author} {\bibfnamefont {V.}~\bibnamefont {Sivak}}, \bibinfo {author} {\bibfnamefont {A.}~\bibnamefont {Eickbusch}}, \bibinfo {author} {\bibfnamefont {B.}~\bibnamefont {Royer}}, \bibinfo {author} {\bibfnamefont {S.}~\bibnamefont {Singh}}, \bibinfo {author} {\bibfnamefont {I.}~\bibnamefont {Tsioutsios}}, \bibinfo {author} {\bibfnamefont {S.}~\bibnamefont {Ganjam}}, \bibinfo {author} {\bibfnamefont {A.}~\bibnamefont {Miano}}, \bibinfo {author} {\bibfnamefont {B.}~\bibnamefont {Brock}}, \bibinfo {author} {\bibfnamefont {A.}~\bibnamefont {Ding}}, \bibinfo {author} {\bibfnamefont {L.}~\bibnamefont {Frunzio}}, \emph {et~al.},\ }\bibfield  {title} {\bibinfo {title} {Real-time quantum error correction beyond break-even},\ }\href {https://doi.org/https://doi.org/10.1038/s41586-023-05782-6} {\bibfield  {journal} {\bibinfo  {journal} {Nature}\ }\textbf {\bibinfo {volume} {616}},\ \bibinfo {pages} {50} (\bibinfo {year} {2023})}\BibitemShut {NoStop}%
\bibitem [{\citenamefont {Bluvstein}\ \emph {et~al.}(2024)\citenamefont {Bluvstein}, \citenamefont {Evered}, \citenamefont {Geim}, \citenamefont {Li}, \citenamefont {Zhou}, \citenamefont {Manovitz}, \citenamefont {Ebadi}, \citenamefont {Cain}, \citenamefont {Kalinowski}, \citenamefont {Hangleiter} \emph {et~al.}}]{bluvstein2024logical}%
  \BibitemOpen
  \bibfield  {author} {\bibinfo {author} {\bibfnamefont {D.}~\bibnamefont {Bluvstein}}, \bibinfo {author} {\bibfnamefont {S.~J.}\ \bibnamefont {Evered}}, \bibinfo {author} {\bibfnamefont {A.~A.}\ \bibnamefont {Geim}}, \bibinfo {author} {\bibfnamefont {S.~H.}\ \bibnamefont {Li}}, \bibinfo {author} {\bibfnamefont {H.}~\bibnamefont {Zhou}}, \bibinfo {author} {\bibfnamefont {T.}~\bibnamefont {Manovitz}}, \bibinfo {author} {\bibfnamefont {S.}~\bibnamefont {Ebadi}}, \bibinfo {author} {\bibfnamefont {M.}~\bibnamefont {Cain}}, \bibinfo {author} {\bibfnamefont {M.}~\bibnamefont {Kalinowski}}, \bibinfo {author} {\bibfnamefont {D.}~\bibnamefont {Hangleiter}}, \emph {et~al.},\ }\bibfield  {title} {\bibinfo {title} {Logical quantum processor based on reconfigurable atom arrays},\ }\href {https://doi.org/https://doi.org/10.1038/s41586-023-06927-3} {\bibfield  {journal} {\bibinfo  {journal} {Nature}\ }\textbf {\bibinfo {volume} {626}},\ \bibinfo {pages} {58} (\bibinfo {year} {2024})}\BibitemShut {NoStop}%
\bibitem [{\citenamefont {{Google Quantum AI and Collaborators}}(2024)}]{ai2024quantum}%
  \BibitemOpen
  \bibfield  {author} {\bibinfo {author} {\bibnamefont {{Google Quantum AI and Collaborators}}},\ }\bibfield  {title} {\bibinfo {title} {Quantum error correction below the surface code threshold},\ }\href {https://doi.org/https://doi.org/10.1038/s41586-024-08449-y} {\bibfield  {journal} {\bibinfo  {journal} {Nature}\ }\textbf {\bibinfo {volume} {638}},\ \bibinfo {pages} {920} (\bibinfo {year} {2024})}\BibitemShut {NoStop}%
\bibitem [{\citenamefont {Temme}\ \emph {et~al.}(2017)\citenamefont {Temme}, \citenamefont {Bravyi},\ and\ \citenamefont {Gambetta}}]{temme2017error}%
  \BibitemOpen
  \bibfield  {author} {\bibinfo {author} {\bibfnamefont {K.}~\bibnamefont {Temme}}, \bibinfo {author} {\bibfnamefont {S.}~\bibnamefont {Bravyi}},\ and\ \bibinfo {author} {\bibfnamefont {J.~M.}\ \bibnamefont {Gambetta}},\ }\bibfield  {title} {\bibinfo {title} {Error mitigation for short-depth quantum circuits},\ }\href {https://doi.org/10.1103/PhysRevLett.119.180509} {\bibfield  {journal} {\bibinfo  {journal} {Phys. Rev. Lett.}\ }\textbf {\bibinfo {volume} {119}},\ \bibinfo {pages} {180509} (\bibinfo {year} {2017})}\BibitemShut {NoStop}%
\bibitem [{\citenamefont {Endo}\ \emph {et~al.}(2021)\citenamefont {Endo}, \citenamefont {Cai}, \citenamefont {Benjamin},\ and\ \citenamefont {Yuan}}]{endo2021hybrid}%
  \BibitemOpen
  \bibfield  {author} {\bibinfo {author} {\bibfnamefont {S.}~\bibnamefont {Endo}}, \bibinfo {author} {\bibfnamefont {Z.}~\bibnamefont {Cai}}, \bibinfo {author} {\bibfnamefont {S.~C.}\ \bibnamefont {Benjamin}},\ and\ \bibinfo {author} {\bibfnamefont {X.}~\bibnamefont {Yuan}},\ }\bibfield  {title} {\bibinfo {title} {Hybrid quantum-classical algorithms and quantum error mitigation},\ }\href {https://doi.org/https://doi.org/10.7566/JPSJ.90.032001} {\bibfield  {journal} {\bibinfo  {journal} {J. Phys. Soc. Japan}\ }\textbf {\bibinfo {volume} {90}},\ \bibinfo {pages} {032001} (\bibinfo {year} {2021})}\BibitemShut {NoStop}%
\bibitem [{\citenamefont {Cai}\ \emph {et~al.}(2023)\citenamefont {Cai}, \citenamefont {Babbush}, \citenamefont {Benjamin}, \citenamefont {Endo}, \citenamefont {Huggins}, \citenamefont {Li}, \citenamefont {McClean},\ and\ \citenamefont {O'Brien}}]{cai2023quantum}%
  \BibitemOpen
  \bibfield  {author} {\bibinfo {author} {\bibfnamefont {Z.}~\bibnamefont {Cai}}, \bibinfo {author} {\bibfnamefont {R.}~\bibnamefont {Babbush}}, \bibinfo {author} {\bibfnamefont {S.~C.}\ \bibnamefont {Benjamin}}, \bibinfo {author} {\bibfnamefont {S.}~\bibnamefont {Endo}}, \bibinfo {author} {\bibfnamefont {W.~J.}\ \bibnamefont {Huggins}}, \bibinfo {author} {\bibfnamefont {Y.}~\bibnamefont {Li}}, \bibinfo {author} {\bibfnamefont {J.~R.}\ \bibnamefont {McClean}},\ and\ \bibinfo {author} {\bibfnamefont {T.~E.}\ \bibnamefont {O'Brien}},\ }\bibfield  {title} {\bibinfo {title} {Quantum error mitigation},\ }\href {https://doi.org/10.1103/RevModPhys.95.045005} {\bibfield  {journal} {\bibinfo  {journal} {Rev. Mod. Phys.}\ }\textbf {\bibinfo {volume} {95}},\ \bibinfo {pages} {045005} (\bibinfo {year} {2023})}\BibitemShut {NoStop}%
\bibitem [{\citenamefont {Kim}\ \emph {et~al.}(2023)\citenamefont {Kim}, \citenamefont {Eddins}, \citenamefont {Anand}, \citenamefont {Wei}, \citenamefont {Van Den~Berg}, \citenamefont {Rosenblatt}, \citenamefont {Nayfeh}, \citenamefont {Wu}, \citenamefont {Zaletel}, \citenamefont {Temme} \emph {et~al.}}]{kim2023evidence}%
  \BibitemOpen
  \bibfield  {author} {\bibinfo {author} {\bibfnamefont {Y.}~\bibnamefont {Kim}}, \bibinfo {author} {\bibfnamefont {A.}~\bibnamefont {Eddins}}, \bibinfo {author} {\bibfnamefont {S.}~\bibnamefont {Anand}}, \bibinfo {author} {\bibfnamefont {K.~X.}\ \bibnamefont {Wei}}, \bibinfo {author} {\bibfnamefont {E.}~\bibnamefont {Van Den~Berg}}, \bibinfo {author} {\bibfnamefont {S.}~\bibnamefont {Rosenblatt}}, \bibinfo {author} {\bibfnamefont {H.}~\bibnamefont {Nayfeh}}, \bibinfo {author} {\bibfnamefont {Y.}~\bibnamefont {Wu}}, \bibinfo {author} {\bibfnamefont {M.}~\bibnamefont {Zaletel}}, \bibinfo {author} {\bibfnamefont {K.}~\bibnamefont {Temme}}, \emph {et~al.},\ }\bibfield  {title} {\bibinfo {title} {Evidence for the utility of quantum computing before fault tolerance},\ }\href {https://doi.org/https://doi.org/10.1038/s41586-023-06096-3} {\bibfield  {journal} {\bibinfo  {journal} {Nature}\ }\textbf {\bibinfo {volume} {618}},\ \bibinfo {pages} {500} (\bibinfo {year} {2023})}\BibitemShut {NoStop}%
\bibitem [{\citenamefont {Piveteau}\ \emph {et~al.}(2021)\citenamefont {Piveteau}, \citenamefont {Sutter}, \citenamefont {Bravyi}, \citenamefont {Gambetta},\ and\ \citenamefont {Temme}}]{piveteau2021error}%
  \BibitemOpen
  \bibfield  {author} {\bibinfo {author} {\bibfnamefont {C.}~\bibnamefont {Piveteau}}, \bibinfo {author} {\bibfnamefont {D.}~\bibnamefont {Sutter}}, \bibinfo {author} {\bibfnamefont {S.}~\bibnamefont {Bravyi}}, \bibinfo {author} {\bibfnamefont {J.~M.}\ \bibnamefont {Gambetta}},\ and\ \bibinfo {author} {\bibfnamefont {K.}~\bibnamefont {Temme}},\ }\bibfield  {title} {\bibinfo {title} {Error mitigation for universal gates on encoded qubits},\ }\href {https://doi.org/10.1103/PhysRevLett.127.200505} {\bibfield  {journal} {\bibinfo  {journal} {Phys. Rev. Lett.}\ }\textbf {\bibinfo {volume} {127}},\ \bibinfo {pages} {200505} (\bibinfo {year} {2021})}\BibitemShut {NoStop}%
\bibitem [{\citenamefont {Lostaglio}\ and\ \citenamefont {Ciani}(2021)}]{lostaglio2021error}%
  \BibitemOpen
  \bibfield  {author} {\bibinfo {author} {\bibfnamefont {M.}~\bibnamefont {Lostaglio}}\ and\ \bibinfo {author} {\bibfnamefont {A.}~\bibnamefont {Ciani}},\ }\bibfield  {title} {\bibinfo {title} {Error mitigation and quantum-assisted simulation in the error corrected regime},\ }\href {https://doi.org/10.1103/PhysRevLett.127.200506} {\bibfield  {journal} {\bibinfo  {journal} {Phys. Rev. Lett.}\ }\textbf {\bibinfo {volume} {127}},\ \bibinfo {pages} {200506} (\bibinfo {year} {2021})}\BibitemShut {NoStop}%
\bibitem [{\citenamefont {Suzuki}\ \emph {et~al.}(2022)\citenamefont {Suzuki}, \citenamefont {Endo}, \citenamefont {Fujii},\ and\ \citenamefont {Tokunaga}}]{suzuki2022quantum}%
  \BibitemOpen
  \bibfield  {author} {\bibinfo {author} {\bibfnamefont {Y.}~\bibnamefont {Suzuki}}, \bibinfo {author} {\bibfnamefont {S.}~\bibnamefont {Endo}}, \bibinfo {author} {\bibfnamefont {K.}~\bibnamefont {Fujii}},\ and\ \bibinfo {author} {\bibfnamefont {Y.}~\bibnamefont {Tokunaga}},\ }\bibfield  {title} {\bibinfo {title} {Quantum error mitigation as a universal error reduction technique: Applications from the nisq to the fault-tolerant quantum computing eras},\ }\href {https://doi.org/10.1103/PRXQuantum.3.010345} {\bibfield  {journal} {\bibinfo  {journal} {PRX Quantum}\ }\textbf {\bibinfo {volume} {3}},\ \bibinfo {pages} {010345} (\bibinfo {year} {2022})}\BibitemShut {NoStop}%
\bibitem [{\citenamefont {Tsubouchi}\ \emph {et~al.}(2025{\natexlab{a}})\citenamefont {Tsubouchi}, \citenamefont {Mitsuhashi}, \citenamefont {Sharma},\ and\ \citenamefont {Yoshioka}}]{tsubouchi2024symmetric}%
  \BibitemOpen
  \bibfield  {author} {\bibinfo {author} {\bibfnamefont {K.}~\bibnamefont {Tsubouchi}}, \bibinfo {author} {\bibfnamefont {Y.}~\bibnamefont {Mitsuhashi}}, \bibinfo {author} {\bibfnamefont {K.}~\bibnamefont {Sharma}},\ and\ \bibinfo {author} {\bibfnamefont {N.}~\bibnamefont {Yoshioka}},\ }\bibfield  {title} {\bibinfo {title} {Symmetric clifford twirling for cost-optimal quantum error mitigation in early ftqc regime},\ }\href {https://doi.org/https://doi.org/10.1038/s41534-025-01050-9} {\bibfield  {journal} {\bibinfo  {journal} {npj Quantum Information}\ }\textbf {\bibinfo {volume} {11}},\ \bibinfo {pages} {104} (\bibinfo {year} {2025}{\natexlab{a}})}\BibitemShut {NoStop}%
\bibitem [{\citenamefont {Bomb\'{\i}n}\ \emph {et~al.}(2024)\citenamefont {Bomb\'{\i}n}, \citenamefont {Pant}, \citenamefont {Roberts},\ and\ \citenamefont {Seetharam}}]{bombin2024fault}%
  \BibitemOpen
  \bibfield  {author} {\bibinfo {author} {\bibfnamefont {H.}~\bibnamefont {Bomb\'{\i}n}}, \bibinfo {author} {\bibfnamefont {M.}~\bibnamefont {Pant}}, \bibinfo {author} {\bibfnamefont {S.}~\bibnamefont {Roberts}},\ and\ \bibinfo {author} {\bibfnamefont {K.~I.}\ \bibnamefont {Seetharam}},\ }\bibfield  {title} {\bibinfo {title} {Fault-tolerant postselection for low-overhead magic state preparation},\ }\href {https://doi.org/10.1103/PRXQuantum.5.010302} {\bibfield  {journal} {\bibinfo  {journal} {PRX Quantum}\ }\textbf {\bibinfo {volume} {5}},\ \bibinfo {pages} {010302} (\bibinfo {year} {2024})}\BibitemShut {NoStop}%
\bibitem [{\citenamefont {Meister}\ \emph {et~al.}(2024)\citenamefont {Meister}, \citenamefont {Pattison},\ and\ \citenamefont {Preskill}}]{meister2024efficient}%
  \BibitemOpen
  \bibfield  {author} {\bibinfo {author} {\bibfnamefont {N.}~\bibnamefont {Meister}}, \bibinfo {author} {\bibfnamefont {C.~A.}\ \bibnamefont {Pattison}},\ and\ \bibinfo {author} {\bibfnamefont {J.}~\bibnamefont {Preskill}},\ }\bibfield  {title} {\bibinfo {title} {Efficient soft-output decoders for the surface code},\ }\href {https://doi.org/10.48550/arXiv.2405.07433} {\bibfield  {journal} {\bibinfo  {journal} {arXiv:2405.07433}\ } (\bibinfo {year} {2024})}\BibitemShut {NoStop}%
\bibitem [{\citenamefont {Smith}\ \emph {et~al.}(2024)\citenamefont {Smith}, \citenamefont {Brown},\ and\ \citenamefont {Bartlett}}]{smith2024mitigating}%
  \BibitemOpen
  \bibfield  {author} {\bibinfo {author} {\bibfnamefont {S.~C.}\ \bibnamefont {Smith}}, \bibinfo {author} {\bibfnamefont {B.~J.}\ \bibnamefont {Brown}},\ and\ \bibinfo {author} {\bibfnamefont {S.~D.}\ \bibnamefont {Bartlett}},\ }\bibfield  {title} {\bibinfo {title} {Mitigating errors in logical qubits},\ }\href {https://doi.org/10.1038/s42005-024-01883-4} {\bibfield  {journal} {\bibinfo  {journal} {Commun. Phys.}\ }\textbf {\bibinfo {volume} {7}},\ \bibinfo {pages} {386} (\bibinfo {year} {2024})}\BibitemShut {NoStop}%
\bibitem [{\citenamefont {Gidney}\ \emph {et~al.}(2025)\citenamefont {Gidney}, \citenamefont {Newman}, \citenamefont {Brooks},\ and\ \citenamefont {Jones}}]{gidney2025yoked}%
  \BibitemOpen
  \bibfield  {author} {\bibinfo {author} {\bibfnamefont {C.}~\bibnamefont {Gidney}}, \bibinfo {author} {\bibfnamefont {M.}~\bibnamefont {Newman}}, \bibinfo {author} {\bibfnamefont {P.}~\bibnamefont {Brooks}},\ and\ \bibinfo {author} {\bibfnamefont {C.}~\bibnamefont {Jones}},\ }\bibfield  {title} {\bibinfo {title} {Yoked surface codes},\ }\href {https://doi.org/https://doi.org/10.1038/s41467-025-59714-1} {\bibfield  {journal} {\bibinfo  {journal} {Nat. Commun.}\ }\textbf {\bibinfo {volume} {16}},\ \bibinfo {pages} {4498} (\bibinfo {year} {2025})}\BibitemShut {NoStop}%
\bibitem [{\citenamefont {Lee}\ \emph {et~al.}(2025)\citenamefont {Lee}, \citenamefont {English},\ and\ \citenamefont {Bartlett}}]{lee2025efficient}%
  \BibitemOpen
  \bibfield  {author} {\bibinfo {author} {\bibfnamefont {S.-H.}\ \bibnamefont {Lee}}, \bibinfo {author} {\bibfnamefont {L.}~\bibnamefont {English}},\ and\ \bibinfo {author} {\bibfnamefont {S.~D.}\ \bibnamefont {Bartlett}},\ }\bibfield  {title} {\bibinfo {title} {Efficient post-selection for general quantum ldpc codes},\ }\href {https://doi.org/10.48550/arXiv.2510.05795} {\bibfield  {journal} {\bibinfo  {journal} {arXiv:2510.05795}\ } (\bibinfo {year} {2025})}\BibitemShut {NoStop}%
\bibitem [{\citenamefont {Xie}\ \emph {et~al.}(2026)\citenamefont {Xie}, \citenamefont {Yoshioka}, \citenamefont {Tsubouchi},\ and\ \citenamefont {Li}}]{xie2026simple}%
  \BibitemOpen
  \bibfield  {author} {\bibinfo {author} {\bibfnamefont {H.}~\bibnamefont {Xie}}, \bibinfo {author} {\bibfnamefont {N.}~\bibnamefont {Yoshioka}}, \bibinfo {author} {\bibfnamefont {K.}~\bibnamefont {Tsubouchi}},\ and\ \bibinfo {author} {\bibfnamefont {Y.}~\bibnamefont {Li}},\ }\bibfield  {title} {\bibinfo {title} {Simple, efficient, and generic post-selection decoding for qldpc codes},\ }\href {https://doi.org/10.48550/arXiv.2601.17757} {\bibfield  {journal} {\bibinfo  {journal} {arXiv:2601.17757}\ } (\bibinfo {year} {2026})}\BibitemShut {NoStop}%
\bibitem [{\citenamefont {Kishi}\ \emph {et~al.}(2026)\citenamefont {Kishi}, \citenamefont {Toshio}, \citenamefont {Fujisaki}, \citenamefont {Oshima}, \citenamefont {Sato},\ and\ \citenamefont {Fujii}}]{kishi2026even}%
  \BibitemOpen
  \bibfield  {author} {\bibinfo {author} {\bibfnamefont {K.}~\bibnamefont {Kishi}}, \bibinfo {author} {\bibfnamefont {R.}~\bibnamefont {Toshio}}, \bibinfo {author} {\bibfnamefont {J.}~\bibnamefont {Fujisaki}}, \bibinfo {author} {\bibfnamefont {H.}~\bibnamefont {Oshima}}, \bibinfo {author} {\bibfnamefont {S.}~\bibnamefont {Sato}},\ and\ \bibinfo {author} {\bibfnamefont {K.}~\bibnamefont {Fujii}},\ }\bibfield  {title} {\bibinfo {title} {Even more efficient soft-output decoding with extra-cluster growth and early stopping},\ }\href {https://doi.org/10.48550/arXiv.2602.03336} {\bibfield  {journal} {\bibinfo  {journal} {arXiv preprint arXiv:2602.03336}\ } (\bibinfo {year} {2026})}\BibitemShut {NoStop}%
\bibitem [{\citenamefont {Zhou}\ \emph {et~al.}(2025)\citenamefont {Zhou}, \citenamefont {Pexton}, \citenamefont {Kubica},\ and\ \citenamefont {Ding}}]{zhou2025error}%
  \BibitemOpen
  \bibfield  {author} {\bibinfo {author} {\bibfnamefont {Z.}~\bibnamefont {Zhou}}, \bibinfo {author} {\bibfnamefont {S.}~\bibnamefont {Pexton}}, \bibinfo {author} {\bibfnamefont {A.}~\bibnamefont {Kubica}},\ and\ \bibinfo {author} {\bibfnamefont {Y.}~\bibnamefont {Ding}},\ }\bibfield  {title} {\bibinfo {title} {Error mitigation of fault-tolerant quantum circuits with soft information},\ }\href {https://doi.org/10.48550/arXiv.2512.09863} {\bibfield  {journal} {\bibinfo  {journal} {arXiv:2512.09863}\ } (\bibinfo {year} {2025})}\BibitemShut {NoStop}%
\bibitem [{\citenamefont {Dinc{\u{a}}}\ \emph {et~al.}(2025)\citenamefont {Dinc{\u{a}}}, \citenamefont {Chan},\ and\ \citenamefont {Benjamin}}]{dincua2025error}%
  \BibitemOpen
  \bibfield  {author} {\bibinfo {author} {\bibfnamefont {M.}~\bibnamefont {Dinc{\u{a}}}}, \bibinfo {author} {\bibfnamefont {T.}~\bibnamefont {Chan}},\ and\ \bibinfo {author} {\bibfnamefont {S.~C.}\ \bibnamefont {Benjamin}},\ }\bibfield  {title} {\bibinfo {title} {Error mitigation for logical circuits using decoder confidence},\ }\href {https://doi.org/10.48550/arXiv.2512.15689} {\bibfield  {journal} {\bibinfo  {journal} {arXiv:2512.15689}\ } (\bibinfo {year} {2025})}\BibitemShut {NoStop}%
\bibitem [{\citenamefont {Aharonov}\ \emph {et~al.}(2025)\citenamefont {Aharonov}, \citenamefont {Atia}, \citenamefont {Bairey}, \citenamefont {Brakerski}, \citenamefont {Cohen}, \citenamefont {Golan}, \citenamefont {Gurwich}, \citenamefont {Lindner},\ and\ \citenamefont {Shutman}}]{aharonov2025syndrome}%
  \BibitemOpen
  \bibfield  {author} {\bibinfo {author} {\bibfnamefont {D.}~\bibnamefont {Aharonov}}, \bibinfo {author} {\bibfnamefont {Y.}~\bibnamefont {Atia}}, \bibinfo {author} {\bibfnamefont {E.}~\bibnamefont {Bairey}}, \bibinfo {author} {\bibfnamefont {Z.}~\bibnamefont {Brakerski}}, \bibinfo {author} {\bibfnamefont {I.}~\bibnamefont {Cohen}}, \bibinfo {author} {\bibfnamefont {O.}~\bibnamefont {Golan}}, \bibinfo {author} {\bibfnamefont {I.}~\bibnamefont {Gurwich}}, \bibinfo {author} {\bibfnamefont {N.~H.}\ \bibnamefont {Lindner}},\ and\ \bibinfo {author} {\bibfnamefont {M.}~\bibnamefont {Shutman}},\ }\bibfield  {title} {\bibinfo {title} {Syndrome aware mitigation of logical errors},\ }\href {https://doi.org/10.48550/arXiv.2512.23810} {\bibfield  {journal} {\bibinfo  {journal} {arXiv:2512.23810}\ } (\bibinfo {year} {2025})}\BibitemShut {NoStop}%
\bibitem [{\citenamefont {Takagi}\ \emph {et~al.}(2022)\citenamefont {Takagi}, \citenamefont {Endo}, \citenamefont {Minagawa},\ and\ \citenamefont {Gu}}]{takagi2022fundamental}%
  \BibitemOpen
  \bibfield  {author} {\bibinfo {author} {\bibfnamefont {R.}~\bibnamefont {Takagi}}, \bibinfo {author} {\bibfnamefont {S.}~\bibnamefont {Endo}}, \bibinfo {author} {\bibfnamefont {S.}~\bibnamefont {Minagawa}},\ and\ \bibinfo {author} {\bibfnamefont {M.}~\bibnamefont {Gu}},\ }\bibfield  {title} {\bibinfo {title} {Fundamental limits of quantum error mitigation},\ }\href {https://doi.org/https://doi.org/10.1038/s41534-022-00618-z} {\bibfield  {journal} {\bibinfo  {journal} {npj Quantum Inf.}\ }\textbf {\bibinfo {volume} {8}},\ \bibinfo {pages} {114} (\bibinfo {year} {2022})}\BibitemShut {NoStop}%
\bibitem [{\citenamefont {Takagi}\ \emph {et~al.}(2023)\citenamefont {Takagi}, \citenamefont {Tajima},\ and\ \citenamefont {Gu}}]{takagi2023universal}%
  \BibitemOpen
  \bibfield  {author} {\bibinfo {author} {\bibfnamefont {R.}~\bibnamefont {Takagi}}, \bibinfo {author} {\bibfnamefont {H.}~\bibnamefont {Tajima}},\ and\ \bibinfo {author} {\bibfnamefont {M.}~\bibnamefont {Gu}},\ }\bibfield  {title} {\bibinfo {title} {Universal sampling lower bounds for quantum error mitigation},\ }\href {https://doi.org/10.1103/PhysRevLett.131.210602} {\bibfield  {journal} {\bibinfo  {journal} {Phys. Rev. Lett.}\ }\textbf {\bibinfo {volume} {131}},\ \bibinfo {pages} {210602} (\bibinfo {year} {2023})}\BibitemShut {NoStop}%
\bibitem [{\citenamefont {Tsubouchi}\ \emph {et~al.}(2023)\citenamefont {Tsubouchi}, \citenamefont {Sagawa},\ and\ \citenamefont {Yoshioka}}]{tsubouchi2023universal}%
  \BibitemOpen
  \bibfield  {author} {\bibinfo {author} {\bibfnamefont {K.}~\bibnamefont {Tsubouchi}}, \bibinfo {author} {\bibfnamefont {T.}~\bibnamefont {Sagawa}},\ and\ \bibinfo {author} {\bibfnamefont {N.}~\bibnamefont {Yoshioka}},\ }\bibfield  {title} {\bibinfo {title} {Universal cost bound of quantum error mitigation based on quantum estimation theory},\ }\href {https://doi.org/10.1103/PhysRevLett.131.210601} {\bibfield  {journal} {\bibinfo  {journal} {Phys. Rev. Lett.}\ }\textbf {\bibinfo {volume} {131}},\ \bibinfo {pages} {210601} (\bibinfo {year} {2023})}\BibitemShut {NoStop}%
\bibitem [{\citenamefont {Quek}\ \emph {et~al.}(2024)\citenamefont {Quek}, \citenamefont {Stilck~Fran{\c{c}}a}, \citenamefont {Khatri}, \citenamefont {Meyer},\ and\ \citenamefont {Eisert}}]{quek2024exponentially}%
  \BibitemOpen
  \bibfield  {author} {\bibinfo {author} {\bibfnamefont {Y.}~\bibnamefont {Quek}}, \bibinfo {author} {\bibfnamefont {D.}~\bibnamefont {Stilck~Fran{\c{c}}a}}, \bibinfo {author} {\bibfnamefont {S.}~\bibnamefont {Khatri}}, \bibinfo {author} {\bibfnamefont {J.~J.}\ \bibnamefont {Meyer}},\ and\ \bibinfo {author} {\bibfnamefont {J.}~\bibnamefont {Eisert}},\ }\bibfield  {title} {\bibinfo {title} {Exponentially tighter bounds on limitations of quantum error mitigation},\ }\href {https://doi.org/https://doi.org/10.1038/s41567-024-02536-7} {\bibfield  {journal} {\bibinfo  {journal} {Nat. Phys.}\ }\textbf {\bibinfo {volume} {20}},\ \bibinfo {pages} {1648} (\bibinfo {year} {2024})}\BibitemShut {NoStop}%
\bibitem [{\citenamefont {Helstrom}(1969)}]{helstrom1969quantum}%
  \BibitemOpen
  \bibfield  {author} {\bibinfo {author} {\bibfnamefont {C.~W.}\ \bibnamefont {Helstrom}},\ }\bibfield  {title} {\bibinfo {title} {Quantum detection and estimation theory},\ }\href {https://doi.org/https://doi.org/10.1007/BF01007479} {\bibfield  {journal} {\bibinfo  {journal} {J. Stat. Phys.}\ }\textbf {\bibinfo {volume} {1}},\ \bibinfo {pages} {231} (\bibinfo {year} {1969})}\BibitemShut {NoStop}%
\bibitem [{\citenamefont {Holevo}(2011)}]{holevo2011probabilistic}%
  \BibitemOpen
  \bibfield  {author} {\bibinfo {author} {\bibfnamefont {A.~S.}\ \bibnamefont {Holevo}},\ }\href {https://doi.org/https://doi.org/10.1007/978-88-7642-378-9} {\emph {\bibinfo {title} {Probabilistic and statistical aspects of quantum theory}}},\ Vol.~\bibinfo {volume} {1}\ (\bibinfo  {publisher} {Springer Science \& Business Media},\ \bibinfo {year} {2011})\BibitemShut {NoStop}%
\bibitem [{\citenamefont {Hayashi}(2006)}]{hayashi2006quantum}%
  \BibitemOpen
  \bibfield  {author} {\bibinfo {author} {\bibfnamefont {M.}~\bibnamefont {Hayashi}},\ }\href {https://doi.org/https://doi.org/10.1007/3-540-30266-2} {\emph {\bibinfo {title} {Quantum information}}}\ (\bibinfo  {publisher} {Springer},\ \bibinfo {year} {2006})\BibitemShut {NoStop}%
\bibitem [{\citenamefont {Suzuki}\ \emph {et~al.}(2020)\citenamefont {Suzuki}, \citenamefont {Yang},\ and\ \citenamefont {Hayashi}}]{suzuki2020quantum}%
  \BibitemOpen
  \bibfield  {author} {\bibinfo {author} {\bibfnamefont {J.}~\bibnamefont {Suzuki}}, \bibinfo {author} {\bibfnamefont {Y.}~\bibnamefont {Yang}},\ and\ \bibinfo {author} {\bibfnamefont {M.}~\bibnamefont {Hayashi}},\ }\bibfield  {title} {\bibinfo {title} {Quantum state estimation with nuisance parameters},\ }\href {https://doi.org/https://doi.org/10.1088/1751-8121/ab8b78} {\bibfield  {journal} {\bibinfo  {journal} {Journal of Physics A: Mathematical and Theoretical}\ }\textbf {\bibinfo {volume} {53}},\ \bibinfo {pages} {453001} (\bibinfo {year} {2020})}\BibitemShut {NoStop}%
\bibitem [{\citenamefont {Yang}\ \emph {et~al.}(2019)\citenamefont {Yang}, \citenamefont {Chiribella},\ and\ \citenamefont {Hayashi}}]{yang2019attaining}%
  \BibitemOpen
  \bibfield  {author} {\bibinfo {author} {\bibfnamefont {Y.}~\bibnamefont {Yang}}, \bibinfo {author} {\bibfnamefont {G.}~\bibnamefont {Chiribella}},\ and\ \bibinfo {author} {\bibfnamefont {M.}~\bibnamefont {Hayashi}},\ }\bibfield  {title} {\bibinfo {title} {Attaining the ultimate precision limit in quantum state estimation},\ }\href {https://doi.org/https://doi.org/10.1007/s00220-019-03433-4} {\bibfield  {journal} {\bibinfo  {journal} {Commun. Math. Phys.}\ }\textbf {\bibinfo {volume} {368}},\ \bibinfo {pages} {223} (\bibinfo {year} {2019})}\BibitemShut {NoStop}%
\bibitem [{\citenamefont {Zhou}\ \emph {et~al.}(2020)\citenamefont {Zhou}, \citenamefont {Zou},\ and\ \citenamefont {Jiang}}]{zhou2020saturating}%
  \BibitemOpen
  \bibfield  {author} {\bibinfo {author} {\bibfnamefont {S.}~\bibnamefont {Zhou}}, \bibinfo {author} {\bibfnamefont {C.-L.}\ \bibnamefont {Zou}},\ and\ \bibinfo {author} {\bibfnamefont {L.}~\bibnamefont {Jiang}},\ }\bibfield  {title} {\bibinfo {title} {Saturating the quantum cram{\'e}r--rao bound using locc},\ }\href {https://doi.org/https://doi.org/10.1088/2058-9565/ab71f8} {\bibfield  {journal} {\bibinfo  {journal} {Quantum Sci. Technol.}\ }\textbf {\bibinfo {volume} {5}},\ \bibinfo {pages} {025005} (\bibinfo {year} {2020})}\BibitemShut {NoStop}%
\bibitem [{\citenamefont {Kimura}(2003)}]{kimura2003bloch}%
  \BibitemOpen
  \bibfield  {author} {\bibinfo {author} {\bibfnamefont {G.}~\bibnamefont {Kimura}},\ }\bibfield  {title} {\bibinfo {title} {The bloch vector for n-level systems},\ }\href {https://doi.org/https://doi.org/10.1016/S0375-9601(03)00941-1} {\bibfield  {journal} {\bibinfo  {journal} {Phys. Lett. A}\ }\textbf {\bibinfo {volume} {314}},\ \bibinfo {pages} {339} (\bibinfo {year} {2003})}\BibitemShut {NoStop}%
\bibitem [{\citenamefont {Iyer}\ and\ \citenamefont {Poulin}(2015)}]{iyer2015hardness}%
  \BibitemOpen
  \bibfield  {author} {\bibinfo {author} {\bibfnamefont {P.}~\bibnamefont {Iyer}}\ and\ \bibinfo {author} {\bibfnamefont {D.}~\bibnamefont {Poulin}},\ }\bibfield  {title} {\bibinfo {title} {Hardness of decoding quantum stabilizer codes},\ }\href {https://doi.org/https://doi.org/10.1109/TIT.2015.2422294} {\bibfield  {journal} {\bibinfo  {journal} {IEEE Trans. Inf. Theory}\ }\textbf {\bibinfo {volume} {61}},\ \bibinfo {pages} {5209} (\bibinfo {year} {2015})}\BibitemShut {NoStop}%
\bibitem [{\citenamefont {Fuentes}\ \emph {et~al.}(2021)\citenamefont {Fuentes}, \citenamefont {Martinez}, \citenamefont {Crespo},\ and\ \citenamefont {Garcia-Fr{\'\i}as}}]{fuentes2021degeneracy}%
  \BibitemOpen
  \bibfield  {author} {\bibinfo {author} {\bibfnamefont {P.}~\bibnamefont {Fuentes}}, \bibinfo {author} {\bibfnamefont {J.~E.}\ \bibnamefont {Martinez}}, \bibinfo {author} {\bibfnamefont {P.~M.}\ \bibnamefont {Crespo}},\ and\ \bibinfo {author} {\bibfnamefont {J.}~\bibnamefont {Garcia-Fr{\'\i}as}},\ }\bibfield  {title} {\bibinfo {title} {Degeneracy and its impact on the decoding of sparse quantum codes},\ }\href {https://doi.org/https://doi.org/10.1109/ACCESS.2021.3089829} {\bibfield  {journal} {\bibinfo  {journal} {IEEE Access}\ }\textbf {\bibinfo {volume} {9}},\ \bibinfo {pages} {89093} (\bibinfo {year} {2021})}\BibitemShut {NoStop}%
\bibitem [{\citenamefont {deMarti iOlius}\ \emph {et~al.}(2024)\citenamefont {deMarti iOlius}, \citenamefont {Fuentes}, \citenamefont {Or{\'u}s}, \citenamefont {Crespo},\ and\ \citenamefont {Martinez}}]{demarti2024decoding}%
  \BibitemOpen
  \bibfield  {author} {\bibinfo {author} {\bibfnamefont {A.}~\bibnamefont {deMarti iOlius}}, \bibinfo {author} {\bibfnamefont {P.}~\bibnamefont {Fuentes}}, \bibinfo {author} {\bibfnamefont {R.}~\bibnamefont {Or{\'u}s}}, \bibinfo {author} {\bibfnamefont {P.~M.}\ \bibnamefont {Crespo}},\ and\ \bibinfo {author} {\bibfnamefont {J.~E.}\ \bibnamefont {Martinez}},\ }\bibfield  {title} {\bibinfo {title} {Decoding algorithms for surface codes},\ }\href {https://doi.org/https://doi.org/10.22331/q-2024-10-10-1498} {\bibfield  {journal} {\bibinfo  {journal} {Quantum}\ }\textbf {\bibinfo {volume} {8}},\ \bibinfo {pages} {1498} (\bibinfo {year} {2024})}\BibitemShut {NoStop}%
\bibitem [{\citenamefont {Watanabe}\ \emph {et~al.}(2010)\citenamefont {Watanabe}, \citenamefont {Sagawa},\ and\ \citenamefont {Ueda}}]{watanabe2010optimal}%
  \BibitemOpen
  \bibfield  {author} {\bibinfo {author} {\bibfnamefont {Y.}~\bibnamefont {Watanabe}}, \bibinfo {author} {\bibfnamefont {T.}~\bibnamefont {Sagawa}},\ and\ \bibinfo {author} {\bibfnamefont {M.}~\bibnamefont {Ueda}},\ }\bibfield  {title} {\bibinfo {title} {Optimal measurement on noisy quantum systems},\ }\href {https://doi.org/10.1103/PhysRevLett.104.020401} {\bibfield  {journal} {\bibinfo  {journal} {Phys. Rev. Lett.}\ }\textbf {\bibinfo {volume} {104}},\ \bibinfo {pages} {020401} (\bibinfo {year} {2010})}\BibitemShut {NoStop}%
\bibitem [{\citenamefont {Li}\ and\ \citenamefont {Benjamin}(2017)}]{li2017efficient}%
  \BibitemOpen
  \bibfield  {author} {\bibinfo {author} {\bibfnamefont {Y.}~\bibnamefont {Li}}\ and\ \bibinfo {author} {\bibfnamefont {S.~C.}\ \bibnamefont {Benjamin}},\ }\bibfield  {title} {\bibinfo {title} {Efficient variational quantum simulator incorporating active error minimization},\ }\href {https://doi.org/10.1103/PhysRevX.7.021050} {\bibfield  {journal} {\bibinfo  {journal} {Phys. Rev. X}\ }\textbf {\bibinfo {volume} {7}},\ \bibinfo {pages} {021050} (\bibinfo {year} {2017})}\BibitemShut {NoStop}%
\bibitem [{\citenamefont {Endo}\ \emph {et~al.}(2018)\citenamefont {Endo}, \citenamefont {Benjamin},\ and\ \citenamefont {Li}}]{endo2018practical}%
  \BibitemOpen
  \bibfield  {author} {\bibinfo {author} {\bibfnamefont {S.}~\bibnamefont {Endo}}, \bibinfo {author} {\bibfnamefont {S.~C.}\ \bibnamefont {Benjamin}},\ and\ \bibinfo {author} {\bibfnamefont {Y.}~\bibnamefont {Li}},\ }\bibfield  {title} {\bibinfo {title} {Practical quantum error mitigation for near-future applications},\ }\href {https://doi.org/10.1103/PhysRevX.8.031027} {\bibfield  {journal} {\bibinfo  {journal} {Phys. Rev. X}\ }\textbf {\bibinfo {volume} {8}},\ \bibinfo {pages} {031027} (\bibinfo {year} {2018})}\BibitemShut {NoStop}%
\bibitem [{\citenamefont {Van Den~Berg}\ \emph {et~al.}(2023)\citenamefont {Van Den~Berg}, \citenamefont {Minev}, \citenamefont {Kandala},\ and\ \citenamefont {Temme}}]{van2023probabilistic}%
  \BibitemOpen
  \bibfield  {author} {\bibinfo {author} {\bibfnamefont {E.}~\bibnamefont {Van Den~Berg}}, \bibinfo {author} {\bibfnamefont {Z.~K.}\ \bibnamefont {Minev}}, \bibinfo {author} {\bibfnamefont {A.}~\bibnamefont {Kandala}},\ and\ \bibinfo {author} {\bibfnamefont {K.}~\bibnamefont {Temme}},\ }\bibfield  {title} {\bibinfo {title} {Probabilistic error cancellation with sparse pauli--lindblad models on noisy quantum processors},\ }\href {https://doi.org/https://doi.org/10.1038/s41567-023-02042-2} {\bibfield  {journal} {\bibinfo  {journal} {Nat. Phys.}\ }\textbf {\bibinfo {volume} {19}},\ \bibinfo {pages} {1116} (\bibinfo {year} {2023})}\BibitemShut {NoStop}%
\bibitem [{\citenamefont {Bonet-Monroig}\ \emph {et~al.}(2018)\citenamefont {Bonet-Monroig}, \citenamefont {Sagastizabal}, \citenamefont {Singh},\ and\ \citenamefont {O'Brien}}]{bonet2018low}%
  \BibitemOpen
  \bibfield  {author} {\bibinfo {author} {\bibfnamefont {X.}~\bibnamefont {Bonet-Monroig}}, \bibinfo {author} {\bibfnamefont {R.}~\bibnamefont {Sagastizabal}}, \bibinfo {author} {\bibfnamefont {M.}~\bibnamefont {Singh}},\ and\ \bibinfo {author} {\bibfnamefont {T.~E.}\ \bibnamefont {O'Brien}},\ }\bibfield  {title} {\bibinfo {title} {Low-cost error mitigation by symmetry verification},\ }\href {https://doi.org/10.1103/PhysRevA.98.062339} {\bibfield  {journal} {\bibinfo  {journal} {Phys. Rev. A}\ }\textbf {\bibinfo {volume} {98}},\ \bibinfo {pages} {062339} (\bibinfo {year} {2018})}\BibitemShut {NoStop}%
\bibitem [{\citenamefont {McArdle}\ \emph {et~al.}(2019)\citenamefont {McArdle}, \citenamefont {Yuan},\ and\ \citenamefont {Benjamin}}]{mcardle2019error}%
  \BibitemOpen
  \bibfield  {author} {\bibinfo {author} {\bibfnamefont {S.}~\bibnamefont {McArdle}}, \bibinfo {author} {\bibfnamefont {X.}~\bibnamefont {Yuan}},\ and\ \bibinfo {author} {\bibfnamefont {S.}~\bibnamefont {Benjamin}},\ }\bibfield  {title} {\bibinfo {title} {Error-mitigated digital quantum simulation},\ }\href {https://doi.org/10.1103/PhysRevLett.122.180501} {\bibfield  {journal} {\bibinfo  {journal} {Phys. Rev. Lett.}\ }\textbf {\bibinfo {volume} {122}},\ \bibinfo {pages} {180501} (\bibinfo {year} {2019})}\BibitemShut {NoStop}%
\bibitem [{\citenamefont {Czarnik}\ \emph {et~al.}(2021)\citenamefont {Czarnik}, \citenamefont {Arrasmith}, \citenamefont {Coles},\ and\ \citenamefont {Cincio}}]{czarnik2021error}%
  \BibitemOpen
  \bibfield  {author} {\bibinfo {author} {\bibfnamefont {P.}~\bibnamefont {Czarnik}}, \bibinfo {author} {\bibfnamefont {A.}~\bibnamefont {Arrasmith}}, \bibinfo {author} {\bibfnamefont {P.~J.}\ \bibnamefont {Coles}},\ and\ \bibinfo {author} {\bibfnamefont {L.}~\bibnamefont {Cincio}},\ }\bibfield  {title} {\bibinfo {title} {Error mitigation with clifford quantum-circuit data},\ }\href {https://doi.org/https://doi.org/10.22331/q-2021-11-26-592} {\bibfield  {journal} {\bibinfo  {journal} {Quantum}\ }\textbf {\bibinfo {volume} {5}},\ \bibinfo {pages} {592} (\bibinfo {year} {2021})}\BibitemShut {NoStop}%
\bibitem [{\citenamefont {Strikis}\ \emph {et~al.}(2021)\citenamefont {Strikis}, \citenamefont {Qin}, \citenamefont {Chen}, \citenamefont {Benjamin},\ and\ \citenamefont {Li}}]{strinkis2021learning}%
  \BibitemOpen
  \bibfield  {author} {\bibinfo {author} {\bibfnamefont {A.}~\bibnamefont {Strikis}}, \bibinfo {author} {\bibfnamefont {D.}~\bibnamefont {Qin}}, \bibinfo {author} {\bibfnamefont {Y.}~\bibnamefont {Chen}}, \bibinfo {author} {\bibfnamefont {S.~C.}\ \bibnamefont {Benjamin}},\ and\ \bibinfo {author} {\bibfnamefont {Y.}~\bibnamefont {Li}},\ }\bibfield  {title} {\bibinfo {title} {Learning-based quantum error mitigation},\ }\href {https://doi.org/10.1103/PRXQuantum.2.040330} {\bibfield  {journal} {\bibinfo  {journal} {PRX Quantum}\ }\textbf {\bibinfo {volume} {2}},\ \bibinfo {pages} {040330} (\bibinfo {year} {2021})}\BibitemShut {NoStop}%
\bibitem [{\citenamefont {Paetznick}\ \emph {et~al.}(2024)\citenamefont {Paetznick}, \citenamefont {Da~Silva}, \citenamefont {Ryan-Anderson}, \citenamefont {Bello-Rivas}, \citenamefont {Campora~III}, \citenamefont {Chernoguzov}, \citenamefont {Dreiling}, \citenamefont {Foltz}, \citenamefont {Frachon}, \citenamefont {Gaebler} \emph {et~al.}}]{paetznick2024demonstration}%
  \BibitemOpen
  \bibfield  {author} {\bibinfo {author} {\bibfnamefont {A.}~\bibnamefont {Paetznick}}, \bibinfo {author} {\bibfnamefont {M.}~\bibnamefont {Da~Silva}}, \bibinfo {author} {\bibfnamefont {C.}~\bibnamefont {Ryan-Anderson}}, \bibinfo {author} {\bibfnamefont {J.}~\bibnamefont {Bello-Rivas}}, \bibinfo {author} {\bibfnamefont {J.}~\bibnamefont {Campora~III}}, \bibinfo {author} {\bibfnamefont {A.}~\bibnamefont {Chernoguzov}}, \bibinfo {author} {\bibfnamefont {J.}~\bibnamefont {Dreiling}}, \bibinfo {author} {\bibfnamefont {C.}~\bibnamefont {Foltz}}, \bibinfo {author} {\bibfnamefont {F.}~\bibnamefont {Frachon}}, \bibinfo {author} {\bibfnamefont {J.}~\bibnamefont {Gaebler}}, \emph {et~al.},\ }\bibfield  {title} {\bibinfo {title} {Demonstration of logical qubits and repeated error correction with better-than-physical error rates},\ }\href {https://doi.org/10.48550/arXiv.2404.02280} {\bibfield  {journal} {\bibinfo  {journal} {arXiv:2404.02280}\ } (\bibinfo {year} {2024})}\BibitemShut {NoStop}%
\bibitem [{\citenamefont {Dennis}\ \emph {et~al.}(2002)\citenamefont {Dennis}, \citenamefont {Kitaev}, \citenamefont {Landahl},\ and\ \citenamefont {Preskill}}]{dennis2002topological}%
  \BibitemOpen
  \bibfield  {author} {\bibinfo {author} {\bibfnamefont {E.}~\bibnamefont {Dennis}}, \bibinfo {author} {\bibfnamefont {A.}~\bibnamefont {Kitaev}}, \bibinfo {author} {\bibfnamefont {A.}~\bibnamefont {Landahl}},\ and\ \bibinfo {author} {\bibfnamefont {J.}~\bibnamefont {Preskill}},\ }\bibfield  {title} {\bibinfo {title} {Topological quantum memory},\ }\href {https://doi.org/https://doi.org/10.1063/1.1499754} {\bibfield  {journal} {\bibinfo  {journal} {J. Math. Phys.}\ }\textbf {\bibinfo {volume} {43}},\ \bibinfo {pages} {4452} (\bibinfo {year} {2002})}\BibitemShut {NoStop}%
\bibitem [{\citenamefont {Higgott}\ and\ \citenamefont {Gidney}(2025)}]{higgott2025sparse}%
  \BibitemOpen
  \bibfield  {author} {\bibinfo {author} {\bibfnamefont {O.}~\bibnamefont {Higgott}}\ and\ \bibinfo {author} {\bibfnamefont {C.}~\bibnamefont {Gidney}},\ }\bibfield  {title} {\bibinfo {title} {Sparse blossom: correcting a million errors per core second with minimum-weight matching},\ }\href {https://doi.org/https://doi.org/10.22331/q-2025-01-20-1600} {\bibfield  {journal} {\bibinfo  {journal} {Quantum}\ }\textbf {\bibinfo {volume} {9}},\ \bibinfo {pages} {1600} (\bibinfo {year} {2025})}\BibitemShut {NoStop}%
\bibitem [{\citenamefont {Panteleev}\ and\ \citenamefont {Kalachev}(2021)}]{panteleev2021degenerate}%
  \BibitemOpen
  \bibfield  {author} {\bibinfo {author} {\bibfnamefont {P.}~\bibnamefont {Panteleev}}\ and\ \bibinfo {author} {\bibfnamefont {G.}~\bibnamefont {Kalachev}},\ }\bibfield  {title} {\bibinfo {title} {Degenerate quantum ldpc codes with good finite length performance},\ }\href {https://doi.org/https://doi.org/10.22331/q-2021-11-22-585} {\bibfield  {journal} {\bibinfo  {journal} {Quantum}\ }\textbf {\bibinfo {volume} {5}},\ \bibinfo {pages} {585} (\bibinfo {year} {2021})}\BibitemShut {NoStop}%
\bibitem [{\citenamefont {Roffe}\ \emph {et~al.}(2020)\citenamefont {Roffe}, \citenamefont {White}, \citenamefont {Burton},\ and\ \citenamefont {Campbell}}]{roffe2020decoding}%
  \BibitemOpen
  \bibfield  {author} {\bibinfo {author} {\bibfnamefont {J.}~\bibnamefont {Roffe}}, \bibinfo {author} {\bibfnamefont {D.~R.}\ \bibnamefont {White}}, \bibinfo {author} {\bibfnamefont {S.}~\bibnamefont {Burton}},\ and\ \bibinfo {author} {\bibfnamefont {E.}~\bibnamefont {Campbell}},\ }\bibfield  {title} {\bibinfo {title} {Decoding across the quantum low-density parity-check code landscape},\ }\href {https://doi.org/10.1103/PhysRevResearch.2.043423} {\bibfield  {journal} {\bibinfo  {journal} {Phys. Rev. Res.}\ }\textbf {\bibinfo {volume} {2}},\ \bibinfo {pages} {043423} (\bibinfo {year} {2020})}\BibitemShut {NoStop}%
\bibitem [{\citenamefont {M{\"u}ller}\ \emph {et~al.}(2025)\citenamefont {M{\"u}ller}, \citenamefont {Alexander}, \citenamefont {Beverland}, \citenamefont {B{\"u}hler}, \citenamefont {Johnson}, \citenamefont {Maurer},\ and\ \citenamefont {Vandeth}}]{muller2025improved}%
  \BibitemOpen
  \bibfield  {author} {\bibinfo {author} {\bibfnamefont {T.}~\bibnamefont {M{\"u}ller}}, \bibinfo {author} {\bibfnamefont {T.}~\bibnamefont {Alexander}}, \bibinfo {author} {\bibfnamefont {M.~E.}\ \bibnamefont {Beverland}}, \bibinfo {author} {\bibfnamefont {M.}~\bibnamefont {B{\"u}hler}}, \bibinfo {author} {\bibfnamefont {B.~R.}\ \bibnamefont {Johnson}}, \bibinfo {author} {\bibfnamefont {T.}~\bibnamefont {Maurer}},\ and\ \bibinfo {author} {\bibfnamefont {D.}~\bibnamefont {Vandeth}},\ }\bibfield  {title} {\bibinfo {title} {Improved belief propagation is sufficient for real-time decoding of quantum memory},\ }\href {https://doi.org/10.48550/arXiv.2506.01779} {\bibfield  {journal} {\bibinfo  {journal} {arXiv:2506.01779}\ } (\bibinfo {year} {2025})}\BibitemShut {NoStop}%
\bibitem [{\citenamefont {Tsubouchi}\ \emph {et~al.}(2025{\natexlab{b}})\citenamefont {Tsubouchi}, \citenamefont {Yamasaki},\ and\ \citenamefont {Tamiya}}]{tsubouchi2025degeneracy}%
  \BibitemOpen
  \bibfield  {author} {\bibinfo {author} {\bibfnamefont {K.}~\bibnamefont {Tsubouchi}}, \bibinfo {author} {\bibfnamefont {H.}~\bibnamefont {Yamasaki}},\ and\ \bibinfo {author} {\bibfnamefont {S.}~\bibnamefont {Tamiya}},\ }\bibfield  {title} {\bibinfo {title} {Degeneracy cutting: A local and efficient post-processing for belief propagation decoding of quantum low-density parity-check codes},\ }\href {https://doi.org/10.48550/arXiv.2510.08695} {\bibfield  {journal} {\bibinfo  {journal} {arXiv:2510.08695}\ } (\bibinfo {year} {2025}{\natexlab{b}})}\BibitemShut {NoStop}%
\bibitem [{\citenamefont {Hutter}\ \emph {et~al.}(2014)\citenamefont {Hutter}, \citenamefont {Wootton},\ and\ \citenamefont {Loss}}]{hutter2014efficient}%
  \BibitemOpen
  \bibfield  {author} {\bibinfo {author} {\bibfnamefont {A.}~\bibnamefont {Hutter}}, \bibinfo {author} {\bibfnamefont {J.~R.}\ \bibnamefont {Wootton}},\ and\ \bibinfo {author} {\bibfnamefont {D.}~\bibnamefont {Loss}},\ }\bibfield  {title} {\bibinfo {title} {Efficient markov chain monte carlo algorithm for the surface code},\ }\href {https://doi.org/10.1103/PhysRevA.89.022326} {\bibfield  {journal} {\bibinfo  {journal} {Phys. Rev. A}\ }\textbf {\bibinfo {volume} {89}},\ \bibinfo {pages} {022326} (\bibinfo {year} {2014})}\BibitemShut {NoStop}%
\bibitem [{\citenamefont {Gidney}(2021)}]{gidney2021stim}%
  \BibitemOpen
  \bibfield  {author} {\bibinfo {author} {\bibfnamefont {C.}~\bibnamefont {Gidney}},\ }\bibfield  {title} {\bibinfo {title} {Stim: a fast stabilizer circuit simulator},\ }\href {https://doi.org/10.22331/q-2021-07-06-497} {\bibfield  {journal} {\bibinfo  {journal} {{Quantum}}\ }\textbf {\bibinfo {volume} {5}},\ \bibinfo {pages} {497} (\bibinfo {year} {2021})}\BibitemShut {NoStop}%
\bibitem [{\citenamefont {Kueng}\ and\ \citenamefont {Gross}(2015)}]{kueng2015qubit}%
  \BibitemOpen
  \bibfield  {author} {\bibinfo {author} {\bibfnamefont {R.}~\bibnamefont {Kueng}}\ and\ \bibinfo {author} {\bibfnamefont {D.}~\bibnamefont {Gross}},\ }\bibfield  {title} {\bibinfo {title} {Qubit stabilizer states are complex projective 3-designs},\ }\href@noop {} {\bibfield  {journal} {\bibinfo  {journal} {arXiv preprint arXiv:1510.02767}\ } (\bibinfo {year} {2015})}\BibitemShut {NoStop}%
\bibitem [{\citenamefont {Zhou}\ \emph {et~al.}(2018)\citenamefont {Zhou}, \citenamefont {Zhang}, \citenamefont {Preskill},\ and\ \citenamefont {Jiang}}]{zhou2018achieving}%
  \BibitemOpen
  \bibfield  {author} {\bibinfo {author} {\bibfnamefont {S.}~\bibnamefont {Zhou}}, \bibinfo {author} {\bibfnamefont {M.}~\bibnamefont {Zhang}}, \bibinfo {author} {\bibfnamefont {J.}~\bibnamefont {Preskill}},\ and\ \bibinfo {author} {\bibfnamefont {L.}~\bibnamefont {Jiang}},\ }\bibfield  {title} {\bibinfo {title} {Achieving the heisenberg limit in quantum metrology using quantum error correction},\ }\href {https://doi.org/https://doi.org/10.1038/s41467-017-02510-3} {\bibfield  {journal} {\bibinfo  {journal} {Nature communications}\ }\textbf {\bibinfo {volume} {9}},\ \bibinfo {pages} {78} (\bibinfo {year} {2018})}\BibitemShut {NoStop}%
\bibitem [{\citenamefont {Zhou}\ and\ \citenamefont {Jiang}(2021)}]{zhou2021asymptotic}%
  \BibitemOpen
  \bibfield  {author} {\bibinfo {author} {\bibfnamefont {S.}~\bibnamefont {Zhou}}\ and\ \bibinfo {author} {\bibfnamefont {L.}~\bibnamefont {Jiang}},\ }\bibfield  {title} {\bibinfo {title} {Asymptotic theory of quantum channel estimation},\ }\href {https://doi.org/10.1103/PRXQuantum.2.010343} {\bibfield  {journal} {\bibinfo  {journal} {PRX Quantum}\ }\textbf {\bibinfo {volume} {2}},\ \bibinfo {pages} {010343} (\bibinfo {year} {2021})}\BibitemShut {NoStop}%
\bibitem [{\citenamefont {Jeon}\ and\ \citenamefont {Cai}(2026)}]{jeon2026quantum}%
  \BibitemOpen
  \bibfield  {author} {\bibinfo {author} {\bibfnamefont {M.}~\bibnamefont {Jeon}}\ and\ \bibinfo {author} {\bibfnamefont {Z.}~\bibnamefont {Cai}},\ }\bibfield  {title} {\bibinfo {title} {Quantum error correction on error-mitigated physical qubits},\ }\href {https://doi.org/10.48550/arXiv.2601.18384} {\bibfield  {journal} {\bibinfo  {journal} {arXiv:2601.18384}\ } (\bibinfo {year} {2026})}\BibitemShut {NoStop}%
\bibitem [{\citenamefont {Mele}(2024)}]{mele2024introduction}%
  \BibitemOpen
  \bibfield  {author} {\bibinfo {author} {\bibfnamefont {A.~A.}\ \bibnamefont {Mele}},\ }\bibfield  {title} {\bibinfo {title} {Introduction to haar measure tools in quantum information: A beginner's tutorial},\ }\href {https://doi.org/https://doi.org/10.22331/q-2024-05-08-1340} {\bibfield  {journal} {\bibinfo  {journal} {Quantum}\ }\textbf {\bibinfo {volume} {8}},\ \bibinfo {pages} {1340} (\bibinfo {year} {2024})}\BibitemShut {NoStop}%
\end{thebibliography}%
\end{document}